\newcommand{\cmark}{\ding{51}}%
\newcommand{\xmark}{\ding{55}}%
\def\E{ {\mathcal E} }
\def\P{ {\mathcal P} }
\def\U{ {\mathcal U} }
\def\N{ {\mathcal N} }
\def\C{ {\mathcal C} }
\def\O{ {\mathcal O} }
\def\P{ {\mathcal P} }
\def\S{ {\mathcal S} }
\def\T{ {\mathcal T} }
\def\V{ {\mathcal V} }
\def\I{ {\mathcal I} }
\def\Tr{{\rm{tr}}}
\def\tr{ \mbox{tr} }
\def\>{\rangle}
\def\<{\langle}
\def\hc{^{\dagger}}
\renewcommand{\emph}{\textit}
\newcommand{\iden}{\mathbb{I}}
\newtheorem{theorem}{Theorem}
\newtheorem{lemma}[theorem]{Lemma}
\newtheorem{definition}[theorem]{Definition}
\begin{document}

\title{Spectral analysis for noise diagnostics and filter-based digital error mitigation}

\author{Enrico Fontana$^{1,2,3}$}
\author{Ivan Rungger$^3$}
\author{Ross Duncan$^{1,2,4}$}
\author{Cristina C\^{i}rstoiu$^{1}$}

\affiliation{$^1$Cambridge Quantum Computing Ltd, Terrington House, 13-15
  Hills Road, Cambridge CB2 1NL, UK}
\affiliation{$^2$Department of Computer and Information Sciences, University of Strathclyde, 26 Richmond Street, Glasgow G1 1XH, UK}
\affiliation{$^3$National Physical Laboratory, Hampton Road, Teddington TW11 0LW, UK}
\affiliation{$^4$Department of Physics and Astronomy, UCL, Gower Street, London, WC1E 6BT, UK}

\begin{abstract}
        We investigate the effects of noise on parameterised quantum
        circuits using spectral analysis and classical signal processing tools.
        For different noise models, we
        quantify the additional, higher frequency modes in
        the output signal caused by device errors.
        We show that filtering these noise-induced
        modes effectively mitigates device errors. When
        combined with existing methods, this yields an improved
        reconstruction of the noiseless variational landscape.
        Moreover, we describe the classical and quantum resource
        requirements for these techniques and test their effectiveness
        for application motivated circuits on quantum hardware.
\end{abstract}

\maketitle

\section{Introduction}
\label{sec:introduction}

Characterisation and verification of quantum computations becomes increasingly important as quantum hardware scale and facilitate noisy intermediate scale quantum (NISQ) applications for larger problem sizes. In most cases, there is a trade-off between the amount of information obtained about the errors and resource requirements of the protocols that acquire it~\cite{eisert2020quantum}. 

Much progress has been made in quantifying and reducing errors of individual components by using methods specific to modelled noise sources in a given physical implementation (e.g spectroscopy, pulse sequences or dynamical decoupling) and by developing hardware-agnostic tools for noise characterisation (e.g tomographic techniques \cite{merkel2013self, blume2013robust, flammia2020efficient, bonet2020nearly, cotler2020quantum}, randomised benchmarking protocols \cite{emerson2005scalable, knill2008randomized, cross2016scalable, carignan2015characterizing}).

For larger circuits, process tomography quickly becomes prohibitively expensive, so one must resort to holistic benchmarks \cite{boixo2018characterizing,cross2019validating, mills2020application}.  Typically these produce a single metric to characterise the performance of a quantum device at running application-motivated circuits of specified sizes \cite{proctor2022measuring}.  However, the question of how to quantify the effect of noise for given circuit structures remains.  Tools that allow to gain this additional information to an extent that can inform strategies for noise suppression at the circuit level will be needed to push NISQ applications towards reaching regimes where computational quantum advantage may be possible  \cite{kim2021scalable}. 

Near-term (variational) algorithms typically involve circuits with a set of variable gate parameters, but otherwise a fixed structure. Although variational methods grant partial resilience to noise \cite{fontana2021evaluating, sharma2020noise}, accumulation of errors still impacts their ability to generate expectation values that are competitive with classical methods \cite{stilck2021limitations}. Furthermore, noise can also impact the trainability of variational algorithms through the presence of barren plateaus in the optimisation landscape for increasing problem sizes \cite{wang2021noise}. 

Error mitigation methods aim to reduce the approximation error in
the noisy estimation of expectation values and are generally designed for
near-term applications.
While they typically require a sampling overhead that scales
exponentially (with depth)
\cite{takagi2021optimal,takagi2021fundamental}, error mitigation can
be achieved in regimes beyond
classical simulability \cite{kim2021scalable,lostaglio2021error,piveteau2021error}. 

Some of the techniques involve boosting errors and extrapolation to zero noise levels \cite{temme2017error, ying2017efficient},  using quasiprobabilities to decompose noisy gates \cite{temme2017error, endo2018practical}, machine learning approaches \cite{strikis2020learning, czarnik2020error} and combining methods \cite{cai2021multi,lowe2021unified}. However, in regimes with low quantum volume \cite{cross2019validating}, the general performance of different error mitigation methods can exhibit a great deal of variability on physical hardware \cite{cirstoiu2022volumetric}.
Combining noise suppression at circuit and gate level with error mitigation and knowledge of the specific circuit or application may be a way to circumvent this  \cite{kim2021scalable}. 

In this work we investigate the effects of noise on parameterised
quantum circuits with a fixed circuit structure by means of signal
processing methods. We employ harmonic analysis at the level of
quantum channels, along with decompositions of unitaries into linear
combinations of Clifford operations. This allows us to relate
expectation values of parameterised circuits to trigonometric
polynomials. Previous works also highlighted this relationship for specific
applications \cite{schuld2021effect, vidal2018calculus, gil2020input, nakanishi2020sequential, parrish2019jacobi, ostaszewski2021structure, wierichs2022general}, and
our contribution expands and generalises the connection along several
directions. We characterise both the frequency
support, in terms of lattice structures generated by parameters of the
circuit, and the coefficients in terms of process modes that are
derived through channel decompositions into Clifford paths. These
observations allow us to treat noisy expected values as output signals
that carry information about how errors affect the specified circuit.

We analyse the effects of noise in the frequency domain by taking
(fast) Fourier transforms with respect to the variational
parameters.
In particular, for real hardware implementations the
presence of frequencies outside the support of the noiseless input
signal indicates not only the gate dependent errors (due to
calibration or control imprecision) but also how these build up
through the circuit.
Therefore, these additional modes also pick up information about the
correlated errors induced by the circuit structure.
Furthermore, we explore the effect of common noise models: for instance,
incoherent noise leads to contractions of Fourier
coefficients. 

Crucially, for a target set of parametrised circuits and observables the harmonic analysis infers the support of the frequency spectrum. This additional information certifies that the presence of modes outside this support is attributed to noise alone, a connection that allows us to design new error mitigation methods that target these isolated noise components.
These spectral filtering techniques can be shown to work exactly (below a fixed noise threshold) for certain classes of classically simulable circuits. We also demonstrate their effectiveness for practically relevant circuit structures used for quantum chemistry and combinatorial optimisation problems, by combining them with complementary error mitigation methods targeting decoherence, and demonstrate improved estimators for expectation values over the whole variational landscape.

\begin{figure}[h!]
        \includegraphics[width=0.4\textwidth]{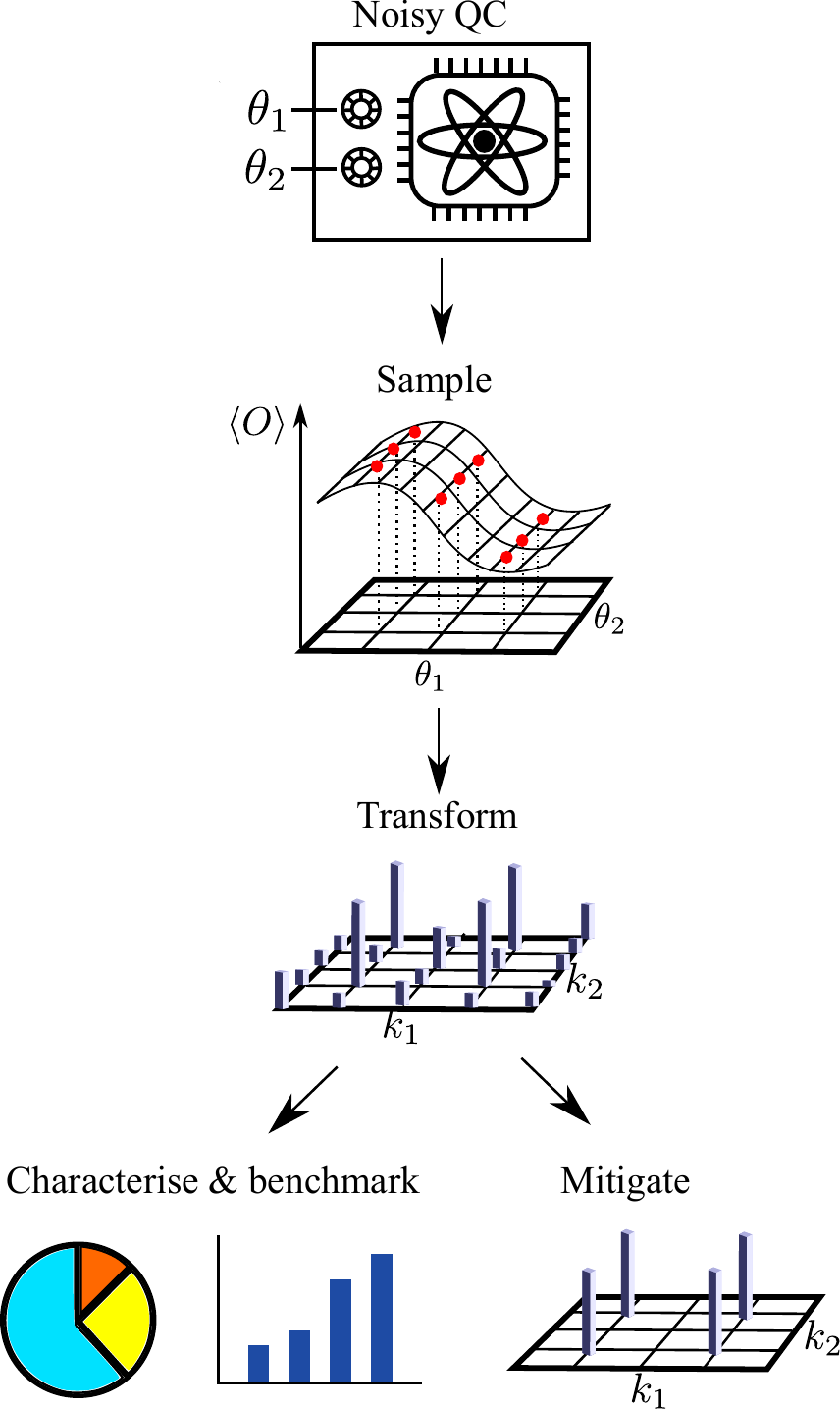}
        \caption{Schematic overview of the techniques presented in the paper. The output of a measurement operator on noisy parameterised quantum circuit is sampled at different parameter levels; the discrete samples are transformed (e.g. with FFT) to deduce the frequency spectrum of the noisy landscape; the spectral information is then used to characterise and benchmark the machine or to reduce the noise to give a mitigated cost function landscape.}
        \label{fig:cartoon}
\end{figure}

\section{Expectation values of parameterised states as trigonometric polynomials}
\label{sec:Decomposition}

Several authors have observed the simple relation between expectation values of
observables with respect to parameterised circuits and trigonometric
functions, or finite Fourier series \cite{schuld2021effect, vidal2018calculus, gil2020input, nakanishi2020sequential, parrish2019jacobi, ostaszewski2021structure, wierichs2022general}. In particular
Schuld et al.\!~\cite{schuld2021effect} give a general analysis
that is used to explore the expressivity of ans\"atze for
data-encoding.
The particular circuit involved repeated applications of the same parameterised unitary of the form $V:=V_0 U(\theta)V_1 U(\theta) ...  U(\theta) V_m$ with one-parameter unitaries  $U(\theta)$ generated by a Hermitian operator $H$ acting on the Hilbert space corresponding to $n$ qubits.

The key idea is that if $H$ has an eigenvalue decomposition
$H= W\sum_{z=0}^{2^{n}-1} \lambda_z |z\>\<z| W\hc$
with $|z\>$ computational basis states, then
$U(\theta) =We^{i \theta \sum_z \lambda_z |z\>\<z|}W\hc$,
and therefore the expectation value of a target observable $O$ is given by 
\begin{align}
  \Tr(OU(\theta)\rho U(\theta)\hc)
  &= \Tr(\tilde{O}e^{i\theta \sum_z\lambda_z |z\>\<z|}
     \tilde{\rho} e^{-i\theta \sum_z \lambda_z |z\>\<z|} ) \nonumber\\
  &= \sum_{z,y} e^{i\theta(\lambda_z-\lambda_y)}
      \tilde{O}_{yz}\tilde{\rho}_{zy}
 \label{eq:eigenvalue-expansion}
\end{align}
where $\tilde{O}_{yz} := (WOW\hc)_{yz}$ and $\tilde{\rho}_{yz} :=
(W\rho W\hc)_{yz}$ are the matrix entries of the observable and input
state in a rotated frame. 

We extend these results to generalised circuits with $m$ independent
parameters. Our approach is conceptually different, as we decompose
the target unitary operation into a linear combination of operations
that separates the dependency on the parametrisation from the specific
circuit structures involved. This is related to sum-over-Clifford
paths approaches to classical simulation of circuits
\cite{bravyi2019simulation, huang2021feynman}.  This channel-based
perspective is particularly well-suited to analyse the effects
of noise.

We consider unitaries that are parameterised by a set of angles
$\boldsymbol{\theta} = [\theta_1, ..., \theta_m]$
so that they take the general form 
\begin{align}
  U(\boldsymbol{\theta})
  = V_0 U_1(\theta_1) V_1 U_2(\theta_2) V_2 \, ...\, U_m(\theta_m)V_m.
  \label{eq:unitary_circuit}
\end{align}
where each of the one-parameter unitaries are generated by hermitian
operators $U(\theta_j) = e^{i\theta_j H_j }$.  

An $m$-variate trigonometric polynomial with degree at most $d$ has the form
\begin{equation}
        \label{eq:trigPoly}
        f(\boldsymbol{\theta}) = \sum_{k_1, k_2,..., k_m =-d}^{d} c_{\bf{k}} e^{i \mathbf{k}\cdot \boldsymbol{\theta}}.
\end{equation}
where $\mathbf{k} = (k_1, ..., k_m)$ and $k_i\in \mathbb{Z}$, $\theta_i \in [0, 2\pi]$ and ${\bf{k}}
\cdot \boldsymbol{\theta} = \sum_{i=1}^m k_i \theta_i$.
Furthermore, for real functions $f$ it holds that $c_{-\bf{k}} =
c_{\bf{k}}^{*}$. We will denote by $\mathcal{T}_{m,d}$ the set of all
such real trigonometric functions on $m$ variables with degree at most
$d$; $\bf{k}$ is called a \emph{path}.

\subsection{Illustrative example: Pauli gadgets}
\label{sec:Example}
Consider first a single-qubit parameterised $Z$ rotation gate
$Z(\theta_j) = e^{-i\theta_j Z/2}$. We denote the corresponding
unitary channel by ${\cal{Z}}(\theta_j) (\cdot): = Z(\theta_j)
(\cdot)Z(\theta_j)\hc$. Recall that the set of Clifford unitaries on
$n$ qubits forms a group generated by the single and two-qubit gates
$\{ \rm{H}, \rm{S}, \rm{CNOT} \}$. ${\cal{Z}}(\theta)_j$ decomposes as a linear
combination of Clifford unitary channels given by
${\cal{Z}}(\theta_j)(\rho) = p_{\cal{I}}(\theta_j) \rho +
p_{\cal{Z}}(\theta_j) Z\rho Z\hc + p_{\cal{S}}(\theta_j) S\rho S\hc$,
where each coefficient $p_{\cal{I}}, p_{\cal{Z}}, p_{\cal{Z}} \in
\T_{1,1}$ and is defined in Appendix
\ref{appendix:Cliff}. Alternatively, we can write
\begin{equation}
  \mathcal{Z}(\theta_j)
  = \mathcal{C}_0  + e^{i\theta_j} \mathcal{C}_{1} + e^{-i\theta_j}\mathcal{C}_{-1}
  \label{eqn:zdec}
\end{equation}
with $\mathcal{C}_0, \mathcal{C}_{\pm 1}$ linear combinations of
Clifford channels defined in Appendix \ref{appendix:Cliff} and
independent of the parameter $\theta_j$. Following our previous work~\cite{cirstoiu2017global}, we call these channels \textit{Clifford
  process modes}.  The above directly implies that the corresponding
expectation values satisfy
$\Tr( O \mathcal{Z}(\theta_j) (\rho_0)) \in \mathcal{T}_{1,1}$ for any
observable $O$ and initial state $\rho_0$.

This generalises to circuits $U(\boldsymbol{\theta}) $ as in Eq.~\eqref{eq:unitary_circuit} with independent parameters where each unitary $U(\theta_j)$ has a circuit implementation using a single parameterised one-qubit $Z(\theta_j)$ gate. For example, this occurs when each of the one-parameter unitaries $U_{j}(\theta_j) = e^{-iP_j\theta_j/2}$ are Pauli gadgets with $P_j$ an $n$-qubit Pauli operator. Then, by linearity and since the $m$ parameters are independent, it follows that given $\rho(\boldsymbol{\theta }): = U(\boldsymbol{\theta}) \rho_0  U(\boldsymbol{\theta})\hc$ the expectation value of any observable
\begin{equation}
        \Tr(O \rho(\boldsymbol{\theta }) ) \in \mathcal{T}_{m,1}
\end{equation} is a multivariate trigonometric polynomial in $\boldsymbol{\theta}$ with degree at most 1.       

If, in addition, the unitaries $V_i$ are all Clifford unitaries, and the input state $\rho_0$ is a stabiliser state, then the coefficients of the trigonometric polynomials $\Tr(\rho(\boldsymbol{\theta})P)$ where $P$ is a Pauli operator simplify to 
\begin{equation}
        c_{\bf{k} }  \in \{\pm 1/2^{||{\bf{k}}||_1}, \pm i 1/2^{||{\bf{k}}||_1}, 0\} .
\end{equation} 
where ${\bf{k}} \in \{-1,0,1\}^{m}$.
Furthermore, the expansion of $\tr(\rho(\boldsymbol{\theta}) P)$ in terms of trigonometric functions becomes particularly simple:
\begin{equation}
        \tr(\rho(\boldsymbol{\theta}) P) = \sum_{\bf{k}}  c'_{\bf{k}} \phi_{\bf{k}} (\boldsymbol{\theta})
\end{equation}
with $c'_{\bf{k}} \in \{0, \pm 1\}$ and $\phi_{\bf{k}}(\boldsymbol{\theta}) = \phi_{k_1}(\theta_1) \phi_{k_2}(\theta_2) ... \phi_{k_m}(\theta_m)$ where $\phi_{0}(\theta)=1$, $\phi_{1}(\theta) = \cos{\theta}$ and $\phi_{-1} (\theta) =\sin{\theta}$.
The proof of this result is given in Appendix~\ref{app:trig}.

\subsection{Process mode decomposition with Clifford paths}
\label{sec:proc-mode-decomp}

\label{sec:processmode}
In general situations, it may be that the circuit implementing $U(\theta_j)$ consists of several $Z$-rotations by (non-integer) real-valued multiples of the independent parameter. In such a case the resulting expected values with respect to $\boldsymbol{\theta}$ will be described by \emph{generalised m-variate trigonometric polynomials}~\cite{shestopalov2020trigonometric} of the form 
\begin{equation}
        f(\boldsymbol{\theta }) = \sum_{k_1,..., k_m = -d}^{d} c_{\bf{k}} e^{i\boldsymbol{\omega}_{\mathbf{k}} \cdot \boldsymbol{\theta}} \in \mathcal{T}_{m, d}
        \label{eq:FourierSeries}
\end{equation}
where $\mathbf{k} = (k_1,...,k_m)$ with $k_i\in \mathbb{Z}_d$ and $\boldsymbol{\omega}_{\bf{k}} = ((\omega_{\bf{k}})_1, ...., (\omega_{\bf{k}})_m )$ is a real vector of frequencies. Here, $m$ labels the number of independent parameters $\theta_i$ and each $k_i$ is an index counter of the frequencies corresponding to $\theta_i$.  The maximal number of (distinct) frequencies in a parameter is labelled by $d$. Note that generally this may take a different value for each parameter, but  we use the maximal value for simiplicity of exposition.

In order to determine the allowed values for the frequency vector $\boldsymbol{\omega}_{\bf{k}}$ and represent the corresponding coefficients $c_{\boldsymbol{k}}$ in terms of operations derived from the circuit structure we decompose the target unitary channel $\mathcal{U}(\boldsymbol{\theta})$ into a linear combination of operations separating the parameter dependency. 

\begin{definition}
A unitary channel $\mathcal{U}(\boldsymbol{\theta})= U(\boldsymbol{\theta}) (\cdot) U(\boldsymbol{\theta})\hc$ can be expressed in terms of what we call \emph{a process mode decomposition with Clifford paths} 
\begin{equation}
        \U(\boldsymbol{\theta}) =\sum_{\bf{k} =(\bf{k}_1,...,\bf{k}_m)} e^{i\boldsymbol{\omega}_{\bf{k}}\cdot \boldsymbol{\theta}} \mathcal{V}_0 \circ \mathcal{U}_{{\bf{k}}_1}\circ \V_{1}\circ ... \circ \, \mathcal{U}_{{\bf{k}}_m}\mathcal{V}_m,
\end{equation}
where each $\mathcal{U}_{{\bf{k}}_j}$ is an operation derived from the
circuit $U_{j}(\theta_j)$ by replacing each parametrised $Z$-rotation
with one of its constituent Clifford process modes, per
Eq.~\eqref{eqn:zdec}. 
\end{definition}

The length of each vector ${\bf{k}}_j$ will be given by the number of parametrised gates in the circuit, with entries taking values $\{0, \pm 1\}$ corresponding to all particular replacements. Note that the $\mathcal{U}_{{\bf{k}}_j}$ need not be completely positive trace preserving maps. The circuit implementation of $U_{j} (\theta_j)$ may also be given in terms of other multi-qubit parametrised gates, or in terms of the Hamiltonians generating it. In these cases, the process modes are constructed by generalising Eq.~\eqref{eqn:zdec}, which we describe in Appendix \ref{appendix:Cliff} in more detail. However, the frequency spectrum \[
\mathcal{S} := \{\boldsymbol{\omega_k} :  \ \forall  \, {\bf{k}}\in \mathbb{Z}_d^{m}  \}\] will be independent of any particular description of $U(\boldsymbol{\theta})$.

\subsection{Frequency spectrum for parametrised circuits}
\label{Sec:CorrelatedAngles}
The decomposition in Sec.~\ref{sec:processmode} directly gives the
trigonometric polynomial form taken by expectation values; to
emphasise their dependency on the measured observable, we employ the
notation $c_{\bf{k}}(O)$ for the coefficients.
\begin{theorem}
  \label{thm:1}
  Consider a parameterised quantum state
  $\rho(\boldsymbol{\theta})= U(\boldsymbol{\theta})\rho_0
  U(\boldsymbol{\theta})\hc$ acting on $n$ qubits with $m$ independent
  parameters
  $U(\boldsymbol{\theta}) = V_0 U(\theta_1)V_1 ... U(\theta_m)V_m$ and
  $U(\theta_j) = e^{i\theta_j H_j}$ are one-parameter unitaries
  generated by Hermitian operators $H_j$. The expectation value of any
  Hermitian operator $O$ with respect to $\rho(\boldsymbol{\theta})$
  is a generalised $m$-variate trigonometric polynomial of bounded
  degree
  \begin{equation}
    \<O\>_{\rho(\boldsymbol{\theta})}
    = \sum_{\bf{k}\in \boldsymbol \Lambda}
      c_{\bf{k}}(O) e^{i \boldsymbol{\omega}_{\bf{k}} \cdot \boldsymbol{\theta}}
    \label{eq:thm1}
  \end{equation}
  where $c_{-\bf{k}} = c_{\bf{k}}^{*}$ and the lattice
  $\boldsymbol{\Lambda} \subset \mathbb{Z}_3^{r_1}\times ... \times
  \mathbb{Z}_3^{r_m}$   where each $r_j$ is the rank of the
  Walsh-Hadamard transform of the eigenvalue vector of
  $H_j$\footnote{%
    The $2^m$-dim Walsh-Hadamard transform matrix can be
    defined as $[H_m]_{ij} := \frac{1}{2^{m/2}}(-1)^{i\cdot j}$ where
    the $i\cdot j$ indicates the bitwise dot product between the
    binary representations of the indices.
  }.  Furthermore, the frequency vector
  $\boldsymbol{\omega}_{\bf{k}} = ( (\omega_{\bf{k}})_1,\ldots,
  (\omega_{\bf{k}})_m)$ ranges over a discrete set with bounded degree
  $\sup_{\bf{k} \in \boldsymbol \Lambda}|(\omega_{\bf{k}})_j |\leq
  2||H_j||_{\infty}$.
\end{theorem}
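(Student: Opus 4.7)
The plan is to combine the eigenvalue-based trace expansion already exhibited in Eq.~\eqref{eq:eigenvalue-expansion} with the process-mode decomposition of Sec.~\ref{sec:processmode}, applied one generator at a time. First I would diagonalise each $H_j = W_j D_j W_j^\dagger$ with $D_j = \sum_z \lambda_z^{(j)}|z\rangle\langle z|$, rewrite $U(\theta_j) = W_j e^{i\theta_j D_j}W_j^\dagger$, and absorb the basis changes $W_j, W_j^\dagger$ into the adjacent fixed unitaries $V_{j-1},V_j$. Inserting resolutions of the identity between layers turns $\langle O\rangle_{\rho(\boldsymbol{\theta})}$ into a sum over computational basis indices $z_j,y_j$ of products of matrix elements of the (redefined) $V$'s, of $O$, and of $\rho_0$, each term carrying the exponential $e^{i\theta_j(\lambda_{z_j}^{(j)}-\lambda_{y_j}^{(j)})}$. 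This already yields an expansion of the form \eqref{eq:thm1} with frequencies $(\omega_{\mathbf{k}})_j = \lambda_{z_j}^{(j)}-\lambda_{y_j}^{(j)}$.

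From this form, two of the three claims come out directly. The degree bound $|(\omega_{\mathbf{k}})_j|\leq 2\|H_j\|_\infty$ follows from the triangle inequality applied to a difference of eigenvalues of $H_j$, and the reality condition $c_{-\mathbf{k}}=c_{\mathbf{k}}^*$ follows from Hermiticity of $O$ and $\rho_0$ by swapping the roles of $z_j$ and $y_j$ in each layer and conjugating the matrix elements.

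The substantial step is bounding the index set $\boldsymbol{\Lambda}$. For this I would invoke the Walsh--Hadamard transform of the eigenvalue vector $\lambda^{(j)}=(\lambda_z^{(j)})_z$, which expands $D_j$ in the basis of diagonal Pauli strings $Z^{s}$ indexed by $n$-bit masks $s$. By the definition of the rank of this transform, exactly $r_j$ of these coefficients are nonzero (the identity coefficient only contributes a global phase and can be absorbed), so $D_j = \sum_{a=1}^{r_j}\alpha_a^{(j)} Z^{s_a}$ with the $Z^{s_a}$ pairwise commuting. Hence $e^{i\theta_j D_j}$ factorises as a product of $r_j$ commuting Pauli rotations all sharing the parameter $\theta_j$.

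Each such Pauli rotation, after conjugation by a fixed Clifford sending $Z^{s_a}$ to a single-qubit $Z$, admits the three-term Clifford process-mode decomposition of Eq.~\eqref{eqn:zdec}. Each commuting factor therefore contributes an index from $\{-1,0,+1\}$ to the exponent of $\theta_j$, and collecting these $r_j$ indices gives one sub-vector in $\mathbb{Z}_3^{r_j}$ per parameter; taking the product over $j=1,\ldots,m$ embeds $\boldsymbol{\Lambda}$ in $\mathbb{Z}_3^{r_1}\times\cdots\times\mathbb{Z}_3^{r_m}$ as claimed, with $(\omega_{\mathbf{k}})_j = \sum_{a=1}^{r_j} k_a^{(j)}\, 2\alpha_a^{(j)}$. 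The main obstacle will be the bookkeeping for this last step: one must verify that the Walsh--Hadamard rank is the correct count by checking that the identity component truly contributes nothing to $\boldsymbol{\Lambda}$ and that the $r_j$ commuting generators are genuinely independent, so that the three-mode expansion indexes distinct tuples rather than redundant ones. Once this is verified, both the lattice inclusion and the frequency bound fall out of the same expansion.
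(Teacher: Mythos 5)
Your proposal matches the paper's proof in its essential mechanism: diagonalise each $H_j$, expand the diagonal generator in $Z$-type Pauli strings via the Walsh--Hadamard transform of the eigenvalue vector, factor $e^{i\theta_j D_j}$ into $r_j$ commuting phase gadgets, and apply the three-term Clifford process-mode decomposition to each gadget, which is exactly how the paper obtains $\boldsymbol{\Lambda}\subset \mathbb{Z}_3^{r_1}\times\cdots\times\mathbb{Z}_3^{r_m}$ with $(\omega_{\mathbf{k}})_j = -2\,\mathbf{a}^{(j)}\cdot\mathbf{k}_j$. Your route to the degree bound and reality condition via the eigenvalue-difference expansion of Eq.~\eqref{eq:eigenvalue-expansion} is a clean shortcut (the paper reaches the same point by noting selection rules that force surviving paths to satisfy $\omega_{\mathbf{k}}=\lambda_{\mathbf{z}}-\lambda_{\mathbf{z}'}$), and the redundancy you flag as the main obstacle is harmless because the theorem only asserts an inclusion of $\boldsymbol{\Lambda}$ in the product lattice.
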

A full proof of this result may be found in Appendix~\ref{sec:proof-thm-1} as well as its relationship to previous work that described the frequencies $(\boldsymbol{\omega_k})_j$ in terms of eigenvalue differences.

As an illustrative example, suppose
that $U_i(\theta_i)$ is given in circuit form where the only
$\theta_i$-dependent gates are  $Z(a_1\theta_i)$, ..., $Z(a_{r_i}
\theta_i)$ with fixed $a_1,..., a_{r_i}\in\mathbb{R}$. Then, the allowed
frequencies are $(\boldsymbol{\omega}_{\bf{k}})_{j}  = \mathbf{a}\cdot
{\mathbf{k}}_j$ for ${\mathbf{k}}_{j} \in \{0, \pm 1\}^{\times
  r_{i}}$ with $\mathbf{a}= (a_1,..., a_{r_i})$ and furthermore  $-
||\mathbf{a}||_1 \leq (\boldsymbol{\omega}_{\bf{k}})_{j}  \leq
||\mathbf{a}||_1$\footnote{
  For clarity of exposition we assumed $U(\boldsymbol{\theta})$ as in
  Eq.\eqref{eq:unitary_circuit}, however the result on the allowed
  frequency spectrum holds in the most general sense, where
  differently parametrised gates may be interlaced throughout the
  circuits e.g $U(\boldsymbol{\theta}) =VU_1(\theta_1) V_1
  U_2(\theta_2) V_2 U'(\theta_1)$
}.

Note that generally, the frequencies are not uniquely defined by a given path $\mathbf{k}$.
It may be the case that two different paths give $\omega_{\boldsymbol k} = \omega_{\boldsymbol k'}$, in which case the corresponding coefficient in the Fourier transform will be $c_{\boldsymbol k} + c_{\boldsymbol k'}$. Therefore in the Fourier-transform frequency spectra $\mathcal{S}$ the two paths will have indistinguishable contributions. 

This analysis involving process modes with Clifford paths may be
more suitable than the expressions in Eq.~\ref{eq:eigenvalue-expansion} to determine the allowed frequencies and coefficients in the expectation value in the following situations: when one does not have direct access to the eigenvalues;  when the parameterised unitary is given as a quantum circuit,
or when the eigenvalues have higher multiplicities where it provides a more compact form for the coefficients. Our focus is ultimately on analysing and mitigating the effects of quantum noise on expectation values of observables viewed as trigonometric polynomials, where such a quantum channel perspective is more suitable.

\section{Error diagnostics from Fourier spectra}
\label{sec:Models}

\subsection{Fourier spectra of quantum channels}
\label{sec:IIIA}

In the previous section we showed how a quantum channel approach can
be used to derive the Fourier spectra of parameterised unitary
circuits, via a process mode decomposition with Clifford paths. We now
drop the unitary requirement, and discuss Fourier spectra of general
quantum channels that may represent noisy implementations. The
previous decomposition of unitary operations into process modes can
also be seen as a consequence of the following general result.

\begin{lemma}[\cite{bennink2017unbiased}]
  \label{lemma:decompchannel}
  Any linear map $\chi$ on n qubits can be expressed as a linear
  combination of Clifford unitary channels $C_i$ and Pauli reset
  channels\footnote{%
    A \emph{Pauli reset channel} maps any input state to a
    specific fixed eigenstate of a Pauli operator.
  } $S_j$ 
  \begin{equation}
    \chi = \sum q_i C_i + \sum q'_j S_j
  \end{equation}
  for $q_i , q_i'\in \mathbb{R}$. Trace preservation is equivalent to
  $\sum_i q_i + \sum_j q_j'= 1$, and the $S_j$ terms are only required
  for non-unital channels\footnote{%
    Note that there is no simple rule to determine if the resulting
    linear map will be completely positive, and therefore physical.
  }.  The decomposition is not unique. 
\end{lemma}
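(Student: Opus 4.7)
The plan is to pass to the Pauli transfer matrix (PTM) representation, where every Hermiticity-preserving linear map $\chi$ on $n$ qubits corresponds to a real $4^n\times 4^n$ matrix $M_\chi$ with entries proportional to $\tr(\sigma_i\chi(\sigma_j))$ in the Pauli basis $\{\sigma_i\}_{i=0}^{4^n-1}$ (with $\sigma_0\propto I$). In this language the claim becomes a spanning statement, and the first task is to catalogue the structure of the two ingredient families. Any Clifford channel $\mathcal{C}$ maps each Pauli to a signed Pauli, so $M_\mathcal{C}$ is a signed permutation matrix whose $(0,0)$ entry is $1$ and whose other first row/column entries vanish. A Pauli reset channel $\mathcal{S}:\rho\mapsto\ketbra{\psi}{\psi}\tr(\rho)$ for a fixed Pauli eigenstate $\ket{\psi}$ has PTM supported entirely in its first column, equal to the (scaled) Pauli coefficient vector of $\ketbra{\psi}{\psi}$ with $[M_\mathcal{S}]_{00}=1$. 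Thus Cliffords are responsible for the non-identity Pauli block of $M_\chi$ and resets for the non-identity entries of its first column.

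The main step is to show that linear combinations of Cliffords span every matrix with trivial first row and column apart from the $(0,0)$ entry. For this I would combine two ingredients. First, the Pauli unitary channels $\mathcal{P}_k(\rho)=P_k\rho P_k^\dagger$ have diagonal PTMs with entries $(-1)^{\omega(k,l)}$ read off from the symplectic form $\omega$ on Pauli labels; the sign matrix $[H]_{kl}=(-1)^{\omega(k,l)}$ is Hadamard-type and hence invertible, so linear combinations of the $\mathcal{P}_k$ can realise any diagonal PTM on the non-identity block. Second, the Clifford group acts transitively on non-identity Paulis, so for every pair $(i,j)$ there exists a Clifford $\mathcal{C}_{ij}$ with $\mathcal{C}_{ij}(\sigma_j)=\pm\sigma_i$, i.e. $M_{\mathcal{C}_{ij}}$ has a $\pm 1$ entry at position $(i,j)$. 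The composition $\mathcal{C}_{ij}\circ\sum_k q_k\mathcal{P}_k$ has PTM whose $l$-th column equals $\sum_k q_k(-1)^{\omega(k,l)}$ times the $l$-th column of $M_{\mathcal{C}_{ij}}$; invertibility of $H$ then lets one pick the $q_k$ so that only the $j$-th column of $M_{\mathcal{C}_{ij}}$ survives, yielding $\pm E_{ij}$. Letting $(i,j)$ range over all pairs recovers the full elementary basis of the $(4^n-1)\times(4^n-1)$ non-identity block inside the Clifford span.

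The remaining first column is supplied by Pauli reset channels. Already for a single qubit, the six eigenstates of $X$, $Y$, $Z$ have density matrices $(I\pm X)/2$, $(I\pm Y)/2$, $(I\pm Z)/2$, whose Pauli coefficient vectors linearly span $\mathbb{R}^4$; tensor products give product stabiliser states whose Pauli vectors span $\mathbb{R}^{4^n}$, so linear combinations of the $\mathcal{S}_j$ realise any first-column pattern. Combining with the Clifford span yields the claimed decomposition $\chi=\sum_i q_i\mathcal{C}_i+\sum_j q'_j\mathcal{S}_j$ with real coefficients. Every $M_{\mathcal{C}_i}$ and $M_{\mathcal{S}_j}$ contributes $1$ to $[M]_{00}$ and $0$ to the rest of the first row, so trace preservation -- which in PTM form is $[M_\chi]_{0k}=\delta_{0k}$ -- collapses to the scalar identity $\sum_i q_i+\sum_j q'_j=1$, and unital maps (those whose PTM first column is already $(1,0,\ldots,0)^\top$) admit a pure-Clifford expansion so no $\mathcal{S}_j$ terms are required. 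The one non-routine step is the Clifford-span argument above, which couples the transitive action of the Clifford group on Paulis with the invertibility of the symplectic commutation table; everything else is direct bookkeeping at the PTM level.
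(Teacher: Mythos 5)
The paper does not actually prove this lemma: it is imported verbatim from Bennink et al.\ \cite{bennink2017unbiased}, so there is no in-paper argument to compare against. Your PTM spanning argument is a correct, self-contained proof and is a reasonable reconstruction of why the result holds. The key step --- composing a transitively-chosen Clifford $\mathcal{C}_{ij}$ with a linear combination $\sum_k q_k\mathcal{P}_k$ of Pauli channels and using invertibility of the commutation-sign matrix $[(-1)^{\omega(k,l)}]$ (a tensor power of a $4\times4$ Hadamard matrix) to isolate a single column --- does work, and it is worth noting that your choice $Hq=e_j$ with $j\neq 0$ automatically forces $\sum_k q_k=0$, so the $(0,0)$ entry of the resulting PTM vanishes and you really do land on $\pm E_{ij}$ exactly. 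Two small points deserve explicit mention. First, to cover the \emph{unital} case without resets you also need the corner matrix $E_{00}$ inside the Clifford span; this is supplied by the completely depolarising channel $\tfrac{1}{4^n}\sum_k\mathcal{P}_k$, a uniform combination of Pauli (hence Clifford) channels, and should be stated. Second, your own bookkeeping shows that neither Clifford channels nor Pauli reset channels can produce a nonzero entry in the first \emph{row} of the PTM away from $(0,0)$: both families satisfy $\tr(\chi(\sigma_j))=0$ for traceless $\sigma_j$. Hence the lemma as stated (``any linear map'') is too strong --- the decomposition with real coefficients exists precisely for Hermiticity-preserving maps with $\tr\circ\chi\propto\tr$, which covers the trace-preserving noise channels the paper actually uses. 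You should state this restriction rather than leave it implicit, since it is exactly what your first-row/first-column/block trichotomy exposes.
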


In general, the Fourier spectrum of an expectation value evaluated on
a parameterised state produced by a noisy channel will differ from
that of a noiseless one in two distinct ways: the Fourier coefficients
might be different, and the frequency support might vary.

\begin{theorem}
  \label{thm:noisy-expectations}
  Writing
  $\tilde{\rho}(\boldsymbol{\theta}) =
  \chi(\rho(\boldsymbol{\theta}))$ for the noisy implementation of the
  target state $\rho(\boldsymbol{\theta})$, and allowing the noise channel
  $\chi$ to depend on $\boldsymbol{\theta}$, we have
  \begin{equation}
    \<O\>_{\tilde{\rho} (\boldsymbol{\theta})}
    = \sum_{\bf{k}\in \bf{\Lambda}} \sum_{j}
      q_{j}(\boldsymbol{\theta}) c_{\bf{k}}(\mathcal{C}_j\hc(O))
      e^{i\boldsymbol{\omega}_{\bf{k}}\cdot \boldsymbol{\theta}}
    + q'(\boldsymbol{\theta}) 
    \label{eqn:generalnoise}
  \end{equation}  
 where $q'(\boldsymbol{\theta} ):= \sum_j q_{j}'(\boldsymbol{\theta})
 \<P_j|O|P_j\>$ for some $|P_j\>$ eigenstates of Pauli operators in
 the decomposition of Pauli reset channels.
\end{theorem}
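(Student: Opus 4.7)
The plan is to combine the channel decomposition of Lemma~\ref{lemma:decompchannel} with Theorem~\ref{thm:1}, by pushing the noise channel $\chi$ onto the observable via the Heisenberg dual. First, I would apply Lemma~\ref{lemma:decompchannel} to expand $\chi = \sum_i q_i(\boldsymbol{\theta})\,\mathcal{C}_i + \sum_j q'_j(\boldsymbol{\theta})\,\mathcal{S}_j$ as a linear combination of Clifford unitary channels $\mathcal{C}_i$ and Pauli reset channels $\mathcal{S}_j$, keeping the coefficients explicitly $\boldsymbol{\theta}$-dependent to accommodate gate-dependent noise. Linearity of the trace and of the expansion then gives
\begin{equation}
  \<O\>_{\tilde\rho(\boldsymbol{\theta})}
  = \sum_i q_i(\boldsymbol{\theta})\,\Tr\!\bigl(O\,\mathcal{C}_i(\rho(\boldsymbol{\theta}))\bigr)
  + \sum_j q'_j(\boldsymbol{\theta})\,\Tr\!\bigl(O\,\mathcal{S}_j(\rho(\boldsymbol{\theta}))\bigr).
\end{equation}

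Next, I would handle the two contributions separately. For the Clifford part, I would move each $\mathcal{C}_i$ onto the observable using the adjoint identity $\Tr(O\,\mathcal{C}_i(\rho)) = \Tr(\mathcal{C}_i\hc(O)\,\rho)$, so that each summand becomes the noiseless expectation value of the Hermitian observable $\mathcal{C}_i\hc(O)$ with respect to $\rho(\boldsymbol{\theta})$. Applying Theorem~\ref{thm:1} to each such observable yields
\begin{equation}
  \Tr\!\bigl(\mathcal{C}_i\hc(O)\,\rho(\boldsymbol{\theta})\bigr)
  = \sum_{\bf{k}\in\boldsymbol\Lambda} c_{\bf{k}}(\mathcal{C}_i\hc(O))\,
    e^{i\boldsymbol{\omega}_{\bf{k}}\cdot\boldsymbol{\theta}},
\end{equation}
with the crucial point that the frequency lattice $\boldsymbol\Lambda$ and frequencies $\boldsymbol{\omega}_{\bf{k}}$ are determined by the parameterised unitary alone, hence common to all indices $i$. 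For the Pauli reset part, I would use the defining property that $\mathcal{S}_j$ is a replacement channel onto a fixed Pauli eigenstate $\ket{P_j}$, so $\mathcal{S}_j(\rho(\boldsymbol{\theta}))=\ketbra{P_j}{P_j}$ independently of the input, and the corresponding trace collapses to $\matrixel{P_j}{O}{P_j}$. Assembling these two pieces gives exactly Eq.~\eqref{eqn:generalnoise}, with the reindexed sum over $j$ playing the role of the $i$-sum above.

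The only subtle points I anticipate are bookkeeping ones rather than genuine obstacles. The first is justifying that the frequency support $\boldsymbol\Lambda$ does not grow under the noise channel: this follows because the $\mathcal{C}_i\hc$ act only on the observable (not on $\boldsymbol{\theta}$), so they can change the Fourier coefficients $c_{\bf{k}}(\mathcal{C}_i\hc(O))$ but not the set of admissible paths. The second is the $\boldsymbol{\theta}$-dependence of the $q_i,q'_j$: these appear as prefactors outside the sum over $\bf{k}$, so they modulate and may broaden the effective spectrum of $\<O\>_{\tilde\rho(\boldsymbol{\theta})}$ when expanded in their own Fourier series, which is precisely the phenomenon the paper exploits for error diagnostics. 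Finally, one should note that the decomposition in Lemma~\ref{lemma:decompchannel} is not unique, so the identity holds for any valid choice of $\{\mathcal{C}_i,\mathcal{S}_j,q_i,q'_j\}$; this is an invariance of the final sum rather than an obstruction.
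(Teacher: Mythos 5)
Your proposal is correct and follows essentially the same route as the paper's own proof in Appendix~\ref{sec:proof-thm-2}: decompose $\chi$ via Lemma~\ref{lemma:decompchannel} with $\boldsymbol{\theta}$-dependent quasiprobabilities, push each Clifford term onto the observable as $\mathcal{C}_j\hc(O)$ and invoke the Clifford-path expansion of Theorem~\ref{thm:1}, and collapse the reset terms to $\matrixel{P_j}{O}{P_j}$ via $\mathcal{S}_j\hc(O)=\iden\,\matrixel{P_j}{O}{P_j}$. Your observations on the fixed frequency lattice and the non-uniqueness of the decomposition match the paper's remarks as well.
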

\noindent
See Appendix~\ref{sec:proof-thm-2} for the proof and additional details.

Recall that in the noiseless case, a discrete set of frequencies
suffices, corresponding to the points of the lattice
$\Lambda \subset \mathbb{R}^{m}$.  In the noisy case, the expectation
value $\<O\>_{\tilde{\rho} (\boldsymbol{\theta})}$ is still a function
of $\boldsymbol{\theta}$, but it may no longer have a discrete Fourier
series.  Therefore we define the \emph{noisy frequency spectrum} as
\[
  \tilde{\mathcal{S}}:= \{
    \tilde{\boldsymbol{\omega}} \in \mathbb{R}^m :
    \mathcal{F}[\<O\>_{\tilde{\rho}}](\tilde{\boldsymbol{\omega}})
    \neq 0
  \}
\]
where $\mathcal{F}$ denotes the Fourier transform, computed as
\[
  \mathcal{F}[\<O\>_{\tilde{\rho}}](\tilde{\boldsymbol{\omega}})
  =
    \int \<O\>_{\tilde{\rho}(\boldsymbol{\theta})}
    e^{i \tilde{\boldsymbol{\omega}}\cdot \boldsymbol{\theta}}
    d\,\boldsymbol{\theta} .
\]

The difference between $\mathcal{S}$ and $\tilde{\mathcal{S}}$ depends
on the specific noise parameters $q_j(\boldsymbol{\theta})$, and
different channels have distinct effects on the spectra.  In the
following section we illustrate this for several noise models based on
varying physical assumptions.

\subsection{Effect of decoherent channels}
\label{sec:effect_decoh}

Consider a noisy implementation where each unitary gate operation is
followed by a noise channel that is independent of the parameters
\begin{equation}
  \label{eq:simplenoisychannel}
  \tilde{\U}
  = \tilde{\V}_{0} \circ \U_{1}(\theta_1) \circ \tilde{\V}_{1} \circ \cdots \circ \U_{m}(\theta_m) \circ \tilde{\V}_m.
\end{equation}
It is clear that for any fixed observable  $\tilde{\S} \subseteq \S$,
in other words the range of frequencies remains the same. However the
magnitude of the Fourier coefficients might be
different. Stronger assertions are possible if we specialise to
particular cases.

For example, if each noisy gate $\V_i$ is affected by a (global)
depolarising channel with parameter $p$, then one finds that the noisy
Fourier coefficients $\tilde{c}_{\bf{k}}$ are homogeneously contracted
\begin{align}
  \<O\>_{\tilde{\rho}(\boldsymbol{\theta})}
  = &\sum_{\bf{k} \in \boldsymbol \Lambda} (1-p)^{m+1} c_{\bf{k}}(O)
      e^{i\omega_{\bf{k}}\cdot \boldsymbol{\theta}} \nonumber \\
    &\quad\quad+  \frac{(1-(1-p)^{m+1})}{2^n}\Tr(O).
\end{align}
This can be seen as a special case of the following result.

\begin{lemma}[Pauli channel]
  Let $\tilde\U$ be as in Eq.~\eqref{eq:simplenoisychannel} and suppose
  each $\tilde{\V}_i$ is a Clifford unitary followed by a Pauli
  channel\footnote{%
    A \emph{Pauli channel} is an incoherent convex combination of
    Pauli errors.
  }, and each $\U_{i}$ is a Pauli gadget; then, for any Pauli
  observable $O$, the coefficients are contracted for all paths so
  that $|\tilde{c}_{\bf{k}}(O)| \le |c_{\bf{k}}(O)|$.
\end{lemma}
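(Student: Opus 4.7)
The plan is to reduce the analysis to a convex mixture of noiseless circuits with inserted Pauli errors, then use the commutation structure of Paulis with Clifford and Pauli-gadget operations to recast $\tilde c_{\mathbf{k}}$ as a signed convex combination of sign-flipped noiseless Fourier coefficients.

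First I would invoke the definition of a Pauli channel to write each $\mathcal{E}_i = \sum_\alpha p_{i,\alpha}\mathcal{P}_{i,\alpha}$, where $\mathcal{P}_{i,\alpha}(\rho)=P_{i,\alpha}\rho P_{i,\alpha}^\dagger$ and $p_{i,\alpha}\ge 0$, $\sum_\alpha p_{i,\alpha}=1$. Expanding the composition in $\tilde{\U}$ gives $\tilde{\U}(\boldsymbol{\theta}) = \sum_{\vec\alpha} p_{\vec\alpha}\,\U^{(\vec\alpha)}(\boldsymbol{\theta})$, a convex mixture of the noiseless circuit with extra Pauli gates inserted at the noise locations, with weights $p_{\vec\alpha}=\prod_i p_{i,\alpha_i}$.

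Next I would use the key identity $P\,e^{-i\theta_j P_j/2}\,P = e^{i\theta_j P_j/2}$ whenever $\{P,P_j\}=0$: commuting an inserted Pauli past a Pauli gadget either leaves it unchanged (when it commutes with $P_j$) or flips the sign $\theta_j\to-\theta_j$, while commuting it past a Clifford just produces a new Pauli. Pushing every inserted error to one end of the circuit therefore yields $\U^{(\vec\alpha)}(\boldsymbol{\theta}) = \mathcal{R}_{\vec\alpha}\circ\U(\vec s_{\vec\alpha}\boldsymbol{\theta})$, with $\mathcal{R}_{\vec\alpha}$ a single overall Pauli-unitary channel and $\vec s_{\vec\alpha}\in\{\pm1\}^m$ a sign pattern determined by the commutations. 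Since $O$ is a Pauli, $\mathcal{R}_{\vec\alpha}^\dagger(O)=\epsilon_{\vec\alpha}(O)\,O$ with $\epsilon_{\vec\alpha}(O)\in\{\pm1\}$, and re-indexing the Fourier series $f(\vec s\,\boldsymbol{\theta})=\sum_{\mathbf{k}} c_{\vec s\mathbf{k}}(O)\,e^{i\mathbf{k}\cdot\boldsymbol{\theta}}$ produces
\[
\tilde c_{\mathbf{k}}(O) = \sum_{\vec\alpha} p_{\vec\alpha}\,\epsilon_{\vec\alpha}(O)\,c_{\vec s_{\vec\alpha}\mathbf{k}}(O).
\]
The triangle inequality then gives $|\tilde c_{\mathbf{k}}(O)|\le\sum_{\vec\alpha}p_{\vec\alpha}\,|c_{\vec s_{\vec\alpha}\mathbf{k}}(O)|$.

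The hard part will be closing the argument with $|c_{\vec s\mathbf{k}}(O)|\le|c_{\mathbf{k}}(O)|$ for each admissible sign pattern $\vec s$. This step relies on the explicit process-mode branch structure of Sec.~\ref{sec:Example}: at each position $j$ with $k_j\neq 0$ the Heisenberg contribution of a Pauli gadget splits as $\tfrac{1}{2}(Q\pm P_j Q)$, so flipping the sign of $k_j$ negates only the "$P_j Q$"-branch amplitudes at position $j$ and leaves the "$Q$"-branch alone. I would express $c_{\vec s\mathbf{k}}(O)$ as a Walsh-Hadamard-type signed combination over the branch set, and use the fact that Pauli-gadget branching contributes fixed factors of magnitude $1/2$ (either $\pm 1/2$ or $\pm i/2$ depending on parity), together with the real-valuedness of $\mathrm{tr}(\rho_0 P_b)$ and the alternating real/imaginary pattern inherited from the $i$-factors on the "$P_j Q$" branches, to verify that flipping signs of $\vec s$ cannot increase the magnitude beyond $|c_{\mathbf{k}}(O)|$. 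Once this Walsh-Hadamard bound is in place, the triangle inequality above closes the proof. This structural inequality for the sign-flipped coefficients is the technical heart of the argument; the rest is bookkeeping.
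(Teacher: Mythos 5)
Your overall route is the same as the paper's: expand each Pauli channel as a convex mixture of Pauli insertions, push each inserted Pauli through the gadgets and Cliffords to obtain a sign pattern $\vec s_{\vec\alpha}$ on the angles plus a residual Pauli acting on $O$, arrive at $\tilde c_{\mathbf k}(O)=\sum_{\vec\alpha}p_{\vec\alpha}\,\epsilon_{\vec\alpha}(O)\,c_{\vec s_{\vec\alpha}\mathbf k}(O)$, and finish with the triangle inequality. Up to that point your argument is correct and matches the paper's appendix proof essentially step for step.

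The step you defer to the end, $|c_{\vec s\mathbf k}(O)|\le|c_{\mathbf k}(O)|$, is however a genuine gap, and the Walsh--Hadamard/branch-parity sketch you give will not close it, because the inequality is false for sign patterns that flip a proper nonempty subset of the nonzero entries of $\mathbf k$. Writing $\mathcal C^{j}_{\pm1}=\mathcal C^{j}_{r}\pm i\,\mathcal C^{j}_{im}$ with real channels, flipping $k_j$ sends $A+iB$ to $A-iB$ at position $j$; this is complex conjugation of the full coefficient only if $A$ and $B$ are real, which fails as soon as some other $k_{j'}\neq 0$ contributes its own $\pm i$ branch. (The paper's own proof asserts that $c_{\mathbf k'}$ equals $c_{\mathbf k}$ or its conjugate ``by induction''; that induction has the same flaw.) Concretely, take the paper's two-qubit UCC example: $\rho_0=\ketbra{00}{00}$, $U_1=e^{-i\theta_1 XY/2}$, $U_2=e^{-i\theta_2 YX/2}$, $O=ZI$, so $f=\cos(\theta_1+\theta_2)$ and $c_{(1,-1)}=0$. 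Inserting the Pauli channel $(1-p)\mathcal I+p\,(X\otimes I)(\cdot)(X\otimes I)$ between the two gadgets --- $X\otimes I$ anticommutes with both $YX$ and $ZI$ --- gives $\tilde f=(1-p)\cos(\theta_1+\theta_2)-p\cos(\theta_1-\theta_2)$, hence $|\tilde c_{(1,-1)}|=p/2>0=|c_{(1,-1)}|$. So neither your closing step nor the lemma's path-by-path claim holds in general. What does survive your derivation is the weaker bound $|\tilde c_{\mathbf k}|\le\max_{\vec\alpha}|c_{\vec s_{\vec\alpha}\mathbf k}|$, and genuine contraction whenever every admissible $\vec s_{\vec\alpha}$ acts trivially on $\mathbf k$ or flips all of its nonzero components (then $c_{\vec s\mathbf k}\in\{c_{\mathbf k},c_{\mathbf k}^{*}\}$), as in the aligned-Pauli case; you should either restrict to that regime or weaken the conclusion accordingly.
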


For general observables $O$ and non-Clifford $\V_i$, some modes
that were previously zero due to cancellation of paths might find
themselves with a non-zero value under a Pauli noise model.

In Appendix~\ref{app:noisesFourier}, we provide a comprehensive analysis
of noise channels and their effect on Fourier
coefficients, summarised in Table~\ref{tab:noisesFourier}. 
Importantly, the effect of coherent, non-unital and parameter-dependent channels is detectable from the Fourier spectrum as additional modes. Parameter-dependent channel in particular may lead to modes with frequencies outside the theoretical spectrum $\S$.

\begin{table*}[htbp!]
\begin{tabular}{|l|l|c|}
\hline
Noise channel& Effect on Fourier coefficients & $\tilde\S \subseteq \S$?\\
\hline
\hline
Depolarising (Lemma \ref{lem:dep}) & Uniform contraction&  \cmark \\
\hline
Pauli, SPAM (Lemma \ref{lem:pauli}) & Contraction ($O$ Pauli and $\V$ Clifford)& \cmark \\
\hline
Aligned Pauli (Lemma \ref{lem:aligned}) & Uniform contraction (any $O$ and $\V$)& \cmark \\
\hline
Coherent (corollary of Lemma \ref{lem:purity}) & Different coefficients but conserved norm of coefficient vector $\vec{\tilde c}$& \cmark \\
\hline
Non-unital (Lemma \ref{lem:nonunital}) & Different coefficients, new modes with subset of parameters& \cmark \\
\hline
Parameter-dependent (Lemma \ref{lem:corr}) & Different coefficients, new modes outside theoretical spectrum& \xmark \\
\hline
\end{tabular}
\caption{Effect of different noise channels on Fourier coefficients of
  the landscape. Proofs can be found in
  Appendix~\ref{app:noisesFourier}. Rightmost column indicates whether
  the noisy spectrum $\tilde\S$ is included in the theoretical
  noiseless spectrum $\S$.}
  \label{tab:noisesFourier}
\end{table*}

\subsection{Shot noise}
\label{sec:finite-size}
Overall the analysis in previous sections confirms that noise is capable of expanding the frequency spectrum. With knowledge of the exact, noiseless spectrum, one thus has a direct way of quantifying the impact of noise on a quantum variational landscape. In practical terms, an implementation of error diagnosis from Fourier data would involve performing a discrete Fourier transform (DFT) by sampling parameter values. 
Since the number of points determines the resolution at which the frequencies in the noisy spectrum can be determined, the grid should be fine enough to be able to resolve frequencies beyond $\mathcal{S}$. 
In Appendix~\ref{app:resource} we discuss possible approaches to optimising the number of calls to the quantum computer.

These sampling requirements are in turn different to those needed to resolve the individual points of the landscape with sufficient accuracy. We consider this requirement in the next section.

Typically, for DFT the sampled points $\boldsymbol{\theta}_{\boldsymbol{i}}$ are equally spaced on a grid $\boldsymbol{\Theta} = \frac{2\pi}{d} \mathbb{Z}_d^{\times m}$, where $d = 2N + 1$ for $N$ the largest frequency resolvable. For each sampled point, the corresponding circuit $U(\boldsymbol{\theta}_i)$ is repeated $n_{s}$ times to construct sample mean estimators $\tilde{x}_{i}$ for the expectation value $x_i$. From the central limit theorem, the standard deviation due to a finite number of shots scales as $ 1/\sqrt{n_s}$ and using properties of the DFT a similar scaling occurs in estimating the Fourier coefficients under shot noise.

\begin{lemma}[Shot noise]
Given a number of shots $n_s$, each Fourier coefficient is normally distributed around the noiseless mean with standard deviation
\begin{equation}
        \sigma(\Re\,(\mathcal{F}[\tilde{x}]_{\boldsymbol{k}}))= \frac{\sqrt{1 - ||x||_2^2/d^m}}{\sqrt{2 n_s d^m }}
\end{equation}
and similarly for the imaginary part.
\end{lemma}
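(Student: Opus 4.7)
The plan is to treat $\mathcal{F}[\tilde x]_{\boldsymbol{k}}$ as a deterministic linear functional of the independent sample-mean estimators $\{\tilde x_{\boldsymbol{i}}\}$, and then propagate the single-shot variance at each grid point through this linear map. Concretely, I would first write the DFT explicitly as
\begin{equation}
\mathcal{F}[\tilde x]_{\boldsymbol{k}} = \frac{1}{d^m}\sum_{\boldsymbol{i}\in\mathbb{Z}_d^{\times m}} \tilde x_{\boldsymbol{i}}\, e^{-i\boldsymbol{\omega}_{\boldsymbol{k}}\cdot\boldsymbol{\theta}_{\boldsymbol{i}}},
\end{equation}
so that $\Re\mathcal{F}[\tilde x]_{\boldsymbol{k}}$ is a real linear combination of the $\tilde x_{\boldsymbol{i}}$ with weights $\cos(\boldsymbol{\omega}_{\boldsymbol{k}}\cdot\boldsymbol{\theta}_{\boldsymbol{i}})/d^m$, and similarly the imaginary part uses sines.

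Next I would invoke the central limit theorem at each grid point. For a measurement of a bounded observable with $\pm 1$ eigenvalues (the standard Pauli decomposition setting), each shot is a $\{\pm 1\}$-valued random variable with mean $x_{\boldsymbol{i}}$ and single-shot variance $1 - x_{\boldsymbol{i}}^2$; hence $\tilde x_{\boldsymbol{i}}$ is asymptotically $\mathcal{N}(x_{\boldsymbol{i}},(1-x_{\boldsymbol{i}}^2)/n_s)$. Since the sample-mean estimators at distinct grid points are independent, any linear combination of them is again (asymptotically) normal, so $\Re\mathcal{F}[\tilde x]_{\boldsymbol{k}}$ is normally distributed around its mean $\Re\mathcal{F}[x]_{\boldsymbol{k}}$.

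Then the variance calculation becomes bookkeeping:
\begin{equation}
\mathrm{Var}\bigl(\Re\mathcal{F}[\tilde x]_{\boldsymbol{k}}\bigr)
= \frac{1}{n_s d^{2m}}\sum_{\boldsymbol{i}} \cos^2(\boldsymbol{\omega}_{\boldsymbol{k}}\cdot\boldsymbol{\theta}_{\boldsymbol{i}})\,(1-x_{\boldsymbol{i}}^2).
\end{equation}
Using the orthogonality identity $\sum_{\boldsymbol{i}}\cos^2(\boldsymbol{\omega}_{\boldsymbol{k}}\cdot\boldsymbol{\theta}_{\boldsymbol{i}}) = d^m/2$ for generic $\boldsymbol{k}$ and $\sum_{\boldsymbol{i}} x_{\boldsymbol{i}}^2 = \|x\|_2^2$, together with a mild smoothing argument that lets the $\cos^2$ factor and the $(1-x_{\boldsymbol{i}}^2)$ factor be decoupled on average (or, equivalently, averaging over $\boldsymbol{k}$), one obtains
\begin{equation}
\mathrm{Var}\bigl(\Re\mathcal{F}[\tilde x]_{\boldsymbol{k}}\bigr) = \frac{1 - \|x\|_2^2/d^m}{2 n_s d^m},
\end{equation}
and taking the square root gives the stated $\sigma$. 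The imaginary part works identically with $\sin^2$ in place of $\cos^2$.

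The main subtlety (and the place I expect the reader to pause) is the decoupling step: strictly speaking, $\sum_{\boldsymbol{i}}\cos^2(\cdot)(1-x_{\boldsymbol{i}}^2)$ is not exactly $\tfrac12(d^m - \|x\|_2^2)$ for every individual $\boldsymbol{k}$, but it is when summed or averaged over $\boldsymbol{k}$, and it is also the correct leading-order expression when the $x_{\boldsymbol{i}}^2$ pattern is not commensurate with the frequency $\boldsymbol{\omega}_{\boldsymbol{k}}$. I would state the result as the typical (or averaged) standard deviation and note that this is the standard shot-noise scaling $\sigma \sim 1/\sqrt{n_s d^m}$, i.e.\ sampling $n_s$ shots at each of $d^m$ grid points reduces the Fourier-coefficient error by the total shot budget $\sqrt{n_s d^m}$, modulated by the amplitude factor $\sqrt{1-\|x\|_2^2/d^m}$ that captures how close the landscape is to saturating the $\pm 1$ bounds.
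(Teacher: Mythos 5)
Your proposal is correct and reaches the stated formula, but by a genuinely different route from the paper. You propagate the per-point shot variance $(1-x_{\boldsymbol{i}}^2)/n_s$ directly through the linear DFT map and compute $\mathrm{Var}(\Re\,\mathcal{F}[\tilde x]_{\boldsymbol{k}})=\frac{1}{n_s d^{2m}}\sum_{\boldsymbol{i}}\cos^2(\boldsymbol{\omega}_{\boldsymbol{k}}\cdot\boldsymbol{\theta}_{\boldsymbol{i}})(1-x_{\boldsymbol{i}}^2)$ in one line. The paper instead writes the noise as a white Gaussian signal modulated by the deterministic envelope $\sqrt{1-x^2}$, turns the pointwise product into a convolution in frequency space, and proves two sub-lemmas (the DFT of i.i.d.\ Gaussian noise on a $d^m$ grid is Gaussian with standard deviation $\sigma/\sqrt{2d^m}$; convolving a white Gaussian sequence with a deterministic sequence $y$ gives a Gaussian with standard deviation $\sigma\|y\|_2$), then closes with Parseval. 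The two computations are equivalent, and your version is more elementary and also more transparent about where the result is only approximate: as you note, $\sum_{\boldsymbol{i}}\cos^2(\cdot)(1-x_{\boldsymbol{i}}^2)$ equals $\tfrac12(d^m-\|x\|_2^2)$ plus a correction proportional to the Fourier component of $1-x^2$ at frequency $2\boldsymbol{k}$, so the stated $\boldsymbol{k}$-independent $\sigma$ holds exactly only on average over $\boldsymbol{k}$ (or when that component vanishes). The paper's derivation has the same leading-order character — its convolution lemma assumes independent entries, which fails for the conjugate-pair correlations $\mathcal{F}[G]_{-\boldsymbol{l}}=\mathcal{F}[G]_{\boldsymbol{l}}^{*}$ of a real noise signal — but does not flag it, so your explicit caveat is a point in your favour rather than a gap. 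Two minor things you could add to match the paper: the $\boldsymbol{k}=\boldsymbol{0}$ (and more generally self-conjugate) components are an exception to the $\sum\cos^2=d^m/2$ identity, and the resulting Gaussians at different $\boldsymbol{k}$ are not mutually independent, which the paper records in its lemma statement.
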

Two observations can be made on this result: first, the standard deviation of each Fourier coefficient due to shot noise is independent on the coefficient's frequency vector $\boldsymbol k$, and second, the dependence on $d^m$ means that increasing the resolution in the frequency spectrum or the number of parameters varied reduces the shot noise in the Fourier coefficients with a scaling of $1/O(\sqrt{d^m})$.

\subsection{Figures of merit from Fourier spectrum}
\label{sec:figures_of_merit}

\subsubsection{Power spectra}
\label{sec:power-spectra}

If a frequency spectrum is known to be included into  $\S$, one can quantify the extra noise-induced modes from an experimentally measured spectrum $\tilde{\S}$ using the signal power (or energy).
\begin{definition}[Quality measures]
 The \textbf{power in the extra modes} due to noise is given by
\begin{equation}
        P_N := \sum_{\boldsymbol \omega \in \tilde{\S}/\S} |\tilde c_{\boldsymbol \omega}|^2
\end{equation}
while the \textbf{signal power} will be given by
\begin{equation}
        P_S := \sum_{\boldsymbol \omega \in \S} |\tilde c_{\boldsymbol \omega}|^2
\end{equation}
The \textbf{total power} $P := P_N + P_S$ is proportional, via Parseval's theorem, to the power of the real-space signal $\sum_i |x_i|^2$. 
One can also define the observed \textbf{signal-to-noise ratio} (SNR) as $SNR := P_S/P_N$.
\end{definition}

Distinguishing noise from the signal will depend on the knowledge of the spectrum $\S$, as well as the size of the experimentally measured spectrum $\tilde{\S}$. To maximise the distinction, it is beneficial to identify $\S$ as small as possible, which might be possible for specific circuit structures. In Section \ref{sec:mitigation} we show how the noiseless spectrum can be refined for the case of QAOA. However, for general circuits one is confined to the results of Theorem \ref{thm:1}.

\subsubsection{Average fidelity and purity}
\label{sec:aver-fidel-purity}

A quantity of practical relevance for benchmarking that can be extracted from the Fourier representation is the average fidelity over all sampled parameter values $\langle F \rangle_{\boldsymbol \theta}$, equal to $\mathcal{F}[F]_{\boldsymbol 0}$. Let us represent the Fourier coefficients over all Pauli operators in the vectorised notation $\vec c$, where $[\vec c]_{\boldsymbol k, i} = c_{\boldsymbol k}(P_i)$. Therefore one can show that 
\begin{lemma}[Average fidelity]
The average fidelity of the output state over all sampled parameters is proportional to the inner product of the vectors of Fourier coefficients:
\begin{equation}
        \langle F\rangle_{\boldsymbol\theta} = \int F(\tilde{\rho}(\boldsymbol{\theta}), \rho(\boldsymbol{\theta})) d\, \boldsymbol{\theta}= \frac{1}{2^n} \vec c\,^\dagger \vec{\tilde c}
\end{equation}
where $ F(\tilde{\rho}(\boldsymbol{\theta}), \rho(\boldsymbol{\theta}))$ is the fidelity between the target state $\rho$ and the noisy implementation $\tilde{\rho}$.
\end{lemma}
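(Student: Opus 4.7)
The plan is to reduce the fidelity to a Hilbert–Schmidt-type overlap and then exploit the orthogonality of the Fourier modes supplied by Theorems~\ref{thm:1} and~\ref{thm:noisy-expectations}. Since $\rho(\boldsymbol\theta)=U(\boldsymbol\theta)\rho_0 U(\boldsymbol\theta)\hc$ is a pure state whenever $\rho_0$ is pure (which is the relevant setting for variational ansätze), I will use $F(\tilde\rho(\boldsymbol\theta),\rho(\boldsymbol\theta))=\Tr\!\big(\tilde\rho(\boldsymbol\theta)\rho(\boldsymbol\theta)\big)$; for mixed targets the same identity defines the ``linearised'' fidelity used as a practical figure of merit, and the proof goes through unchanged.

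First I would expand both states in the Pauli basis, $\rho(\boldsymbol\theta)=\frac{1}{2^n}\sum_i \langle P_i\rangle_{\rho(\boldsymbol\theta)} P_i$ and similarly for $\tilde\rho$, and use $\Tr(P_iP_j)=2^n\delta_{ij}$ to obtain
\begin{equation}
F(\tilde\rho(\boldsymbol\theta),\rho(\boldsymbol\theta))=\frac{1}{2^n}\sum_i \langle P_i\rangle_{\rho(\boldsymbol\theta)}\,\langle P_i\rangle_{\tilde\rho(\boldsymbol\theta)}.
\end{equation}
Next, for each Pauli $P_i$ I would substitute the generalised trigonometric polynomial expansions guaranteed by Theorem~\ref{thm:1} for the noiseless expectation and by Theorem~\ref{thm:noisy-expectations} for the noisy one, writing $\langle P_i\rangle_{\rho(\boldsymbol\theta)}=\sum_{\bf k} c_{\bf k}(P_i)e^{i\boldsymbol\omega_{\bf k}\cdot\boldsymbol\theta}$ and $\langle P_i\rangle_{\tilde\rho(\boldsymbol\theta)}=\sum_{\bf k} \tilde c_{\bf k}(P_i)e^{i\boldsymbol\omega_{\bf k}\cdot\boldsymbol\theta}$. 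Because $P_i$ and $\rho$ are Hermitian the first expansion is real, so $\langle P_i\rangle_{\rho(\boldsymbol\theta)}=\langle P_i\rangle_{\rho(\boldsymbol\theta)}^{*}$, which lets me replace $c_{\bf k}(P_i)$ by $c_{-\bf k}^{*}(P_i)$ via the reality relation $c_{-\bf k}=c_{\bf k}^{*}$ and reshape the product into an inner-product form.

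The crucial step is integrating over $\boldsymbol\theta$. Taking the (normalised) integral and invoking orthogonality of the exponentials,
\begin{equation}
\int e^{i(\boldsymbol\omega_{\bf k'}-\boldsymbol\omega_{\bf k})\cdot\boldsymbol\theta}\,d\boldsymbol\theta = \delta_{\boldsymbol\omega_{\bf k},\boldsymbol\omega_{\bf k'}},
\end{equation}
collapses the double sum over paths to a single sum, giving
\begin{equation}
\langle F\rangle_{\boldsymbol\theta}=\frac{1}{2^n}\sum_i\sum_{\bf k} c_{\bf k}^{*}(P_i)\,\tilde c_{\bf k}(P_i)=\frac{1}{2^n}\,\vec c\,\hc \vec{\tilde c},
\end{equation}
which is the claimed identity. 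The main subtlety to handle carefully is exactly this orthogonality step: the frequencies $\boldsymbol\omega_{\bf k}$ from Theorem~\ref{thm:1} are real but need not be integers, so strictly speaking the integral must be interpreted either on a period-compatible fundamental domain (when the generators of $\boldsymbol\Lambda$ are commensurate) or as a Bohr mean over $\mathbb R^m$, which is the natural analogue for almost-periodic functions. A secondary subtlety is that different paths $\bf k\ne\bf k'$ can share the same frequency, in which case the orthogonality delta fuses them; I would address this by either grouping paths by their frequency (so the vectors $\vec c,\vec{\tilde c}$ are indexed by $\tilde{\mathcal S}\cup\mathcal S$ with summed coefficients) or by remarking that the resulting bilinear form is invariant under such regrouping, so the stated formula holds with either convention.
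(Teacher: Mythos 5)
Your proof is correct and follows essentially the same route as the paper's: both reduce the fidelity to $\frac{1}{2^n}\sum_i \langle P_i\rangle_{\rho}\langle P_i\rangle_{\tilde\rho}$ via the Pauli basis and then extract the $\boldsymbol\theta$-average by orthogonality of the Fourier modes (the paper phrases this as the zero-frequency component of a convolution over the discrete sampling grid, while you integrate directly). Your remarks on incommensurate frequencies and on paths sharing a frequency address the same subtlety the paper defers to its closing Nyquist-rate comment, so there is no substantive difference.
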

From this it is clear that the Fourier modes in $\vec{\tilde c}$ that fall outside the support of $\vec c$ do not contribute to the average fidelity. In other words, additional modes such as those produced by parameter-dependent errors are not captured by this averaged measure, but nonetheless will affect fidelity of individual circuits.

Of course, in a practical setting it is unrealistic to measure all Pauli observables, while direct fidelity estimation \cite{flammia2011direct} requires at worst $\Omega(2^n)$ measurements, for many classes of states resource requirements are reduced, and one could also further employ estimation of multiple observables using classical shadows to reduce the computational burden \cite{huang2020predicting}.

Similarly, the average purity can be related to the norm of the noisy Fourier coefficient vector:
\begin{lemma}[Average purity] \label{lem:purity}
The average purity of the output state $\tilde\rho(\boldsymbol\theta)$ over all sampled parameters is proportional to the squared norm of the vector of Fourier coefficients
\begin{equation}
        \langle P\rangle_{\boldsymbol\theta} = \frac{1}{2^n} |\vec{ \tilde{c}}|^2.
\end{equation}
\end{lemma}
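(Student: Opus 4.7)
The plan is to expand $\tilde\rho(\boldsymbol\theta)$ in the Pauli basis and then apply Parseval's theorem component-wise, so that each Pauli expectation contributes its squared Fourier coefficients to the averaged purity.

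First, I would write $\tilde\rho(\boldsymbol\theta) = \frac{1}{2^n}\sum_i \tilde x_i(\boldsymbol\theta) P_i$, where the sum is over the $4^n$ $n$-qubit Pauli operators (normalised so $\Tr(P_i P_j) = 2^n \delta_{ij}$) and $\tilde x_i(\boldsymbol\theta) = \Tr(P_i \tilde\rho(\boldsymbol\theta)) = \<P_i\>_{\tilde\rho(\boldsymbol\theta)}$. Using orthogonality of Paulis, the instantaneous purity then collapses to
\begin{equation}
  \Tr(\tilde\rho(\boldsymbol\theta)^2)
  = \frac{1}{2^{2n}}\sum_{i,j}\tilde x_i(\boldsymbol\theta)\tilde x_j(\boldsymbol\theta)\Tr(P_i P_j)
  = \frac{1}{2^n}\sum_i |\tilde x_i(\boldsymbol\theta)|^2.
\end{equation}

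Second, I invoke Theorem~\ref{thm:noisy-expectations} to represent each component as a (generalised) Fourier series $\tilde x_i(\boldsymbol\theta) = \sum_{\boldsymbol k} \tilde c_{\boldsymbol k}(P_i) e^{i\boldsymbol\omega_{\boldsymbol k}\cdot\boldsymbol\theta}$. Averaging over the parameter torus $\boldsymbol\theta \in [0,2\pi]^m$ and applying Parseval's/Plancherel's theorem gives, component by component,
\begin{equation}
  \langle |\tilde x_i|^2 \rangle_{\boldsymbol\theta}
  = \sum_{\boldsymbol k} |\tilde c_{\boldsymbol k}(P_i)|^2.
\end{equation}
Summing over $i$ and combining with the previous step,
\begin{equation}
  \langle P\rangle_{\boldsymbol\theta}
  = \frac{1}{2^n}\sum_i\sum_{\boldsymbol k}|\tilde c_{\boldsymbol k}(P_i)|^2
  = \frac{1}{2^n}|\vec{\tilde c}|^2,
\end{equation}
where in the last equality I recognise the double sum as the squared $\ell_2$-norm of the vector $[\vec{\tilde c}]_{\boldsymbol k, i} = \tilde c_{\boldsymbol k}(P_i)$ defined prior to the average-fidelity lemma.

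The only subtle step is Parseval in the second part. For unitary circuits with parameter-independent noise, the frequencies lie on the lattice $\boldsymbol\Lambda$ (as used in Theorem~\ref{thm:1}), so standard Fourier-series Parseval on $[0,2\pi]^m$ applies directly. For parameter-dependent channels (Lemma~\ref{lem:corr}), the spectrum $\tilde{\mathcal S}$ may contain incommensurate real frequencies, in which case one interprets $\<\cdot\>_{\boldsymbol\theta}$ as a Bohr mean and uses the almost-periodic version of Parseval; the orthogonality of distinct exponentials under this mean yields the same identity. Everything else is Pauli bookkeeping and the factor $1/2^n$ propagates cleanly.
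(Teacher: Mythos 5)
Your proof is correct and amounts to essentially the same computation as the paper's: the paper obtains the result as a two-line corollary of its average-fidelity lemma by setting both states equal in the formula $F(\sigma,\rho)=\sum_i \Tr[\chi_i\sigma]\Tr[\chi_i\rho]$ with $\chi_i=P_i/\sqrt{2^n}$, whose proof is precisely your Pauli-orthogonality expansion followed by Fourier orthogonality (the convolution evaluated at $\boldsymbol{k}=\boldsymbol{0}$ is Parseval). Your direct version is self-contained and additionally flags the incommensurate-frequency caveat for parameter-dependent noise, which the paper does not address explicitly.
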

The relation between purity and spectral power is therefore that the average power over all Pauli operators is proportional to the average purity of the state over all sampled parameters.
As a corollary, coherent noise channels in the preparation circuit preserve purity of the final state and therefore the norm of $\vec{\tilde{c}}$.

Furthermore, since coherent errors preserve purity of the state, it must be that under coherent noise channels the norm of $\vec c$ is preserved. Vice versa, decoherent channels will decrease purity, and therefore lead to a contraction in the components of $\vec c$, consistent with the analysis presented in Sec.~\ref{app:noisesFourier}.

\subsection{Experiments}
\label{sec:experiments}

\subsubsection{Simple circuit}
\label{sec:simple-circuit}

In the following we apply spectral noise evaluation to example circuits run on real quantum devices.
The first circuit examined is a two-qubit circuit, consisting of two parameterised $R_y$ rotations, and the measurement operator is the parity operator $ZZ$. The parameterisation is the canonical one, therefore the noiseless expectation value is $f(\theta_1, \theta_2) = \cos(\theta_1)\cos(\theta_2)$, which corresponds to 4 nonzero Fourier coefficients with maximum absolute frequency 1. Therefore, to study the presence of higher frequency modes we sample up to a maximum absolute frequency of 2, giving a 5x5 grid. The experiment was run on a superconducting device (\texttt{ibmq\_lima}).

To assess the effect of noise on the frequency spectrum, after state preparation and prior to measurement we introduce a series of identity gates, composed of two successive CNOT gates. These would cause an increase in noise by extending the depth and therefore computation time, increasing decoherence, and by being noisy themselves.
Furthermore, we note that error mitigation via extrapolation for digitised computations~\cite{giurgica2020digital} involves introducing such fiducial gates to artificially boost noise levels, and so the analysis here may also be used to test noise assumptions involved in deriving fitting functions.

In Figure~\ref{fig:power_simple_circuit} we show the total power in noise modes, defined above as modes not present in the noiseless spectrum. The latter is further divided in noise on the expected noiseless spectrum $\S$ (`on support') and noise outside of it (`off support' ), which in this case consists of modes having any components with frequency higher than 1.
The error bars correspond to one standard deviation as evaluated by a bootstrapping variance estimation method~\cite{young1996jackknife}.
\begin{figure}[ht]
        \includegraphics[width=0.5\textwidth]{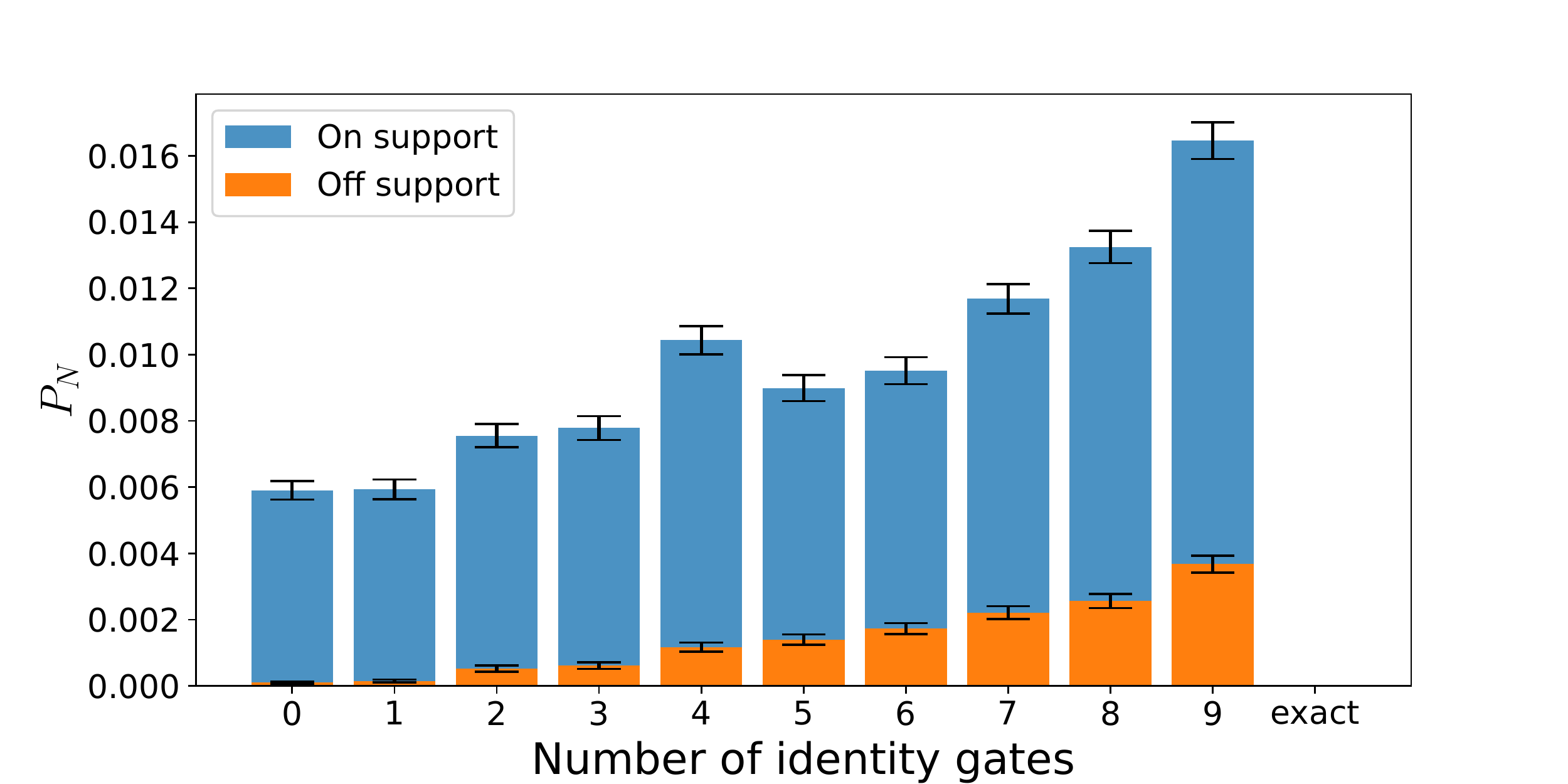}
        \caption{Power in noise modes for signal measuring the observable $O=ZZ$ for 2-qubit 2-rotations circuit, for increasing depth of fiducial gates corresponding to a logical identity (in this case two CNOT operations). The experiment was run on \texttt{ibmq\_lima}.}
        \label{fig:power_simple_circuit}
\end{figure}
The results reveal that, as expected, increasing the number of identity gates has a progressive detrimental effect on the quality of the output. Most of the noise appears to come from on-support modes, while off-support extra modes remain minor, however they also appear to increase progressively with depth. Based on the previous discussion, the decay in signal power is to be interpreted as likely being caused by decoherence, while the increase in additional modes, both on- and off-support, signals the increased effect of more complex noise channels. 

\subsubsection{UCC circuit}
\label{sec:ucc-circuit}

Next, we evaluate the spectrum of a more complex 2-qubit circuit inspired by the unitary coupled cluster (UCC) approach to VQE~\cite{mcardle2020quantum}, consisting of two Pauli gadgets with $P_1=XY$ and $P_2=YX$. The measurement operator is $ZI$, which yields the expectation value $f(\theta_1, \theta_2) = \cos(\theta_1)\cos(\theta_2) - \sin(\theta_1)\sin(\theta_2)$, giving 2 Fourier coefficients. This time, the experiment is run on four superconducting devices provided by IBM, as well as the ion trap computer Quantinuum H1-1, powered by Honeywell. The sampling grid is identical to the previous experiment. The results are shown in Figure~\ref{fig:merit_ucc}.
\begin{figure}[htp]
        \includegraphics[width=0.48\textwidth]{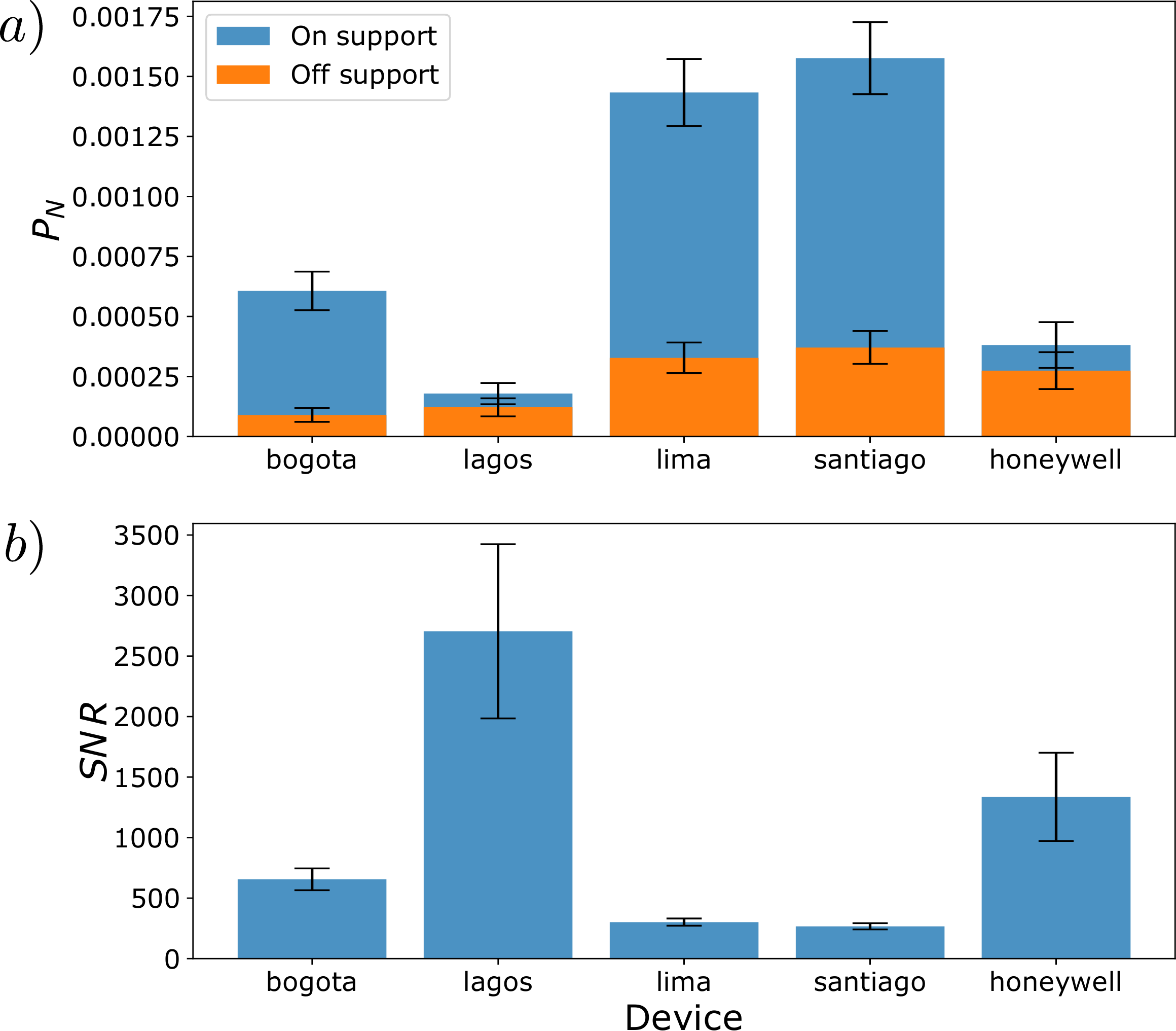}
        \caption{Figures of merit for 2-qubit UCC circuit, a) power in noise modes, b) signal-to-noise ratio. Measured observable $O=ZI$. Run on different systems: Quantinuum H1-1 (ion-based), IBM's \texttt{ibmq\_santiago}, \texttt{ibmq\_lima}, \texttt{ibmq\_bogota}, \texttt{ibmq\_lagos} (superconducting).}
        \label{fig:merit_ucc}
\end{figure}

Overall, the devices with the worst performance are \texttt{ibmq\_santiago} and \texttt{ibmq\_lima}, while the best performing on the task are HQ1 and \texttt{ibmq\_lagos}. This might be ascribed to their greater number of qubits (10 and 7), giving a larger parameter space for the compiler to optimise noise, as well as to higher quantum volume in the case of H1-1 (which, at the time of writing, holds the world record for this measure~\cite{honeywell2021qv}). On the other hand, the other machines have all a lower number of qubits (5).
Interestingly, for the most accurate devices the majority of the noise modes fall outside the exact support, hinting that these devices might benefit the most from spectral analysis and noise mitigation.

Therefore we see that in principle spectral methods could be valuable in benchmarking quantum devices by providing multiple figures of merit, and separating different physical noise effects. However, we stress that the experiments presented here are extremely limited in both system size, circuit depth and number of measurement operators sampled. The results cannot be considered as anything beyond a rough assessment of the performance of the different machines in this particular task, and are meant to serve as a proof-of-principle rather than a meaningful comparison.

\section{Error mitigation based on spectral information}
\label{sec:error-mitig-based}

In Appendix~\ref{sec:AppendixFilter} we prove that below a noise threshold, a signal can be perfectly recovered by rounding its Fourier coefficients for a class of quantum circuits. These contain parametrised gates with independent parameters and unparameterised gates that are Clifford but otherwise unknown. However, this theorem does not extend to general circuits, where typically one has both unparameterised non-Clifford gates and dependent parameters. While the result applies to circuits of practical interest such as the unitary coupled clustered ansatz \cite{romero2018strategies,cowtan2020generic}, the classical complexity of the fast Fourier transform scales exponentially in the number of parameters thus limiting the theorem's use to circuits containing few non-Clifford gates. On the other hand, circuits with a number of non-Clifford gates scaling logarithmically in the number of qubits can be efficiently simulated \cite{bravyi2016improved}. 

However, given knowledge of the frequency spectrum for a circuit family, we design  application-specific error mitigation strategies that allow to approximately reconstruct the signal. In Appendix~\ref{sec:AppendixFilter} we outline two such strategies inspired by well-known classical spectral filtering methods, namely thresholding and filtering. Both aim to recover the clean signal by isolating the most significant components from a noisy spectrum. Thresholding assumes that the noise modes are small in magnitude, and therefore sets a lower bound to the coefficient's absolute value, while filtering uses knowledge of the noiseless spectrum to remove all frequencies that must be induced by noise.

In the next sections we combine these strategies with a decoherence-mitigating method and demonstrate their usefulness in a realistic scenario.

\subsection{Spectral denoising with near-Clifford circuits}
\label{sec:denoising_near_cliff}
A general feature of incoherent quantum noise is the scaling of signal coefficients, as explained in Section~\ref{sec:IIIA}.  Denoising techniques based on frequency filtering or thresholding do not change the value of the significant coefficients, and therefore would not reduce the effect of decoherent processes on their own. This can be viewed as an indication that Fourier methods should be complemented with other noise mitigation techniques designed to reduce decoherence effects.
Here, we explore a combination of Fourier methods with Clifford Data Regression (CDR)~\cite{czarnik2020error}, which has shown promise as a practical mitigation method for expectation values. CDR has features that make it natural to integrate with Fourier methods, as it relies on learning noise based on the entanglement structure of the circuit, independently from the angles of rotation. The combination thus leads to a more efficient use of samples.

Denote by $\tilde{f}(\boldsymbol{\theta})$ the noisy expectation value of a given observable $O$ with respect to the parameterised circuit $U(\boldsymbol{\theta}) $ acting on input state $\rho_0$
and by $f^{\#}(\boldsymbol{\theta})$ its restriction to the frequency spectrum $\boldsymbol{\omega}\in\mathcal{S}$ for the input circuit as determined using the techniques illustrated in Appendix~\ref{sec:AppendixFilter}.
In practice, when implementing these protcols on a quantum device, one deals with sample mean estimators of the noisy expectation value $\tilde{f}$ which, along with implementing a discrete Fourier transforms gives an estimator for $f^{\#}$ that includes shot noise as discussed in Sec.~\ref{sec:finite-size}. 
After filtering noise-induced modes, the second step would be to reduce the effect of incoherent noise by determining the noise scaling $A_{\boldsymbol{\omega}}$ that give the error-mitigated estimators $\hat{c}_{\boldsymbol{\omega}}:= A_{\boldsymbol{\omega}} \tilde{c}_{\omega}$ of the exact Fourier coefficients $c_{\boldsymbol{\omega}}$. To this aim, there are several strategies one can employ. In our experiments we have explored the following strategies

The simplest approach is to first apply CDR, as it was introduced in \cite{czarnik2020error}, in order to learn scaling factors $A$ and $A'$ that minimise 
 \begin{equation}
        A,A' : = arg \min_{a,a'}\frac{1}{|\mathcal{C}|}\sum_{C\in \mathcal{C}}| f_{C} - a \tilde{f}_C -a'|
 \end{equation}
where $\tilde{f}_{C}$ and $f_{C}$ denote respectively the noisy and the classically simulated expectation value of the observable $O$ with respect to state $C\rho_0 C\hc$. $C$ is a near-Clifford circuit derived from the target parameterised circuit $U(\boldsymbol{\theta})$ where all but a fixed number of $D$ non-Clifford gates are replaced with Clifford operations. Once $A$ and $A'$ have been found, we filter out noise-induced modes using the methods in Sec.~\ref{sec:IIIA} and~\ref{sec:IIIB} to give an error-mitigated estimator.
\begin{equation}
 \label{eq:cdr} 
        \hat{f}(\boldsymbol{\theta}) = Af^{\#} (\boldsymbol{\theta}) + A'
\end{equation}
for the target expectation value $f(\boldsymbol{\theta})$. To further reduce quantum compute resources, we include (part of) the circuit evaluations required for the Fourier transform into the training set $\mathcal{C}$.

Alternatively, one may implement CDR by training the model on spectrally filtered data. There is the limitation that the training data must now be in a form conducive to Fourier mitigation, and thus a minimum of non-Clifford gates must be present. This issue is exacerbated whenever, for a given parameter $\theta$, multiple $\theta$-dependent gates occur in the synthesised target circuit, as they cannot be replaced by independent Clifford gates in the training phase.
Lastly, a more involved strategy rests on the observation that not all coefficients corresponding to $\boldsymbol{\omega}\neq 0$ are scaled by the same amount under the action of real hardware noise. This behaviour may be seen in the raw Fourier spectrum obtained on IBMQ devices (Figures~\ref{fig:qaoa_1p} and \ref{fig:qaoa_2p}) but also in the analysis of different noise models in Sec.~\ref{sec:Models}.
Using a similar training set as in the second approach above, one then would seek to learn a vector ${\bf{A}} = (A_{\boldsymbol{\omega}})$ of parameters that minimise
  \begin{equation}
        {\bf{A}} : = {\rm{arg}} \min_{a_{\boldsymbol{\omega}}}\frac{1}{|\mathcal{C}|}\sum_{C\in \mathcal{C}}| f_{C} - \sum_{\boldsymbol{\omega}\in\mathcal{S}} a_{\boldsymbol{\omega}}\tilde{c}_{\boldsymbol{\omega}} e^{i\boldsymbol{\omega}\cdot \boldsymbol{\theta}_C}|,
 \end{equation}
where $\boldsymbol{\theta}_C$ corresponds to the near-Clifford circuit $C$, with each $(\theta_{i})_C$ parameter chosen so that the corresponding $\theta_i$ parameterised gates are Clifford.
This suffers from analogous problems in constructing the training set. 

In our experiments all three methods were attempted, however we found that more involved methods did not amount to a significant improvement in the mitigation results, while significantly increasing the computational burden. Therefore, we limit our analysis to the simplest method.

\subsection{Experiments}
\label{sec:mitigation}
To demonstrate the applicability of noise filtering methods, we give proof-of-principle implementations on current quantum hardware for circuit structures used in near-term algorithms.

The magnitude threshold for determining the most significant Fourier coefficients is calculated as 
\begin{equation}
T = B\,\underset{\boldsymbol\omega \in \tilde\S}{\text{median}}\,(|\tilde c_{\boldsymbol\omega}|)
\end{equation}
for some tunable constant $B$. Assuming the majority of
coefficients in the sampled frequency spectrum are due to noise (which in principle requires the resolved spectrum to be double the size of the noiseless one), and further that these are small and of similar magnitude, this allows us to select only the outliers and hence recover the signal coefficients. The CDR is performed with a simple linear model as in Equation~\ref{eq:cdr}. The training set $\C$ is generated by substituting for each target circuit the closest Clifford to every non-Clifford gate, except for $D$ randomly chosen gates.

As noted before, spectral noise filtering is expected to yield a landscape that has more similar features to the noiseless one, while a scaling technique like CDR is necessary to reduce the error in the magnitudes. Therefore, we focus on two quantities of interest that capture these two different scenarios.

For each noise mitigation experiment we report the similarity measure between the mitigated vector $C^\#$ and the noiseless vector $C$ obtained from a unitary simulator (Qiskit Aer's statevector simulator \cite{qiskit2021statevector}). 
The simplest measure of similarity between vectors of data points is the Euclidean distance $||C^\# - C||$, which has a practical interpretation as the root mean squared error between the data points. Furthermore, it is invariant (up to a constant) under the Fourier transform, due to Parseval's theorem, and therefore may be easily calculated from the spectrum.
We then choose to consider the additional measure of cosine similarity
\begin{equation}
        S := \frac{C^{\#\dagger}C}{||C^\#||\,||C||}
\end{equation}
This quantity is robust to global scaling of any one of the two vectors. The quantity can therefore be visualised as a scale-invariant measure of error: indeed, if both vectors are unit-normalised, cosine similarity is monotonically related to Euclidean distance: $||\frac{C^\# }{||C^\#||} - \frac{C}{||C||}|| = \sqrt{2 - 2S}$.

The experiments are performed on IBMQ superconducting devices.

\subsubsection{UCCSD}
\label{sec:uccsd}

\begin{table*}[htb!]
\centering
\begin{tabular}{ |p{3cm}||p{1cm}|p{1cm}|p{1cm}|p{1cm}||p{1cm}|p{1cm}|p{1cm}|p{1cm}| }
 \hline
 \multicolumn{1}{|c|}{}&\multicolumn{4}{|c|}{Cosine similarity}&\multicolumn{4}{|c|}{Euclidean distance} \\
 \hline
 Methods& A& B& C& D& A& B& C& D \\
 \hline
 \hline
 Raw&                                   0.824&                  0.664&                  0.868&                  0.881&                  0.448&                  0.539&                  0.448&                  0.542           \\
 \hline
 CDR&                                   0.823&                  0.665&                  0.864&                  0.873&                  0.319&                  0.462&                  0.331&                  0.396           \\
 \hline
 Filtering + CDR&               0.835&                  \bf{0.675}&     \bf{0.886}&     \bf{0.912}&     0.316&                  \bf{0.461}&     \bf{0.327}&     \bf{0.391}      \\
 \hline
 Thresholding + CDR&    \bf{0.841}&             0.672&                  0.871&                  0.909&                  \bf{0.314}&     0.462&                  0.329&                  \bf{0.391}      \\
 \hline
\end{tabular}
\caption{Effectiveness of landscape noise mitigation methods on 4-qubit UCC circuit for $H_2$. Singles excitations parameters are varied, doubles excitations are fixed to random values. A: \texttt{ibmq\_manila}, $O=ZIZI$; B: \texttt{ibmq\_manila}, $O=XXYY$; C: \texttt{ibmq\_lima}, $O=ZIZI$; D: \texttt{ibmq\_lima}, $O=XXYY$. Note that higher cosine similarity and lower Euclidean distance are better.}
\label{tab:ucc}
\end{table*}

We begin by analysing a UCCSD circuit for the H2 molecule under the STO-3g basis set, which requires four qubits under Jordan-Wigner representation. The circuit has two parameters corresponding to `singles' excitations, each appearing on two Pauli gadgets, and one parameter governing `doubles' excitations, composed of eight further gadgets. The circuit has a total CNOT depth of 52 and no depth-reduction compilation passes were applied. We choose to focus on the two singles parameters for spectral noise mitigation. Since they each control two single-qubit rotation gates, the noiseless support has a maximum frequency of 2, therefore we sample on a 7x7 grid giving a maximum frequency resolution of 3.

In the target circuit, the Pauli gadgets implementing the doubles excitations are chosen independently and set to fixed angles normally distributed around $\pi/4$ with standard deviation $0.1$. By randomising the fixed parameters instead of setting them to the optimum we aim to illustrate an improved profile of the noisy energy landscape for general parameter values, allowing to evaluate the mitigation method in a realistic setting where the optimal parameters are not known.
For CDR we set $D = 2$ and for thresholding constant we find that $B = 3$ is sufficient.

The results are shown in Table~\ref{tab:ucc}. Both cosine similarity and Euclidean distance are consistently improved by the combination of spectral noise mitigation and CDR compared to no mitigation and CDR alone. The improvement due to spectral techniques is especially notable for cosine similarity, while the decrease in Euclidean distance is more modest. This reflects the fact that Euclidean distance is highly dependent on relative scaling between the vectors, which is what CDR targets by construction.

\subsubsection{QAOA}
\label{sec:qaoa}

\begin{table*}[htb!]
\centering
\begin{tabular}{ |p{3cm}||p{1cm}|p{1cm}|p{1cm}|p{1cm}|p{1cm}||p{1cm}|p{1cm}|p{1cm}|p{1cm}|p{1cm}| }
 \hline
 \multicolumn{1}{|c|}{}&\multicolumn{5}{|c|}{Cosine similarity}&\multicolumn{5}{|c|}{Euclidean distance} \\
 \hline
 Methods& A& B& C& D& E& A& B& C& D& E\\
 \hline
 \hline
 Raw&                                   0.838&                  0.764&                  0.930&                  \bf{0.764}&             0.789&                          1.791&                  \bf{2.526}&             1.803&                  \bf{2.416}&     3.854\\
 \hline
 CDR&                                   0.863&                  0.765&                  0.934&                  0.751&                  0.827&                          1.376&                  2.581&                  1.468&                  2.811&                  3.229\\
 \hline
 Filtering + CDR&               0.868&                  \bf{0.766}&             0.940&                  0.752&                  0.833&                          1.371&                  2.581&                  1.465&                  2.811&                  3.225\\
 \hline
 Thresholding + CDR&    \bf{0.903}&             0.762&                  \bf{0.979}&             0.753&                  \bf{0.878}&                     \bf{1.345}&             2.585&                  \bf{1.451}&             2.811&                  \bf{3.199}\\
 \hline
\end{tabular}
\caption{Effectiveness of landscape noise mitigation methods on MaxCut QAOA circuit for random 3-reg graph. The parameters varied are the last two $\beta,\gamma$. Where $p=2$, the parameters n the first layer are set to $\pi/4$. The measurement operator is $O=H$. A: \texttt{ibmq\_guadalupe}, $p=1$, 8 qubits; B: \texttt{ibmq\_guadalupe}, $p=2$, 8 qubits; C: \texttt{ibmq\_sydney}, $p=1$, 8 qubits; D: \texttt{ibmq\_sydney}, $p=2$, 8 qubits; E: \texttt{ibmq\_guadalupe}, $p=1$, 16 qubits. Note that higher cosine similarity and lower Euclidean distance are better.}
\label{tab:qaoa}
\end{table*}

We then consider a quantum approximate optimization algorithm (QAOA) circuit \cite{farhi2014quantum} for the MaxCut problem on random 3-regular triangle-free graphs of various sizes, with $V$ and $E$ denoting the vertices and edges. The circuits involve repeated layers labelled by $p$, whereas each layer alternates between a unitary generated by the cost Hamiltonian $H_C = \sum_{(i,j)\in E} Z_{i}Z_{j}$ and a mixing Hamiltonian $H_D = \sum_{i\in V}  X_i$,  with the initial state an equal superposition over the computational basis. The $p=1$ layer circuit on $n$ qubits produces the parametrised state $e^{-i\beta H_D}e^{-i\gamma H_C}|+\>^{\otimes n}$. The benefits of this class of QAOA problems for benchmarking noise mitigation algorithms are twofold: first, the scaling in depth with system size is favourable compared to UCC, as long as the number of layers is kept small. Second, since the Hamiltonian is diagonal in the Z basis the cost function value can be calculated from the single shots without adding additional rotation gates. 
We compare the same noise mitigation techniques as UCC, on an 8-qubit QAOA circuit with $p=1,2$ on both the 16-qubit \texttt{ibmq\_guadalupe} and the 27-qubit \texttt{ibmq\_sydney} quantum computers. We also run a 16-qubit $p=1$ experiment on \texttt{ibmq\_guadalupe} as an example of a larger system. As target function we measure the expectation value of the Hamiltonian.

All experiments are run on two parameters. In the $p=2$ experiments we choose to focus on the parameters in the last layer, in which case we set both parameters of the first layer to $\pi/4$.
By using a lightcone argument one can then show that in all experiments that the frequencies will be limited to 2 for the mixing parameter $\beta$, and 5 for the problem parameter $\gamma$, which follows from the problem graphs being all 3-regular and triangle-free. Therefore, we sample on a 13x13 grid to ensure a resolution of 6 in frequency space. 

Training CDR on this system presents with the additional complication that setting all parameters to give Clifford gates leads to an expectation value of zero. Therefore, in order to obtain training data of sufficient magnitudes the CDR parameter $D$ is increased from 2 to 10, and the data is post-selected to only train on the largest magnitude datapoints. Due to the higher noise level we find that increasing the thresholding constant to $B = 10$ is more appropriate.

The results are shown in Table~\ref{tab:qaoa}. The performance of noise mitigation techniques is seen as highly dependent on the depth of the circuit. Circuits with $p=1$ show consistent improvements with noise mitigation and especially following spectral filtering. Interestingly, thresholding appears to be the most effective filtering method, which can be justified by the noiseless signal being sparse and much higher than the noise level, as shown in Figure~\ref{fig:qaoa_1p}. Increasing the system size to 16 qubits does not seem to affect the noise mitigation capability.
\begin{figure*}[ht!]
        \centering
        \includegraphics[width=\textwidth,center]{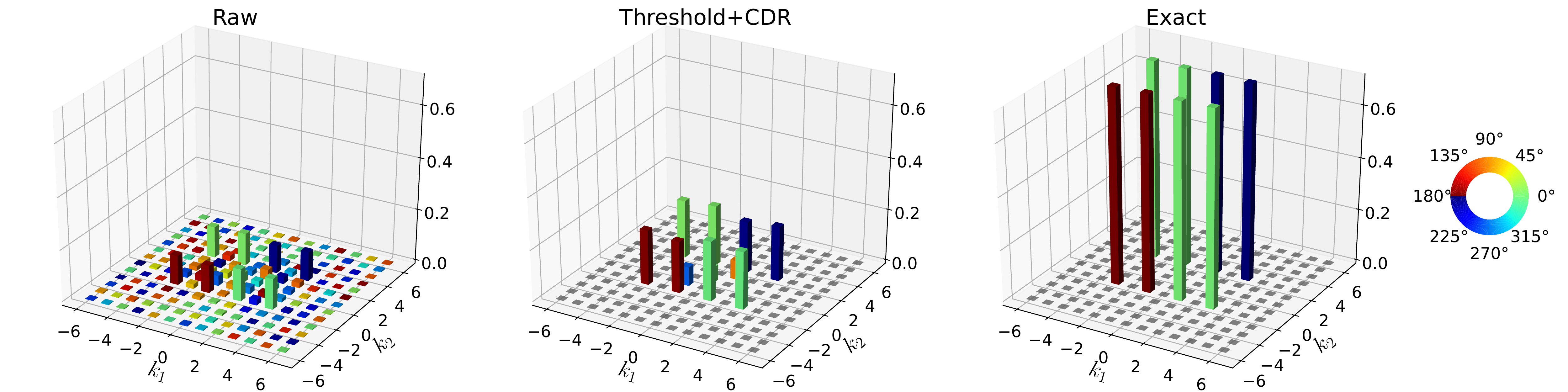}
        \caption{QAOA experiment for MaxCut of an 8-node random 3-regular graph, $p=1$, $O=H$. Magnitudes for Fourier coefficients are shown before and after filtering and CDR, vs exact. Colour represents the complex argument as shown by the colourwheel. Filtering method is thresholding with $B=10$. Run on \texttt{ibmq\_sydney}. Spectral filtering is seen to eliminate spurious frequencies, while CDR improves the magnitude of the coefficients, although only partially.}
\label{fig:qaoa_1p}
\end{figure*}

On the other hand, circuits with $p=2$ generally see little improvement in either measures of quality. This is due to the fitting subroutine in CDR failing to find an appropriate scaling factor due to Clifford data being extremely noisy. As a related effect, hardware noise causes the experimental signal to be considerably different from the noiseless one as can be seen in Figure~\ref{fig:qaoa_2p}. Overall this is a reminder that excessive noise can be an insormountable obstacle for any kind of error mitigation, whether classic or frequency-based.
\begin{figure*}[ht!]
        \includegraphics[width=\textwidth,center]{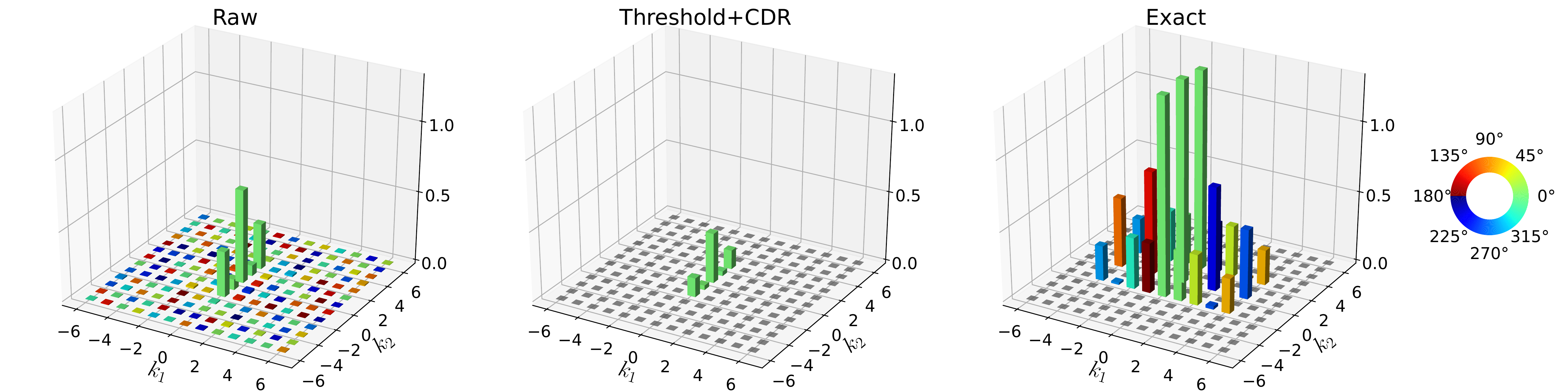}
        \caption{QAOA experiment for 8-node random 3-regular MaxCut, $p=2$, last layer params varied, first layer: $[\beta, \gamma] = [1/4, 1/4]$, $O=H$. Magnitudes for Fourier coefficients are shown before and after filtering and CDR, vs exact. Colour represents the complex argument as shown by the colourwheel. Filtering method is thresholding with $B=10$. Run on \texttt{ibmq\_sydney}. Note that both filtering and CDR fail to mitigate for noise since the raw signal is significantly different from the exact one.}
        \label{fig:qaoa_2p}
\end{figure*}

\section{Discussion}
\label{sec:discussion}

We investigated the quantum noise-induced effects on the Fourier spectrum of parameterised quantum circuits, both in real hardware and through theoretical analysis of common noise models. This led us to design methods for digital error mitigation based on filtering noise-induced modes or thresholding most-significant coefficients. For specific circuit classes, in Appendix~\ref{sec:AppendixFilter} show that
if the noise is below a certain threshold one may perfectly reconstruct the landscape from noisy circuit evaluations. While these circuits do not exhibit quantum advantage, they illustrate that denoising in the frequency domain may recover error-free estimates of expected values from hardware data without performing any classical simulation. However, these methods, among others derived from them, can be employed for more general circuits. In particular we show how to combine denoising in Fourier space with existing error mitigation based on learning from Clifford circuits. The latter targets incoherent noise, while the former reduces the effect of gate-dependent, correlated errors, including coherent contributions. We implement these methods in different quantum hardware using application-specific circuits employed in QAOA and VQE applications. The compound effect of the Fourier-based error mitigation strategies reduces not only the magnitude of errors but also the noise-induced distortion in the energy landscape. 

The most glaring disadvantage of Fourier methods is the large sampling overhead required. We discuss the resource requirements, as well as possible ways to reduce it, in Appendix \ref{app:resource}. There we discuss also how to approach the problem of incomplete or nonuniform sampling, which might also help ameliorate the resource requirement and make the methods more flexible.

While here we only explored combining spectral methods with CDR, one may adapt our approach for other techniques such as zero noise extrapolation \cite{temme2017error, ying2017efficient} and probabilistic error cancellation \cite{temme2017error, endo2018practical}, which similarly address decoherence effects. More recent work \cite{czarnik2020error} that improves the efficiency of CDR could also be employed to reduce the total number of training circuits required. Furthermore, investigating the effect of noise on the Fourier coefficients might also give new insight into the extent to which hardware noise satisfies some of the in-built assumptions of error mitigation protocols. For example, digital versions of zero noise extrapolation involve increasing the noise level by introducing additional gates~\cite{giurgica2020digital}. Fourier methods could then be useful to determine appropriate type of such gate folding to validate the different fitting functions used in the extrapolation.

Beyond this, our theoretical contributions may also be of independent interest in building data-encoding models for quantum machine learning and investigating computational complexity of classical optimisation in variational quantum algorithms. Recent work \cite{bittel2021training} shows that the classical optimisation step for variational quantum algorithms is NP-hard. Essentially this relies on the NP-hardness of finding extremal points of specific classes of trigonometric polynomials and constructing corresponding circuits. Since our work explores this connection between trigonometric polynomials and circuit structures, it may also be useful to investigate worst-case or even average-case hardness in the optimisation step for more restricted classes of ans\"atze.

A major conceptual contribution of this work is reframing the study of quantum landscapes in the familiar terms of spectral analysis. As such, the theory developed here may be a fruitful approach to the recently coined quantum landscape theory~\cite{larocca2021theory}. Spectral analysis of the landscape can be employed to make statements about its global characteristics; for example, the presence or absence of barren plateaus and narrow gorges~\cite{cerezo2021cost}. Furthermore, within this framework noise channels can be included naturally, making it amenable to study phenomena such as noise-induced barren plateaus~\cite{wang2020noise} and alterations of the quantum training landscape~\cite{wang2021can}.

Our work also opens up new directions to explore the vast range of signal processing techniques for the purpose of quantifying and reducing quantum noise. We tackled these topics from the perspective of digitised quantum computations, however similar analysis may be applied at the level of pulse sequences. In fact, the latter connects with experimental techniques used in the tuning and calibration of individual gates such as noise spectroscopy~\cite{alvarez2011measuring, yuge2011measurement, young2012qubits}.

In this work, we focused on parameterised circuits typically used in NISQ applications. The parameter domain and the gates employed by such circuits make it natural to employ $U(1)$ Fourier transforms. However, one can also consider families of circuits with structures that are embeddings of different groups, in which case one might look at Fourier transforms with respect to finite abelian (or even non-abelian) groups.
For example, randomised benchmarking involves randomly sampling gates in the Clifford group to form sequences of increasing depth, making it suitable to use discrete spectral methods to analyse the effect of gate dependent errors~\cite{merkel2018randomized}. Moreover, many algorithms such as phase estimation involve repeated applications of similar circuit structures, so a noise analysis in the Fourier domain may also prove to be fruitful.
Overall the work presented here introduces a flexible framework that might extend to numerous other methods of noise characterisation and mitigation, but also to the analysis of (variational) quantum algorithms.

\section{Acknowledgements}
\label{sec:acknowledgements}

The authors would like to thank Steven Herbert, Nathan Fitzpatrick and David Muñoz Ramo for useful feedback and discussions. We thank Dan Mills and Steven Herbert for helpful comments on the final draft of this manuscript.
EF and IR acknowledge the support of the UK government department for Business, Energy and Industrial Strategy through the UK national quantum technologies programme. EF acknowledges the support of an industrial CASE (iCASE) studentship, funded by the UK Engineering and Physical Sciences Research Council (grant EP/T517665/1), in collaboration with the university of Strathclyde, the National Physical Laboratory, and Cambridge Quantum Computing. Code and data used in this project is available at \url{https://github.com/e-font/fourier-noise}.

\bibliography{paper}

\clearpage
\onecolumngrid
\appendix
\vspace{0.5in}

\begin{center}
  { \bf \MakeUppercase{Appendix}}
\end{center}

\section{Clifford path decomposition results} 
\label{appendix:Cliff}

The unitary channel corresponding to $Z$-rotation by $\theta$ angle, $Z(\theta) = e^{- i \theta_j Z/2}$, is given by 
\begin{equation}
\label{eq:A1}
 {\cal{Z}}(\theta_j)(\rho) = p_{\cal{I}}(\theta_j) \rho + p_{\cal{Z}}(\theta_j) Z\rho Z\hc + p_{\cal{S}}(\theta_j) S\rho S\hc
\end{equation}
with $Z$ the Pauli operator, $S=\sqrt{Z}$ and quasi-probabilities given by
\begin{align}
        p_{\cal{I}}(\theta) & = \frac{1}{2} (-\sin{\theta} + \cos{\theta}+1)\\
        p_{\cal{Z}}(\theta) & = \frac{1}{2} (-\sin{\theta} - \cos{\theta}+1)\\
        p_{\cal{S}} (\theta) & = \sin{\theta}
        \label{appendix:coeffZ}
\end{align}
and similarly,
\begin{equation}
         {\cal{Z}}(\theta_j)(\rho) =  \mathcal{C}_0 + e^{i\theta} \mathcal{C}_{1} + e^{-i\theta} \mathcal{C}_{-1}
\end{equation}
where the process modes are given by linear combinations of Clifford unitary channels as follows
\begin{align}
        \mathcal{C}_0& = \frac{1}{2}\mathcal{I} + \frac{1}{2}\mathcal{Z}\\
        \mathcal{C}_{1}& = \frac{1+i}{4}\mathcal{I} - \frac{1- i}{4}\mathcal{Z}- \frac{i}{2}\mathcal{S} \\
        \mathcal{C}_{-1}& =  \frac{1-i}{4}\mathcal{I} - \frac{1+ i}{4}\mathcal{Z}+ \frac{i}{2}\mathcal{S}       
        \label{appendix:cliff}
\end{align}

A general Pauli gadget takes the form $U(\theta_j) = e^{-iP_j\theta_j/2}$ for some $n$ qubit Pauli operator $P_j$. Therefore, as a quantum channel $\mathcal{U}(\theta_j)(\cdot) := U(\theta_j) ( \, \cdot \, ) U(\theta_j)\hc$ it decomposes into
\begin{equation}
        \U(\theta_j) = \mathcal{C}_0^{(j)} + e^{i\theta_j} \mathcal{C}_1^{(j)} + e^{-i\theta_j} \mathcal{C}_{-1}^{(j)}
\end{equation}
where, similarly as in the decomposition of $\mathcal{Z}(\theta)$, here the Clifford process modes are
\begin{align}
        \mathcal{C}_{0}^{j} &= \frac{1}{2}\left(\mathcal{U}(0) + \mathcal{U}(\pi) \right) \\
        \mathcal{C}_{1}^{j} &=  \frac{1+i}{4}\mathcal{U}(0) - \frac{1- i}{4}\mathcal{U}(\pi)- \frac{i}{2}\mathcal{U}(\pi/2) \\
        \mathcal{C}_{-1}^{j}&= \frac{1-i}{4}\mathcal{U}(0) - \frac{1+ i}{4}\mathcal{U}(\pi)+ \frac{i}{2}\mathcal{U}(\pi/2)       
\end{align}
where (up to a global phase that does not affect the corresponding unitary channels), $U(0) = I$, $U(\pi) =P_j$ and $U(\pi/2) =  \frac{1}{\sqrt{2}}(I - i P_j)$. 
\begin{lemma}
        Suppose ${\bf{j}}_1 = (j^{1}_1,... j^1_n)$, ${\bf{j}}_2=(j^2_1,..., j^2_n)$ are binary strings with Pauli operators $P_{\bf{j}_1} = Z^{j_1^1}\otimes ...\otimes Z^{j^{1}_n}$ and similarly for $P_{\bf{j}_2}$. Then,
        \begin{equation}
                \C_{\pm 1}^{\bf{j}_1} \circ \C_{0}^{\bf{j_2}} \circ \C_{0}^{\bf{j_2}\oplus \bf{j_1}} = 0
        \end{equation}
\end{lemma}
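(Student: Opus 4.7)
The plan is to recast each process mode as a ``sandwich'' built from the spectral projectors $\Pi_\pm^{\bf j} := \tfrac{1}{2}(I \pm P_{\bf j})$ of the Pauli operator $P_{\bf j}$. Starting from the explicit expressions in Appendix~\ref{appendix:Cliff} and expanding $U(\pi/2) = (I - iP_{\bf j})/\sqrt{2}$, a short routine calculation should yield
\begin{align*}
\C_0^{\bf j}(\rho) &= \Pi_+^{\bf j}\,\rho\,\Pi_+^{\bf j} + \Pi_-^{\bf j}\,\rho\,\Pi_-^{\bf j}, \\
\C_{\pm 1}^{\bf j}(\rho) &= \Pi_\mp^{\bf j}\,\rho\,\Pi_\pm^{\bf j}.
\end{align*}
In this form $\C_0^{\bf j}$ keeps the block-diagonal part of $\rho$ with respect to the $P_{\bf j}$ eigenspace decomposition, while $\C_{\pm 1}^{\bf j}$ keeps only one of the two off-block-diagonal coherences between the $\pm$ eigenspaces.

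\textbf{Key algebraic facts.} Because each $P_{\bf j}$ is a tensor product of $Z$'s, all such operators commute pairwise, and so do their spectral projectors. Furthermore, $P_{\bf j_2}\,P_{\bf j_2\oplus j_1} = P_{\bf j_1}$ since $Z^a Z^b = Z^{a \oplus b}$ entrywise on $\{0,1\}$ exponents, with no spurious global sign. Consequently any operator commuting with both $P_{\bf j_2}$ and $P_{\bf j_2\oplus j_1}$ automatically commutes with their product $P_{\bf j_1}$, and hence is block-diagonal with respect to $\Pi_\pm^{\bf j_1}$.

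\textbf{Putting it together.} Applying $\C_0^{\bf j_2 \oplus j_1}$ projects any $\rho$ onto the part commuting with $P_{\bf j_2 \oplus j_1}$; composing on the left with $\C_0^{\bf j_2}$ yields an operator commuting with both $P_{\bf j_2}$ and $P_{\bf j_2\oplus j_1}$, hence with $P_{\bf j_1}$. In particular, the image of $\C_0^{\bf j_2} \circ \C_0^{\bf j_2 \oplus j_1}$ is block-diagonal with respect to $\Pi_\pm^{\bf j_1}$. But $\C_{\pm 1}^{\bf j_1}$ sandwiches its input between the two orthogonal projectors $\Pi_\mp^{\bf j_1}$ and $\Pi_\pm^{\bf j_1}$; applied to a block-diagonal operator, the orthogonality relation $\Pi_+^{\bf j_1} \Pi_-^{\bf j_1} = 0$ forces the composite to vanish.

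\textbf{Main obstacle.} The only nontrivial step is the first one -- verifying the clean sandwich form for $\C_{\pm 1}^{\bf j}$ from the quasi-probability expressions. Once that identification is in place the remainder is pure orthogonality of spectral projectors. Care is also needed to confirm $P_{\bf j_2}\,P_{\bf j_2\oplus j_1} = P_{\bf j_1}$ without a sign correction, which is the one point where the restriction to diagonal (i.e.\ $Z$-only) Paulis enters the argument essentially.
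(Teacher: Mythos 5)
Your proposal is correct, and it takes a genuinely different route from the paper's own argument. The paper works entirely at the superoperator level: it writes $\C_0^{\bf{j}}=\frac{1}{2}(\I+\P_{\bf{j}})$ with $\P_{\bf{j}}$ the conjugation channel, uses $\P_{\bf{j}_2}\circ\P_{\bf{j}_1\oplus\bf{j}_2}=\P_{\bf{j}_1}$ to expand $\C_0^{\bf{j}_2}\circ\C_0^{\bf{j}_2\oplus\bf{j}_1}=\frac{1}{4}(\I+\P_{\bf{j}_1}+\P_{\bf{j}_2}+\P_{\bf{j}_1\oplus\bf{j}_2})$, and then kills the composite with $\C_{\pm1}^{\bf{j}_1}$ by an explicit term-by-term cancellation that relies on the channel identities $\U(\pi/2)+\U\hc(\pi/2)=\U(0)+\U(\pi)$ and $\U(\pi/2)\circ\U(\pi)=\U\hc(\pi/2)$. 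Your version instead identifies the process modes in Kraus ``sandwich'' form, $\C_0^{\bf{j}}(\rho)=\Pi_+^{\bf{j}}\rho\Pi_+^{\bf{j}}+\Pi_-^{\bf{j}}\rho\Pi_-^{\bf{j}}$ and $\C_{\pm1}^{\bf{j}}(\rho)=\Pi_\mp^{\bf{j}}\rho\Pi_\pm^{\bf{j}}$ (both of which I have checked against the Appendix~\ref{appendix:Cliff} expressions and which do hold, using only $P^2=I$), so that $\C_0$ is a pinching onto the $P_{\bf{j}}$-commutant and $\C_{\pm1}$ extracts an off-diagonal coherence block. The vanishing then follows from $\Pi_+^{\bf{j}_1}\Pi_-^{\bf{j}_1}=0$ once one knows the image of $\C_0^{\bf{j}_2}\circ\C_0^{\bf{j}_2\oplus\bf{j}_1}$ commutes with $P_{\bf{j}_1}=P_{\bf{j}_2}P_{\bf{j}_2\oplus\bf{j}_1}$. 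What your approach buys is a structural explanation of \emph{why} the composition vanishes (a coherence-extracting map applied to a block-diagonal operator), replacing the paper's somewhat opaque cancellation; what the paper's approach buys is that it stays in the same channel-algebra language used throughout Appendix~\ref{sec:proof-thm-1}, where the identity is consumed. Your closing remark is also well placed: the $Z$-only restriction is exactly what guarantees $P_{\bf{j}_2}$ and $P_{\bf{j}_2\oplus\bf{j}_1}$ commute and multiply to $P_{\bf{j}_1}$ without a phase (though a residual $\pm$ sign would not in fact disturb the block-diagonality argument).
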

\begin{proof}
        First, $\C_0^{\bf{j}_2} \circ \C_{0}^{\bf{j}_2\oplus \bf{j}_1} = \frac{1}{4} (\I+ \P_{\bf{j}_2} )\circ (\I+ \P_{\bf{j}_1\oplus \bf{j}_2}) =\frac{1}{4} (\I + \P_{\bf{j}_1} +\P_{\bf{j}_2} + \P_{\bf{j}_1\oplus \bf{j}_2})$. Then, we observe that $\U(\pi/2) + \U\hc(\pi/2) = \U_j(0) + \U_j(\pi)$ and furthermore $\U_j(\pi/2) \U_j(\pi) = \U_j\hc(\pi/2) $. The latter trivially follows from $3\pi/2 = 2\pi - \pi/2$ and the former follows from
        \begin{align}
                \U(\pi/2)(\rho) &= \frac{1}{2} (I-iP_j)(\rho)(I+iP_j)  \\
                &=\frac{1}{2} ( \rho  + P_j(\rho)P_j + i[\rho P_j - P_j\rho]).
        \end{align}
        Therefore, $C_{\pm 1}^{\bf{j}_1} \circ  \C_{0}^{\bf{j_2}} \circ \C_{0}^{\bf{j_2}\oplus \bf{j_1}} = \frac{1}{16} (\I - \P_{\bf{j}_{1}})\circ (\I + \P_{\bf{j}_1} +\ P_{\bf{j}_2} + \P_{\bf{j}_1\oplus \bf{j}_2}) \pm \frac{i}{4} (\frac{1}{2} (\I + \P_{\bf{j}_1} + \P_{\bf{j}_2} +\P_{\bf{j}_1\oplus\bf{j}_2} ) - \frac{1}{2}\U_{j_1}(\pi/2)) \circ (\I + \P_{\bf{j}_1} +\P_{\bf{j}_2} + \P_{\bf{j}_1\oplus \bf{j}_2})$. The first term is zero due to the permutations, whilst the third becomes $\U_{j_1}(\pi/2) + \U_{j_1}\hc(\pi/2)  + \U_{j_1}(\pi/2) \circ\P_{\bf{j}_2} +\U_{j_1}\hc(\pi/2)\circ \P_{\bf{j}_2} = \I+ \P_{\bf{j}_1}  +  (I +\P_{\bf{j}_1}) \circ \P_{\bf{j}_2} =  \I + \P_{\bf{j}_1} + \P_{\bf{j}_2} + \P_{\bf{j}_1\oplus \bf{j}_2}$ and therefore cancels out the second term giving the result.  
\end{proof}

\section{Trigonometric form of expectation values}
\label{app:trig}
First consider the following Lemma:

\begin{lemma}[Transformation of Pauli operators by Z-rotations]
\label{lem:ZRotTransformation}
Given a conjugate Z-rotation channel $\mathcal{Z}^\dagger_\theta$ acting on qubit k, the result of its application onto a Pauli operator $P$ is
\begin{equation}
\mathcal{Z}^\dagger_\theta(P) = \begin{cases}
                                                                                P & \text{if}\; \sigma^{(k)} = \sigma_0, \sigma_3, \\
                                                                                \cos(\theta)P \pm \sin(\theta)P' & \text{otherwise}.\\
                                                                                \end{cases},
\end{equation}
where $P'$ is another Pauli operator and $\sigma^{(k)}$ denotes the $k$-th single-qubit Pauli composing the operator $P$.
\end{lemma}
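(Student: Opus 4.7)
The plan is to reduce the statement to a single-qubit computation on the $k$-th tensor factor and then split into two cases according to whether $\sigma^{(k)}$ commutes or anticommutes with $Z$.

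First I would factorise $P = Q \otimes \sigma^{(k)}$ with $Q$ a Pauli string on the remaining qubits, and observe that since $Z(\theta) = e^{-i\theta Z_k/2}$ acts trivially on every tensor factor other than the $k$-th, the conjugation reduces to
\begin{equation}
\mathcal{Z}^\dagger_\theta(P) = Z(\theta)\hc P Z(\theta) = Q \otimes \bigl(e^{i\theta Z/2}\,\sigma^{(k)}\,e^{-i\theta Z/2}\bigr),
\end{equation}
so it is enough to evaluate a one-qubit expression on the $k$-th qubit. In the first case $\sigma^{(k)} \in \{\sigma_0, \sigma_3\}$ commutes with $Z$, the two exponentials cancel, and $P$ is left invariant, handling that branch immediately.

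In the remaining case $\sigma^{(k)} \in \{\sigma_1, \sigma_2\}$ anticommutes with $Z$. I would invoke the standard identity that whenever $A^2 = \mathbb{I}$ and $\{A, B\}=0$,
\begin{equation}
e^{i\theta A/2}\,B\,e^{-i\theta A/2} \;=\; B\,e^{-i\theta A} \;=\; \cos\theta\, B + i\sin\theta\, AB,
\end{equation}
which one gets by commuting $B$ through one exponential and then expanding $e^{-i\theta A} = \cos\theta\,\mathbb{I} - i\sin\theta\,A$. Substituting $A=Z$, $B=\sigma^{(k)}$ and using $ZX = iY$, $ZY = -iX$ yields $\cos\theta\,\sigma^{(k)} \mp \sin\theta\,\sigma^{(k)\prime}$, where $\sigma^{(k)\prime}$ is the other element of $\{X,Y\}$; tensoring back with $Q$ produces the claimed $\cos\theta\, P \pm \sin\theta\, P'$, with $P'$ obtained from $P$ by swapping $X \leftrightarrow Y$ on the $k$-th factor.

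There is no real conceptual obstacle here: the only care needed is tracking the sign of $Z\sigma^{(k)}$, which is precisely what the $\pm$ in the statement absorbs. The entire argument is a direct consequence of single-qubit Pauli (anti)commutation together with the locality of the $Z$-rotation.
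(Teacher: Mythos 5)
Your proof is correct. It differs from the paper's in the mechanism used for the anticommuting case: the paper's proof plugs the Pauli $\sigma^{(k)}$ into the quasi-probability Clifford decomposition of Eq.~\eqref{eq:A1}, $\mathcal{Z}(\theta)(\rho) = p_{\mathcal{I}}\rho + p_{\mathcal{Z}}Z\rho Z^{\dagger} + p_{\mathcal{S}}S\rho S^{\dagger}$, and reads off the trigonometric coefficients from the fact that $p_{\mathcal{I}} - p_{\mathcal{Z}} = \cos\theta$ and $p_{\mathcal{S}} = \sin\theta$, whereas you derive the action from scratch via the standard conjugation identity $e^{i\theta A/2}Be^{-i\theta A/2} = \cos\theta\,B + i\sin\theta\,AB$ for involutions $A$ anticommuting with $B$. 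Both arguments reduce to the same single-qubit computation and the same case split on whether $\sigma^{(k)}$ commutes with $Z$; yours is more self-contained and does not depend on the Appendix~\ref{appendix:Cliff} decomposition, while the paper's choice keeps the lemma tied to the Clifford process-mode machinery that the rest of the paper is built on. Your sign bookkeeping ($ZX=iY$, $ZY=-iX$, giving $\cos\theta\,\sigma^{(k)} \mp \sin\theta\,\sigma^{(k)\prime}$) matches the paper's statement that $\mathcal{S}^{\dagger}(\sigma_1) = -\sigma_2$ and $\mathcal{S}^{\dagger}(\sigma_2) = \sigma_1$, and the $\pm$ in the lemma absorbs it either way.
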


\begin{proof}
Let us focus on a single-qubit Pauli operator $\sigma_i$. Equation~\ref{eq:A1} described the result of applying the conjugate Z channel onto this operator.
It is easy to show that $\mathcal{Z}(\sigma_i) = \mathcal{S}^\dagger(\sigma_i) = \sigma_i$ if $i = 0, 3$, and that $\mathcal{Z}(\sigma_i) = -\sigma_i$ for $i = 1, 2$, $\mathcal{S}^\dagger(\sigma_1) = -\sigma_2$, $\mathcal{S}^\dagger(\sigma_2) = \sigma_1$. Therefore, we get the following cases:
\begin{equation}
\mathcal{Z}^\dagger_\theta(\sigma_i) = \begin{cases}
                                                                                \sigma_i & \text{if}\; i = 0, 3, \\
                                                                                \cos(\theta)\sigma_i \pm \sin(\theta)\sigma_i' & \text{otherwise}.\\
                                                                                \end{cases},
\end{equation}
where we labelled $\sigma_i'$ as the unsigned Pauli operator one gets after applying the $\mathcal{S}^\dagger$ channel.

As the Z-rotation channel only acts on a single qubit, we can in general write the action on any Pauli operator in the same form, with the outcome depending on the single-qubit Pauli that the rotation acts upon. Therefore, the result follows.
\end{proof}

This can be used to prove the following result:

\begin{lemma}[Trigonometric form of expectation values]
        Consider a parameterised quantum unitary of the form 
        \begin{equation}
                U(\boldsymbol\theta) = V_0 U_1(\theta_1) V_1 U_2(\theta_2) V_2 \, ...\, U_m(\theta_m)V_m
        \end{equation}
        and the state resulting from applying the corresponding channel to a fixed initial state: $\rho(\boldsymbol\theta) =  \U(\boldsymbol\theta)(\rho_0)$.
        Now suppose each of the one-parameter unitaries is generated by Pauli operators $U(\theta_j) = e^{-i P_j \theta_j/2}$, and in addition the unparameterised unitaries $V_i$ are Clifford and the input state $\rho_0$ is a stabiliser state. 
        Then the Fourier coefficients of the trigonometric polynomial $\Tr(\rho(\boldsymbol{\theta})P)$ where $P$ is a Pauli operator are 
        \begin{equation}
                c_{\bf{k} }  \in \{\pm 1/2^{||{\bf{k}}||_1}, \pm i 1/2^{||{\bf{k}}||_1}, 0\} .
        \end{equation} 
        where ${\bf{k}} \in \{-1,0,1\}^{m}$.
        Furthermore, the expansion of $\tr(\rho(\boldsymbol{\theta}) P)$ in terms of trigonometric functions becomes particularly simple:
        \begin{equation}
                \tr(\rho(\boldsymbol{\theta}) P) = \sum_{\bf{k}}  c'_{\bf{k}} \phi_{\bf{k}} (\boldsymbol{\theta})
        \end{equation}
        with $c'_{\bf{k}} \in \{0, \pm 1\}$ and $\phi_{\bf{k}}(\boldsymbol{\theta}) = \phi_{k_1}(\theta_1) \phi_{k_2}(\theta_2) ... \phi_{k_m}(\theta_m)$ where $\phi_{0}(\theta)=1$, $\phi_{1}(\theta) = \cos{\theta}$ and $\phi_{-1} (\theta) =\sin{\theta}$.
\end{lemma}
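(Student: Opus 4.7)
The plan is to work in the Heisenberg picture: rewrite
\[
\Tr(\rho(\boldsymbol\theta)\, P) \;=\; \Tr\bigl(\rho_0\, U(\boldsymbol\theta)^\dagger\, P\, U(\boldsymbol\theta)\bigr)
\]
and peel off the factors of $U(\boldsymbol\theta)^\dagger$ one at a time, starting from the outside. Each Clifford $V_i$ introduces no $\boldsymbol\theta$ dependence and sends Paulis to signed Paulis under conjugation, so it merely relabels the operator travelling backwards through the circuit. The only source of $\theta_j$ dependence is therefore the conjugation by the Pauli gadget $U_j(\theta_j) = e^{-iP_j\theta_j/2}$.

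For each such gadget I first extend Lemma~\ref{lem:ZRotTransformation} from $Z$-rotations to arbitrary Pauli generators: for any Pauli $Q$,
\begin{equation*}
U_j(\theta_j)^\dagger\, Q\, U_j(\theta_j) \;=\; \begin{cases} Q, & [Q,P_j]=0, \\[2pt] \cos(\theta_j)\, Q + \sin(\theta_j)\, (-iQP_j), & \{Q,P_j\}=0, \end{cases}
\end{equation*}
where $-iQP_j$ is itself a Hermitian Pauli whenever $P_j$ and $Q$ anticommute. The extension is immediate because any Pauli $P_j$ can be conjugated to a single-qubit $Z$ by a Clifford, matching the stated lemma verbatim. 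Iterating through all $m$ gadgets and the interleaved Cliffords unfolds the Heisenberg-evolved operator into a finite sum over ``Clifford branching paths'',
\begin{equation*}
U(\boldsymbol\theta)^\dagger P\, U(\boldsymbol\theta) \;=\; \sum_{\mathbf{k}\in\{-1,0,1\}^m} \epsilon_\mathbf{k}\, \phi_\mathbf{k}(\boldsymbol\theta)\, Q_\mathbf{k},
\end{equation*}
with $\epsilon_\mathbf{k}\in\{-1,0,+1\}$ and each $Q_\mathbf{k}$ a Pauli operator: here $k_j=0$ records that the $j$-th gadget commuted with the current Pauli and so was inert, while $k_j=\pm 1$ records the cosine/sine branch taken at an anticommuting step. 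Each $\mathbf{k}$ is produced by at most one branch, so the sum carries no redundancy.

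The stabiliser hypothesis on $\rho_0$ then yields the $\phi$-basis part of the claim. For any Pauli $Q$ one has $\Tr(\rho_0 Q)\in\{0,\pm 1\}$, the value being $\pm 1$ exactly when $\pm Q$ lies in the stabiliser group of $\rho_0$, whence
\[
c'_\mathbf{k} \;:=\; \epsilon_\mathbf{k}\, \Tr(\rho_0 Q_\mathbf{k}) \;\in\; \{0,\pm 1\}.
\]
To pass to complex exponentials I substitute $\cos\theta=\tfrac12(e^{i\theta}+e^{-i\theta})$ and $\sin\theta = -\tfrac{i}{2}(e^{i\theta}-e^{-i\theta})$ into each $\phi_\mathbf{k}$, so that $\phi_\mathbf{k}$ expands as $2^{\|\mathbf{k}\|_1}$ complex exponentials, each with coefficient of modulus $1/2^{\|\mathbf{k}\|_1}$ and phase in $\{\pm 1,\pm i\}$. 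The Fourier support is thereby contained in $\{-1,0,1\}^m$ with coefficients of the stated form.

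I expect the main obstacle to lie in this final Fourier step: a given frequency $\boldsymbol\omega$ collects contributions from every $\mathbf{k}$ with $\operatorname{supp}(\mathbf{k}) = \operatorname{supp}(\boldsymbol\omega)$, and one has to argue that the resulting sum of $\pm 1,\pm i$ weights still lands cleanly in $\{0,\pm 1/2^{\|\boldsymbol\omega\|_1},\pm i/2^{\|\boldsymbol\omega\|_1}\}$ rather than accumulating. I would handle this by a sibling-pairing analysis on the branching tree: two siblings at an anticommuting step produce final Paulis differing by the conjugated image of $iP_j$, and requiring both to lie in $\pm\operatorname{Stab}(\rho_0)$ would generically force an imaginary Pauli into the stabiliser, which is impossible. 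Propagating this rigidity through the tree, while tracking how the pairwise anticommutation relations among the $P_j$ dictate which sibling configurations survive, should collapse the Fourier sum to at most a single surviving branch per frequency of the required form.
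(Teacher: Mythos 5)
Your argument tracks the paper's own proof almost exactly: pass to the Heisenberg picture, reduce every Pauli gadget to the single branching rule of Lemma~\ref{lem:ZRotTransformation} (the paper does this by extracting a one-qubit $Z$-rotation plus absorbed Cliffords from each gadget; you do it by conjugating the generator, which is the same reduction), unfold $\U^\dagger(\boldsymbol\theta)(P)$ into a sum $\sum_{\mathbf k}\pm\phi_{\mathbf k}(\boldsymbol\theta)Q_{\mathbf k}$ over signed Paulis, and use $\Tr(\rho_0 Q)\in\{0,\pm1\}$ for stabiliser $\rho_0$. This cleanly establishes the second claim, $c'_{\mathbf k}\in\{0,\pm1\}$, and your observation that distinct branches give distinct labels $\mathbf k$ (induct on the first index where they differ) is correct.

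The one genuine gap is in the final Fourier step, and to your credit you locate it precisely — but the sibling-pairing repair you sketch cannot work, because the coefficient claim is false as stated. Your pairing argument shows that if two surviving branches $\mathbf k,\mathbf k'$ share a support and their sine-sets differ in an \emph{odd} number of positions, then $Q_{\mathbf k}Q_{\mathbf k'}$ carries an odd power of $i$, is anti-Hermitian, and hence the two Paulis cannot both lie in $\pm\mathrm{Stab}(\rho_0)$ (elements of a stabiliser group commute, so products of stabilisers are Hermitian). But when the sine-sets differ in an \emph{even} number of positions the product is Hermitian and both branches can survive. The paper's own two-qubit UCC example (Sec.~\ref{sec:ucc-circuit}) realises this: $P_1=XY$, $P_2=YX$, $O=ZI$, $\rho_0=|00\rangle\langle 00|$ gives $U^\dagger(ZI)U=\cos\theta_1\cos\theta_2\, ZI-\sin\theta_1\sin\theta_2\, IZ+\dots$, so both branches $(+1,+1)$ and $(-1,-1)$ survive and the expectation value is $\cos(\theta_1+\theta_2)$, whose unique Fourier coefficient at $\mathbf k=(1,1)$ is $1/2\notin\{0,\pm1/4,\pm i/4\}$. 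So no argument can collapse the sum to one branch per frequency; the correct statement is only that $c_{\boldsymbol\omega}$ is a sum of up to $2^{\|\boldsymbol\omega\|_1}$ terms of modulus $2^{-\|\boldsymbol\omega\|_1}$ and phase in $\{\pm1,\pm i\}$ (all real or all imaginary for a fixed support, by your parity observation). You should know that the paper's published proof has the identical unacknowledged gap: it expands each monomial into exponentials and reads off the coefficients without accounting for collisions between monomials sharing a support.
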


\begin{proof}[Proof]
        It is most direct to prove first that the expectation value $\tr(\rho(\boldsymbol{\theta}) P)$ is in the given trigonometric form. To do this, first consider that every parameterised operator $U(\theta_j) = e^{-i P_j \theta_j/2}$ is a Pauli gadget, and as such it can be decomposed into one single-qubit parameterised Z-rotation and Clifford gates, which can be absorbed into the fixed gates. Therefore, the circuit consists entirely of fixed Clifford gates $V'$ and parameterised single-qubit Z-rotations:
        \begin{equation}
                U(\boldsymbol\theta) = V'_0 Z_1(\theta_1) V'_1 Z_2(\theta_2) V'_2 \, ...\, Z_m(\theta_m)V'_m.
        \end{equation}
        Now consider the channel form of the unitary as it appears in the calculation of the expectation value $\tr(P \U(\boldsymbol\theta)(\rho_0))$, which in the Heisenberg picture becomes $\tr(\U^{\dagger} (\boldsymbol\theta) (P) \rho_0)$.
        Every step of the calculation of the expectation value then consists in the successive application of either a conjugate Z-rotation channel or a Clifford channel. Clifford channels by definition normalise the Pauli group and hence turn a Pauli operator into another Pauli operator. On the other hand according to Lemma~\ref{lem:ZRotTransformation} Z-rotation channels may, depending on the operator, produce a linear combination of operators with trigonometric coefficients.
        Therefore, at the end of the procedure, one is left with a sum of Pauli operators
        \begin{equation}
                \U^{\dagger} (\boldsymbol\theta) (P) = \sum_{\bf{k}\in \boldsymbol{\Lambda}} \pm \phi_{\bf{k}} P_{\bf{k}}
        \end{equation}
        where $\boldsymbol{\Lambda}=\{-1, 0, 1\}^m$ and
        \begin{equation}
                \phi_{\bf{k}}(\boldsymbol{\theta}) = \phi_{k_1}(\theta_1) \phi_{k_2}(\theta_2) ... \phi_{k_m}(\theta_m)
        \end{equation}
        where $\phi_{0}(\theta)=1$, $\phi_{1}(\theta) = \cos{\theta}$ and $\phi_{-1} (\theta) =\sin{\theta}$.
        As a final step, the expectation value of this operator is taken with respect to the stabiliser $\rho_0$ state, which for each term yields either $0$ or $\pm 1$. One is therefore left with a function of the wanted form.

        To express the trigonometric function as a Fourier series, each trigonometric monomial can be expanded in complex exponentials, giving
        \begin{equation}
                \tr(P \U(\boldsymbol\theta)(\rho_0) = \sum_{\boldsymbol{k} \in \boldsymbol{\Lambda}} c_{\boldsymbol{k}} e^{i\boldsymbol{\theta}\cdot\boldsymbol{k}}.
        \end{equation}
        where the coefficients are $c_{\bf{k}} \in \{\pm 1/2^{||{\bf{k}}||_1}, \pm i 1/2^{||{\bf{k}}||_1}, 0\}$, where $||\boldsymbol{k}||_1 = \sum_i |\boldsymbol{k}_i|$. As the function is real, $c_{\boldsymbol{-k}} = c^*_{\boldsymbol{k}}$.
\end{proof}

\section{Proof of Theorem~\ref{thm:1}}
\label{sec:proof-thm-1}

From Ref.~\cite{welch2014efficient} we have that any diagonal operator has a decomposition in terms of the diagonal Pauli operators acting only with $Z$ or $I$
\begin{equation}
        \sum_z \lambda_z |z\>\<z| = \sum_{{\bf{j} }= (j_1,..., j_n)} a_{\bf{j}} Z^{j_1} \otimes ...\otimes Z^{j_n}.
\end{equation}
Each $j_{i}$ is a binary variable with ${\bf{j}} := (j_1, ..., j_n)$ and the vector ${\bf{a}} := (a_0, a_1,..., a_{2^{n}-1})$ is the Walsh-Hadamard transform of the eigenvalue vector ${\boldsymbol{\lambda}} = \left(\lambda_z :  z\in\{0,1\}^n\right)$ so we write component-wise
\begin{equation}
        a_{\mathbf j} = \frac{1}{2^n} \sum_{z=0}^{2^n-1} \lambda_z (-1)^{\mathbf{z}\cdot\mathbf{j}}
\end{equation}
where ${\bf{z}}\cdot{\bf{j}} = \sum_{i=1}^{n} z_i \, j_i  $ and with inverse
\begin{equation}
        \lambda_{z} = \sum_{\bf{j}} a_{\bf{j}} (-1)^{\bf{z}\cdot \bf{j}}
\end{equation}
Therefore, the action of the unitary
\begin{equation}
        U(\theta) = e^{iH \theta} = W \prod_{\bf{j}} e^{i a_{\bf{j}} Z^{j_1} \otimes... \otimes Z^{j_n} \theta} W\hc
\end{equation} 
can be decomposed into a series that involves replacing the action of each phase gadget $Z^{\bf{j}}(\theta) = e^{ia_{\bf{j}} Z^{j_1}\otimes ... \otimes Z^{j_n}\theta}$ by a Clifford process modes decomposition
\begin{equation}
        \mathcal{Z}^{\bf{j}} (\theta) =  \mathcal{C}_{0}^{\bf{j}} + e^{2i\theta a_{\bf{j}}} \mathcal{C}_{-1}^{\bf{j}} + e^{-2i\theta a_{\bf{j}}} \mathcal{C}_{1}^{\bf{j}}.
\end{equation}
This is similar to the decomposition in the example from Sec~\ref{sec:Example} and the exact form of each superoperators $\mathcal{C}_{k}^{\bf{j}}$ for $k\in \{-1,0,1\}$ may be found in Appendix~\ref{appendix:Cliff}.
Therefore,
\begin{equation}
        \mathcal{U}(\theta) = \sum_{{\bf{k}} = (k_1,..., k_{r}) } e^{-2i\theta \bf{a}\cdot \bf{k}} \, \mathcal{W}\circ \mathcal{C}_{k_1}^{{\bf{j}}_1} \circ ... \circ \mathcal{C}_{k_r}^{{\bf{j}}_r}\circ \mathcal{W}\hc
        \label{eqn:cliffdec}
\end{equation}
where ${\bf{a}\cdot \bf{k}} = \sum_{i=1}^{r} a_{{\bf{j}}_i} k_{i}$   with $r$ the rank of the vector $(a_0,..., a_{{2^n}-1})$ and $\{{\bf{j}}_{1},..., {\bf{j}}_{r} \} = \{{\bf{j}}: a_{\bf{j}} \neq 0 \}$. Each vector $\bf{k}$ has entries in $\{-1,0,1\}$ and labels a particular Clifford path decomposition.

The decomposition of $\mathcal{U}$ in the form of Equation~\ref{eqn:cliffdec}, under certain circumstances, may in fact naively overcount the number of frequencies $\omega_{\bf{k}} := -2\bf{a}\cdot{\bf{k}}$. Interestingly, this comes from the fact that phase gadgets commute and particular compositions of process modes of the type $\mathcal{C}_{k}^{\bf{j}}$ give a null channel, so the corresponding term in Equation~\ref{eqn:cliffdec} vanishes. For example, $\mathcal{C}_{-1}^{\bf{j}_1}\circ \mathcal{C}^{{\bf{j}}_2}_{0}\circ \mathcal{C}^{{\bf{j}}_1\oplus {\bf{j}}_2}_{0}= \bf{0}$ with more details in the Appendix~\ref{appendix:Cliff}. This means there are certain `selection'-type of rules for the vectors $\bf{k}$, so the summation in Equation~\ref{eqn:cliffdec} ranges only over a subset of the full lattice of points $\mathbb{Z}_{3}^{\times r}$. In particular, they will take the form $k_{i} = \frac{(-1)^{{\bf{j}}_{i}\cdot \bf{z}} - (-1)^{{\bf{j}}_{i}\cdot \bf{z'}}}{2} $ for $i\in \{1,2,..., r\}$ for some binary vectors $\bf{z}, \bf{z'}$ of length $n$ and with corresponding frequencies $\omega_{\bf{k}} = \lambda_{\bf{z}} -\lambda_{\bf{z'}}$.

In the above analysis, we considered the sequence of phase gadgets as it is, and we were not a priori synthesising the unitary into a circuit containing a minimal number of single qubit $Z$-rotations. This is also one of the reasons we had to employ `selection' rules for the $\bf{k}$ vectors as in some sense we had redundancies in our description. Instead, if we were to start from a given optimised circuit and applied the Clifford decomposition directly to the parameterised gates in said circuit, then we would not come across the situation where a large number of channel terms are in fact zero. This highlights two aspects. Firstly, that there may be a connection between circuit synthesis and the fact that particular compositions of the process modes considered give a zero channel. Secondly, while employing a decomposition as in Equation~\ref{eqn:cliffdec} recovers the previously known result that frequencies correspond to differences between any two eigenvalues of the generator, our present approach also offers a route to determine those frequencies when the unitary is described in circuit form.

\section{Proof of Theorem~\ref{thm:noisy-expectations}}
\label{sec:proof-thm-2}

\begin{theorem}
Writing $\tilde{\rho}(\theta) = \widetilde{\U}(\boldsymbol{\theta})(\tilde{\rho}_0)$ for the noisy implementation of the target state $\rho(\boldsymbol{\theta})$ we have
\begin{equation}
        \<O\>_{\tilde{\rho}(\boldsymbol{\theta})} = \sum_{\bf{k}\in \tilde{\boldsymbol \Lambda}}  \tilde{c}_{\bf{k}} e^{i \tilde{\boldsymbol{\omega}}_{\bf{k}} \cdot \boldsymbol{\theta}}
\end{equation}
where now the set of frequencies $\tilde{\omega}_{\bf{k}}$ are no longer necessarily indexed with a discrete subset of an integer lattice bounded by 1.
\end{theorem}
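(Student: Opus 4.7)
The strategy is to insert the channel decomposition of Lemma~\ref{lemma:decompchannel} into $\tilde\rho(\boldsymbol\theta)=\chi(\rho(\boldsymbol\theta))$ (allowing a $\boldsymbol\theta$-dependent $\chi$) and then reduce the expectation value to a weighted sum of noiseless expectation values to which Theorem~\ref{thm:1} applies. Concretely, I would write
\begin{equation}
\chi(\cdot) = \sum_j q_j(\boldsymbol\theta)\,\mathcal{C}_j(\cdot) + \sum_j q_j'(\boldsymbol\theta)\,\mathcal{S}_j(\cdot),
\end{equation}
where the $\mathcal{C}_j$ are Clifford unitary channels, the $\mathcal{S}_j$ are Pauli reset channels producing fixed Pauli eigenstates $|P_j\rangle\langle P_j|$, and the quasiprobabilities are allowed to be arbitrary real functions of $\boldsymbol\theta$. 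Substituting and moving into the Heisenberg picture gives
\begin{equation}
\<O\>_{\tilde\rho(\boldsymbol\theta)} = \sum_j q_j(\boldsymbol\theta)\,\Tr\!\bigl(\mathcal{C}_j^\dagger(O)\,\rho(\boldsymbol\theta)\bigr) + \sum_j q_j'(\boldsymbol\theta)\,\<P_j|O|P_j\>,
\end{equation}
where I used that $\mathcal{S}_j$ has constant output, so $\Tr(O\mathcal{S}_j(\rho(\boldsymbol\theta))) = \<P_j|O|P_j\>$ independently of the state.

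Next I would apply Theorem~\ref{thm:1} to each term of the first sum. Since $\mathcal{C}_j^\dagger(O)$ is still a Hermitian operator on $n$ qubits, it admits the expansion
\begin{equation}
\Tr\!\bigl(\mathcal{C}_j^\dagger(O)\,\rho(\boldsymbol\theta)\bigr) = \sum_{\mathbf{k}\in\boldsymbol\Lambda} c_{\mathbf{k}}\!\bigl(\mathcal{C}_j^\dagger(O)\bigr)\,e^{i\boldsymbol{\omega}_{\mathbf{k}}\cdot\boldsymbol\theta},
\end{equation}
on the same lattice $\boldsymbol\Lambda$ determined by the noiseless circuit, because the process mode / Clifford path analysis of Section~\ref{sec:processmode} depends only on the parametrised unitary $U(\boldsymbol\theta)$, not on the measured observable. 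Substituting back and absorbing the constants $\<P_j|O|P_j\>$ into the definition $q'(\boldsymbol\theta):=\sum_j q_j'(\boldsymbol\theta)\<P_j|O|P_j\>$ yields exactly Eq.~\eqref{eqn:generalnoise}.

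The only subtle point, which I would flag explicitly, is that $q_j(\boldsymbol\theta)$ and $q'(\boldsymbol\theta)$ are generally arbitrary functions of the parameters. When $\chi$ is parameter-independent these functions are constants and the frequency support of $\<O\>_{\tilde\rho}$ is contained in $\{\boldsymbol\omega_{\mathbf{k}}:\mathbf{k}\in\boldsymbol\Lambda\}\cup\{\mathbf{0}\}\subseteq\S$; but when $\chi$ depends on $\boldsymbol\theta$, the products $q_j(\boldsymbol\theta)e^{i\boldsymbol\omega_{\mathbf{k}}\cdot\boldsymbol\theta}$ can contribute Fourier content at frequencies outside $\S$, and $q'(\boldsymbol\theta)$ may itself have nontrivial spectrum. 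This explains why in the restated form of the theorem in the appendix the frequencies $\tilde{\boldsymbol\omega}_{\mathbf{k}}$ are no longer confined to the noiseless lattice. The main obstacle is not technical but interpretive: making precise that non-unitality contributes only the constant in $\mathbf{k}$ (through $q'$) while parameter dependence of the $q_j$ is what broadens the support beyond $\boldsymbol\Lambda$; this bookkeeping is what underlies the classification summarised in Table~\ref{tab:noisesFourier}, whose individual entries are then proved in Appendix~\ref{app:noisesFourier} by specialising the decomposition of $\chi$ to each noise model.
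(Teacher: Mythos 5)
Your proposal is correct and follows essentially the same route as the paper's proof: decompose the (possibly $\boldsymbol\theta$-dependent) noise channel via Lemma~\ref{lemma:decompchannel}, pass to the Heisenberg picture so the Pauli reset terms contribute only the state-independent constants $\<P_j|O|P_j\>$, and apply the noiseless expansion to each $\Tr(\mathcal{C}_j^\dagger(O)\rho(\boldsymbol\theta))$. Your closing remark about parameter-dependent quasiprobabilities broadening the spectrum is exactly the observation the paper makes at the end of its proof.
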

\begin{proof}
One may formally write $\tilde{\U}(\boldsymbol{\theta}) = \mathcal{E} \circ \U(\boldsymbol{\theta})$, where the noise channel $\E$ may itself be dependent on the angles $\boldsymbol{\theta}$. One may then employ decompositions as in Lemma~\ref{lemma:decompchannel} for $\mathcal{E}$ so that only the quasiprobability distribution carries an angle dependency, namely
\begin{equation}
        \E = \sum_{j} q_j(\boldsymbol{\theta}) \, \mathcal{C}_j + \sum_{j}  q_j' (\boldsymbol{\theta}) \S_j.
\end{equation}
Therefore the noisy expected values take the form
\begin{equation}
        \<O\>_{\tilde{\rho} (\boldsymbol{\theta})} = \sum_{j} q_{j}(\boldsymbol{\theta}) \Tr[\rho({\boldsymbol{\theta}}) \mathcal{C}_{j}\hc(O)] + \sum_j q_{j}'(\boldsymbol{\theta}) \<P_j|O|P_j\>
\end{equation}
where we have used that the adjoint Pauli reset channel is $\S_{j}\hc (O) = \mathbb{I} \, \<P_j|O|P_j\>$  for some eigenstate $|P_j\>$ of a Pauli operator. Following the analysis in the previous section, 
$\Tr[\rho({\boldsymbol{\theta}}) \mathcal{C}_{i}\hc(O)]  = \sum_{\bf{k} \in \boldsymbol \Lambda} c_{\bf{k}} (\mathcal{C}\hc_i(O)) e^{i\boldsymbol{\omega}_{\bf{k}}\cdot {\boldsymbol{\theta}}}$ where we explicitly add the dependency of the coefficients $c_{\bf{k}}$ on the observable. This implies that
\begin{equation}
                \<O\>_{\tilde{\rho} (\boldsymbol{\theta})} = \sum_{\bf{k}\in \bf{\Lambda}} \sum_{i} q_{i}(\boldsymbol{\theta}) c_{\bf{k}}(\mathcal{C}_i\hc(O)) e^{i\boldsymbol{\omega}_{\bf{k}}\cdot \boldsymbol{\theta}} + \sum_j q_{j}'(\boldsymbol{\theta}) \<P_j|O|P_j\>.
                \label{eqn:generalnoise}
\end{equation}
The second summation can be absorbed into the $\omega_{\boldsymbol{k}} = 0$ term giving the intended result.

Let us denote by $\mathcal{S} $ the (maximal) frequency support for the Fourier decomposition of the unitary channel $\U(\boldsymbol{\theta})$. For any given observable, the frequency spectrum of $\<O\>_{\rho(\boldsymbol{\theta})}$ will be included in $\mathcal{S}$, but may contain fewer modes. As we have seen in the previous section, for each independent parameter $\theta_i$ the  frequency spectrum is included in a finite lattice  $\mathcal{S}_i:=\{\omega_{{\bf{k}}_i} = -2{\bf{a}}_i\cdot {\bf{k}}_i  :  {\bf{k}}_{i} \in {\mathbb{Z}}_3^{\times r_i}   \}$ over $\mathbb{Z}_3$ with real basis vectors $-2{\bf{a}}_i$ which can be determined from the circuit description of $U_i(\theta_i)$.
On the other hand, there's no a priori constraint for $q_i, q_i'$ to have discrete spectra since the noise channel may not present periodicity in the angles. 
\end{proof}

\section{Resource requirements and nonuniform sampling}
\label{app:resource}

As mentioned in the text, sampling overhead is the largest obstacle to a pratical implementation of spectral analysis methods. One-dimensional DFT up to a frequency of $N$ would require sampling $d = 2N+1$ points, itself a linear scaling. However, the scaling in the dimension is unfavourable as it is exponential, practically limiting the applicability of the method to small sets of parameters. In a practical setting the runtime is also affected by the requirement of high shot counts to reduce shot noise. In itself, however, the method could still be useful even in these conditions. A variational circuit with several parameters could be broken down into sets of parameters which then could be mitigated for noise individually. Furthermore, one might expect that different parameterised gates would be affected differently by noise based on factors such as their order and the noise characteristics of the qubits they act on. For example the effect of the first gates in the circuit would be diluted by successive noise channels. Therefore, one might choose to focus on the most noisy gates in order to maximise the benefit of mitigation.

In addition, we remark that many variational algorithms of interest do not feature many individual parameters, but rather a few parameters that govern larger sets of correlated gates. In that case, the sampling points would be much reduced, even when accounting for the larger set of frequencies that need to be sampled. For example, while the spectrum of $M$ uncorrelated single-qubit rotations would have size $3^M$, correlating the angles into a single parameter would require sampling a number of points linear in $M$.

Finally, the use of advanced signal extraction techniques might enable these methods to work with much reduced sets of data. There exists an array of techniques that allow for the retrieval of the most significant Fourier coefficients from a signal, using a subexponential amount of samples~\cite{candes2006stable, tropp2007signal, donoho2006compressed, candes2006robust}. In the case of sparse signals, techniques such as the sparse fast Fourier transform (sFFT) actually allow for complete reconstruction with sampling much below the Nyquist rate~\cite{hassanieh2012simple}. Therefore, if the quantum signal consists of a few large Fourier coefficients or, even better, if it is indeed sparse, these techniques could allow to circumvent the exponential sampling requirements in the case of many parameters. Interestingly, they would also apply an implicit thresholding by selecting only the most significant coefficients, and therefore is expected to provide some denoising effects.

\subsection{Non-integer and non-equidistant samples or frequency
  spectra}
\label{sec:non-integer-non}

In some cases, it might be desirable to get an estimate of the signal at non-integer frequencies, or spectra that are not equidistant. This might be relevant in case, for example, the error channel is suspected to depend in a nontrivial way on a parameter, leading to frequencies that are non-integer multiples of the parameter's frequency. This problem is known as non-uniform DFT or NUDFT and various approaches have been developed that implement an efficient version (NUFFT), based on interpolation~\cite{boyd1992fast, dutt1993fast, potts2003fast} or low-rank approximations~\cite{ruiz2018nonuniform}.\\
The discrete Fourier transform can also be implemented on incomplete data or nonuniform sampled data, while returning frequency data on a uniform support, known as type II NUDFT~\cite{ruiz2018nonuniform}. Fast algorithms exist also in this scenario, which similarly use approximations and therefore come at the cost of reduced accuracy~\cite{ruiz2018nonuniform, potts2003fast}, however might be sufficient as a rough error estimation technique. There exist scenarios where a large amount of nonuniform data points are generated as a byproduct, for example in the optimisation of a variational algorithm. Using nonuniform estimation, it might be possible to implement spectral noise evaluation at no additional sampling cost.\\
Finally, we note the connection between the spectral analysis of quantum algorithms and signal processing in NMR. The latter similarly features a costly sampling procedure that generates multidimensional data, which then needs to be processed into its spectral components. Techniques such as nonuniform sampling~\cite{maciejewski2011data} and maximum entropy reconstruction~\cite{hoch2014nonuniform} have been devised to minimise the sampling and therefore the running time, while reconstructing the image with accuracy. One can therefore envisage directly utilising similar techniques for quantum algorithm.

\section{Effect of different channels on Fourier modes}
\label{app:noisesFourier}

\begin{lemma}[Depolarising channel]\label{lem:dep}
The noisy expectation value is
\begin{equation}
        \<O\>_{\tilde{\rho}(\boldsymbol{\theta})} = \sum_{\bf{k} \in \boldsymbol \Lambda} (1-p)^m c_{\bf{k}} e^{i\omega_{\bf{k}}\cdot \boldsymbol{\theta}} +  \frac{(1-(1-p)^m)}{2^n}\Tr(O).
\end{equation}
Therefore, the noisy Fourier coefficients $\tilde{c}_{\bf{k}}$ are homogeneously contracted.
\end{lemma}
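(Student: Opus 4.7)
The plan is to exploit the fact that the global depolarising channel $\mathcal{D}_p(\rho) = (1-p)\rho + p\,\mathbb{I}/2^n$ is covariant under every unitary channel: for any unitary $V$, one has $V\mathbb{I} V^\dagger = \mathbb{I}$, and therefore $\mathcal{D}_p \circ \V = \V \circ \mathcal{D}_p$. First I would write the noisy implementation as
\begin{equation}
\tilde{\U}(\boldsymbol\theta) = \mathcal{D}_p \circ \tilde{\V}_m' \circ \U_m(\theta_m) \circ \cdots \circ \tilde{\V}_1' \circ \U_1(\theta_1) \circ \tilde{\V}_0',
\end{equation}
where the $\tilde{\V}_i'$ are the ideal unitary pieces and each of the $m$ depolarising channels is shown explicitly (with the convention matched to the statement of the lemma). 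Using the covariance, I would commute every $\mathcal{D}_p$ past the unitary operations to the left until all of them sit at the output, yielding $\tilde{\U}(\boldsymbol\theta) = \mathcal{D}_p^{\circ m} \circ \U(\boldsymbol\theta)$.

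Next I would compute $\mathcal{D}_p^{\circ m}$ in closed form by induction: assuming $\mathcal{D}_p^{\circ k}(\rho) = (1-p)^k \rho + (1-(1-p)^k)\,\mathbb{I}/2^n$ and using $\mathcal{D}_p(\mathbb{I}/2^n) = \mathbb{I}/2^n$, one step of the recursion gives
\begin{equation}
\mathcal{D}_p^{\circ (k+1)}(\rho) = (1-p)^{k+1}\rho + \bigl(1-(1-p)^{k+1}\bigr)\frac{\mathbb{I}}{2^n}.
\end{equation}
Applied to the ideal state $\rho(\boldsymbol\theta) = \U(\boldsymbol\theta)(\rho_0)$, this yields $\tilde{\rho}(\boldsymbol\theta) = (1-p)^m \rho(\boldsymbol\theta) + (1-(1-p)^m)\mathbb{I}/2^n$.

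Finally I would take the trace against $O$, invoke Theorem~\ref{thm:1} to expand $\Tr(O\,\rho(\boldsymbol\theta)) = \sum_{\mathbf{k}\in\boldsymbol\Lambda} c_{\mathbf{k}}(O) e^{i\boldsymbol\omega_{\mathbf{k}}\cdot\boldsymbol\theta}$, and collect terms to obtain
\begin{equation}
\<O\>_{\tilde\rho(\boldsymbol\theta)} = (1-p)^m \sum_{\mathbf{k}\in\boldsymbol\Lambda} c_{\mathbf{k}}(O) e^{i\boldsymbol\omega_{\mathbf{k}}\cdot\boldsymbol\theta} + \frac{1-(1-p)^m}{2^n}\Tr(O),
\end{equation}
which is the claimed expression; the constant term modifies only the $\boldsymbol\omega_{\mathbf k}=\mathbf 0$ coefficient while every other Fourier coefficient is uniformly rescaled by $(1-p)^m$.

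There is essentially no conceptual obstacle here — the argument rests entirely on the unitary-covariance of $\mathcal{D}_p$. The only care needed is bookkeeping: matching the number of inserted depolarising channels to the exponent in the statement (so that the text-level formula with $(1-p)^{m+1}$ and this appendix-level formula with $(1-p)^m$ are each consistent with their respective placements of the noise), and checking that the $\Tr(O)$ term lands correctly on the zero-frequency mode rather than spawning any new frequency components.
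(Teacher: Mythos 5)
Your proof is correct and follows essentially the same route as the paper's (one-line) argument: commute the depolarising channels past the unitaries to obtain a single effective channel $\mathcal{E}(\rho) = (1-p)^{k}\rho + (1-(1-p)^{k})\,\mathbb{I}/2^n$ at the output and then apply the general expansion of Theorem~\ref{thm:1}. Your closing remark about bookkeeping is well taken, since the paper itself is inconsistent on this point — the appendix lemma states $(1-p)^m$ while its proof and the main-text version use $(1-p)^{m+1}$, corresponding to noise after each of the $m+1$ gates $\tilde{\V}_0,\ldots,\tilde{\V}_m$.
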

\begin{proof}
  $\tilde{\U} = \mathcal{E}\circ \U$ where $\mathcal{E}(\rho) =  (1-p)^{m+1} \rho + (1- (1-p)^{m+1}) \frac{\iden}{2^n}$. Then, using Equation~\ref{eqn:generalnoise} the result follows.
\end{proof}

Consider a single Pauli noise channel $\mathcal{N}_{p} := (1 - p)\mathcal{I} + p \mathcal{P}$, which appears within a quantum process
\begin{equation}
        \tilde \U = \V_{0} \circ \U_1(\theta_1) \circ \cdots \circ \mathcal{N}_{p} \circ \cdots \circ \U_{m}(\theta_m) \circ \V_m.
\end{equation}
For this we have the following
\begin{lemma}[Pauli channel]\label{lem:pauli}
Under a Pauli channel, the frequency spectrum does not change. For the specific case of $O$ a Pauli operator, the signal's coefficients are contracted: $|\tilde{c}_{\bf{k}}(O)| \le |c_{\bf{k}}(O)|$, and so there are no extra coefficients. For a general $O$, some new modes that were previously zero due to cancellation of paths might find themselves with a small non-zero value under noise.
\end{lemma}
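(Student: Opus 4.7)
The plan is to exploit the fact that Pauli conjugation channels commute cleanly with Pauli gadgets and Cliffords, reducing the problem to a sum of noiseless circuits with sign-flipped parameters.

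First, I would write the noise as a convex combination $\mathcal{N}_p = \sum_\alpha p_\alpha \mathcal{P}_\alpha$ of Pauli conjugations $\mathcal{P}_\alpha(\cdot) = P_\alpha (\cdot) P_\alpha$. By linearity, $\langle O\rangle_{\tilde \rho(\boldsymbol{\theta})} = \sum_\alpha p_\alpha \langle O\rangle_{\rho^{(\alpha)}(\boldsymbol{\theta})}$, where $\rho^{(\alpha)}$ is the state produced with $\mathcal{P}_\alpha$ inserted at the noise location. The key algebraic identity is $\mathcal{P}_\alpha \circ \U_i(\theta_i) = \U_i(\epsilon_i^{(\alpha)} \theta_i) \circ \mathcal{P}_\alpha$ with $\epsilon_i^{(\alpha)}=+1$ when $[P_\alpha,P_i]=0$ and $\epsilon_i^{(\alpha)}=-1$ when $\{P_\alpha, P_i\}=0$; this follows because conjugation by $P_\alpha$ flips the sign of the generator $P_i$ exactly on anticommutation. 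Cliffords normalise the Pauli group, so each $\V_j$ likewise turns $\mathcal{P}_\alpha$ into a conjugation by some other Pauli. Iterating, I would push every $\mathcal{P}_\alpha$ to the end of the circuit; there it acts on the observable as $P'_\alpha O P'_\alpha = s_\alpha O$ with $s_\alpha\in\{\pm 1\}$ since $O$ is Pauli, so $\langle O\rangle_{\rho^{(\alpha)}(\boldsymbol{\theta})} = s_\alpha \langle O\rangle_{\rho(\boldsymbol{\sigma}^{(\alpha)}\boldsymbol{\theta})}$, where $\boldsymbol{\sigma}^{(\alpha)}$ is the diagonal sign-flip recording the accumulated parameter flips.

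Because substituting $\boldsymbol{\sigma}^{(\alpha)}\boldsymbol{\theta}$ merely reindexes the Fourier decomposition from Theorem~\ref{thm:1}, the noisy coefficients take the form $\tilde c_{\mathbf{k}} = \sum_\alpha p_\alpha s_\alpha\, c_{\boldsymbol{\sigma}^{(\alpha)}(\mathbf{k})}(O)$, which contains only frequencies already in the lattice $\boldsymbol{\Lambda}$---giving the spectrum assertion. For the contraction claim, I would work in the Heisenberg picture, where a Pauli channel acts diagonally in the Pauli basis, $\mathcal{N}_p^{\dagger}(Q) = \lambda_Q\, Q$ with $\lambda_Q = \sum_\alpha p_\alpha s_\alpha(Q) \in [-1,1]$. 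Combined with the Pauli-gadget structure, each mode $c_{\mathbf{k}}(O)$ for a Pauli observable arises from a branching path in the Heisenberg evolution of $O$ that produces a specific Pauli $Q_{\mathbf{v}(\mathbf{k})}$ at the noise location; hence $\tilde c_{\mathbf{k}}(O) = \lambda_{Q_{\mathbf{v}(\mathbf{k})}}\, c_{\mathbf{k}}(O)$ and $|\tilde c_{\mathbf{k}}(O)| \le |c_{\mathbf{k}}(O)|$ follows immediately. For general observables, one expands $O = \sum_\beta o_\beta P_\beta$; the noiseless cancellations $\sum_\beta o_\beta c_{\mathbf{k}}(P_\beta) = 0$ can now be broken by the non-uniform rescaling factors $\lambda_{Q^{(\beta)}}$, explaining the small new modes that appear.

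The main obstacle will be justifying the single-path attribution used in the contraction step. Since distinct branching paths of the Heisenberg evolution can share a frequency vector, one must show that within the Pauli-gadget/Clifford setting such paths all pass through the same intermediate Pauli at the noise location---so that a common $\lambda$ governs their rescaling---which reduces to an algebraic consistency argument on the commutation pattern of the gadgets lying between successive noise insertions.
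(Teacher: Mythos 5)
Your first half --- commuting each Pauli error through the gadgets and Cliffords so that it flips the signs of a subset of the downstream parameters, then observing that the lattice $\boldsymbol\Lambda$ is closed under these sign flips --- is exactly the paper's argument for the spectrum claim, and your formula $\tilde c_{\mathbf{k}}(O)=\sum_\alpha p_\alpha s_\alpha\, c_{\boldsymbol\sigma^{(\alpha)}(\mathbf{k})}(O)$ is the right intermediate result. The problem is the contraction step, where you abandon this formula for a Heisenberg-picture argument in which each mode is rescaled by a single eigenvalue $\lambda_{Q_{\mathbf{v}(\mathbf{k})}}$ of the Pauli channel. The ``single-path attribution'' you flag as the main obstacle is not a technicality to be checked --- it genuinely fails. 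When the backward-evolved observable meets a gadget whose generator it anticommutes with, it splits as $\cos(\theta_j)Q\pm\sin(\theta_j)Q'$, and \emph{both} the $Q$ and the $Q'$ branches contribute to \emph{both} exponential modes $k_j=+1$ and $k_j=-1$ (since $\cos$ and $\sin$ each contain $e^{\pm i\theta_j}$). Iterating, a single frequency vector $\mathbf{k}$ at the noise location is fed by a superposition of several distinct Paulis, each of which acquires a different eigenvalue $\lambda_Q\in[-1,1]$ under the channel; the mode is therefore a sum of differently rescaled terms, not $\lambda\,c_{\mathbf{k}}(O)$, and $|\tilde c_{\mathbf{k}}|\le|c_{\mathbf{k}}|$ does not follow from this route.

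The paper closes the contraction differently, and you can do the same starting from the formula you already derived: for a single error, $\tilde c_{\mathbf{k}}(O)=(1-p)\,c_{\mathbf{k}}(O)+p\,c_{\mathbf{k}'}(\P'(O))$ with $\mathbf{k}'$ the sign-flipped index, so by the triangle inequality it suffices to show $|c_{\mathbf{k}'}(\pm O)|=|c_{\mathbf{k}}(O)|$. The paper proves this by writing, for $k_j=\pm1$, $c_{\mathbf{k}}(O)=\Tr[O\,\V_0\circ\cdots\circ\C^{j}_r\circ\cdots\circ\V_m(\rho)]\pm i\,\Tr[O\,\V_0\circ\cdots\circ\C^{j}_i\circ\cdots\circ\V_m(\rho)]$ with the two real (Hermiticity-preserving) channels $\C^{j}_r=\tfrac14(\U(0)-\U(\pi))$ and $\C^{j}_i=\tfrac18(\U(0)-2\U(\pi/2)+\U(\pi))$; each single flip $k_j\to-k_j$ therefore conjugates the coefficient, so any subset of flips leaves $c_{\mathbf{k}}$ or $c_{\mathbf{k}}^{*}$ and in either case preserves the modulus. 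You need some such argument --- flipping only a \emph{subset} of the components of $\mathbf{k}$ is not the global negation $\mathbf{k}\to-\mathbf{k}$, so the reality condition $c_{-\mathbf{k}}=c_{\mathbf{k}}^{*}$ alone does not give modulus preservation. Your closing observation about general $O$ (cancellations between the Pauli components of $O$ being broken because different components are rescaled differently) is consistent with the paper's remark.
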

\begin{proof}
First we consider the case of $p=1$, that is, a Pauli error $P$ that occurs within the quantum process
\begin{equation}
        \tilde \U = \V_{0} \circ \U_1(\theta_1) \circ \cdots \circ \mathcal{P} \circ \cdots \circ \U_{m}(\theta_m) \circ \V_m.
\end{equation}
Assume that each $\U_{i}(\theta_i) = e^{-iP_i\theta_i/2}$ where $P_i$ are Pauli operators themselves, and that the unparameterised unitaries are Clifford as well. Then since Pauli operators either commute or anti-commute then propagating a Pauli error $P$ through the circuit will change a subset of $\theta_i \to -\theta_i$.  In terms of the individual Clifford paths $\mathcal{V}_0 \circ \mathcal{C}^{1}_{k_1} \circ ... \circ \mathcal{C}^{m}_{k_m}\circ \V_m$ labelled by vector ${\bf{k} }= (k_1,..., k_m) \in \mathbb{Z}_3^{\times m}$, the effect of a Pauli error within the circuit will result in a "noisy" path where $\mathcal{P}'\circ \mathcal{V}_0 \circ \mathcal{C}^{1}_{k'_1} \circ ... \circ \mathcal{C}^{m}_{k'_m}\circ \V_m$ labelled by vector ${\bf{k}' }$ for which $k'_i = - k_i$ on the fixed (path-independent) subset of $i \in \{1,...,m\}$. What happens is that since the lattice $\mathcal{S}$ contains both $\omega$ and $-\omega$ then this means that the spectrum of frequencies for noisy $\tilde{\U}$ will, just like those for $\U$, also be included in $\mathcal{S}$. However, some new modes  that were previously zero due to cancellation of paths might find themselves with a small non-zero value under noise for specific target observables. For the corresponding Fourier coefficient $c_{\bf{k}}(O)$ for the expected value of operator $O$ over some input state, its noisy version will be $\tilde{c}_{\bf{k}}(O) = c_{\bf{k'}}(\P'(O))$.
Similarly, in the case of a Pauli channel $\N_{p} = (1-p)\I + p\P$ the noisy Fourier coefficients become $\tilde{c}_{\bf{k}}(O) = (1-p) c_{\bf{k}}(O) + p c_{\bf{k'}}(\P'(O))$.

Going back to the case where $O$ is a Pauli operator, we have $\P'(O) = \pm O$. Now, $c_{\bf{k'}}$ is equal to $c_{\bf{k}}$ or its conjugate. To see this, consider any one of the components, say $k_j$. For the case of $k_j = \pm 1$, one has
\begin{align}
        c_{\bf{k}}(O)   &= \Tr[O \mathcal{V}_0 \circ \cdots \circ \mathcal{C}^{j}_{\pm1} \circ \cdots \circ \V_m (\rho)]\\
                                &= \Tr[O \mathcal{V}_0 \circ \cdots \circ \mathcal{C}^{j}_r \circ \cdots \circ \V_m (\rho)] \\
                                & \  \ \pm i \,\Tr[\O \mathcal{V}_0 \circ \cdots \circ \mathcal{C}^{j}_i \circ \cdots \circ \V_m (\rho)] \nonumber
\end{align}
where $\C^{j}_r = \frac{1}{4}\left(\mathcal{U}(0) - \mathcal{U}(\pi) \right)$ and $\C^{j}_i = \frac{1}{8}\left(\mathcal{U}(0) - 2\mathcal{U}(\pi/2) + \mathcal{U}(\pi) \right)$ are two real channels.
Therefore, $k_i \to -k_i$ is equivalent to conjugating the coefficient, and by induction, it is so for any $\bf{k'}$ as long as the number of affected nontrivial indices is odd.
It follows that $|c_{\bf{k'}}(\P'(O))| = |c_{\bf{k}}(O)|$. Therefore by the triangle inequality the noisy coefficients will obey $|\tilde{c}_{\bf{k}}(O)| \le |c_{\bf{k}}(O) |$.
\end{proof}

Thus we see that a contraction of the Fourier coefficients is a feature of Pauli channels, and this result extends the previous one on global depolarising channels.

The case of general $\V$s, as well as noise channels that are linear combinations of different Pauli channels, and multiple Pauli channels, follow analogously from these considerations. 
Furthermore, state preparation and measurement (SPAM) errors can also be modelled by Pauli noise channels.

A special case of the above for which more can be said is the aligned Pauli error, which is when a channel is followed by a Pauli channel which is identical to its generator.
\begin{lemma}[Aligned Pauli error]\label{lem:aligned}
For any operator $O$, an aligned Pauli error leads to a contraction of some coefficients without extra modes. Contrary to the previous result, this holds in the presence of non-Clifford non-parameterised gates.
\end{lemma}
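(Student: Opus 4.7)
The plan is to exploit the simple algebraic fact that an aligned Pauli error can be recast as a $\pi$-shift of the rotation angle, and then feed this shift through the Clifford process mode decomposition of Sec.~\ref{sec:processmode}. Concretely, if the adjacent rotation is a Pauli gadget $U_i(\theta_i)=e^{-iP_i\theta_i/2}$, then $P_i$ commutes with $U_i(\theta_i)$ and moreover $P_i = i\,U_i(\pi)$ up to a global phase; at the channel level this global phase drops out, giving
\begin{equation}
        \mathcal{P}_i \circ \mathcal{U}_i(\theta_i) = \mathcal{U}_i(\theta_i + \pi).
\end{equation}
This identity is the engine of the proof, and I would establish it first as the key lemma.

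Next I would combine it with the Pauli channel. Writing $\mathcal{N}_{p,i} = (1-p)\mathcal{I} + p\mathcal{P}_i$ and using linearity, one gets
\begin{equation}
        \mathcal{N}_{p,i} \circ \mathcal{U}_i(\theta_i) = (1-p)\,\mathcal{U}_i(\theta_i) + p\,\mathcal{U}_i(\theta_i + \pi).
\end{equation}
Substituting the process mode decomposition $\mathcal{U}_i(\theta_i) = \mathcal{C}_0^{(i)} + e^{i\theta_i}\mathcal{C}_1^{(i)} + e^{-i\theta_i}\mathcal{C}_{-1}^{(i)}$ from Eq.~\eqref{eqn:zdec} (extended to Pauli gadgets as in Appendix~\ref{appendix:Cliff}) and noting that the $\pi$-shift flips the sign of the $e^{\pm i\theta_i}$ exponentials while leaving $\mathcal{C}_0^{(i)}$ invariant, the two terms recombine into
\begin{equation}
        \mathcal{N}_{p,i} \circ \mathcal{U}_i(\theta_i) = \mathcal{C}_0^{(i)} + (1-2p)\,e^{i\theta_i}\mathcal{C}_1^{(i)} + (1-2p)\,e^{-i\theta_i}\mathcal{C}_{-1}^{(i)}.
\end{equation}
This is the central calculation: the aligned Pauli channel preserves the zero-frequency process mode and uniformly contracts the $\pm 1$ process modes by the factor $(1-2p)$.

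Plugging this modified local decomposition back into the full $m$-parameter expansion from Theorem~\ref{thm:1}, every Clifford path $\mathbf{k}$ retains its frequency vector $\boldsymbol\omega_{\mathbf{k}}$: no new frequencies appear, so $\tilde{\mathcal{S}} \subseteq \mathcal{S}$ and indeed $\tilde{\mathcal{S}} = \mathcal{S}$ (minus possibly paths that were exactly cancelled). The coefficient attached to each path simply picks up a factor $(1-2p)$ for every aligned-error location at which the path selects the $k_j = \pm 1$ mode, yielding
\begin{equation}
        \tilde{c}_{\mathbf{k}}(O) = \Big(\prod_{j \in \mathcal{A}(\mathbf{k})}(1-2p_j)\Big)\,c_{\mathbf{k}}(O),
\end{equation}
where $\mathcal{A}(\mathbf{k})$ indexes the aligned error sites where $\mathbf{k}$ picks a nontrivial mode. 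Since $|1-2p|\le 1$, every affected coefficient is contracted and the others are unchanged.

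The argument is insensitive to the structure of the unparameterised layers $\mathcal{V}_i$ and to the choice of observable $O$: everything happens locally at the site of the aligned channel, and linearity of the Heisenberg action carries the identity through arbitrary, possibly non-Clifford, surrounding unitaries. I expect the main care point to be bookkeeping rather than a genuine obstacle, namely verifying the commutation identity $\mathcal{P}_i \circ \mathcal{U}_i(\theta_i) = \mathcal{U}_i(\theta_i+\pi)$ at the channel level so the phase factor from $P_i = iU_i(\pi)$ is correctly discarded, and handling the case of multiple aligned channels by independent application of the same contraction on each factor of the path.
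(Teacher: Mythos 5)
Your proposal is correct and follows essentially the same route as the paper's proof: both rest on the channel-level identity $\mathcal{P}_i\circ\mathcal{U}_i(\theta_i)=\mathcal{U}_i(\theta_i+\pi)$ and the resulting recombination $\mathcal{C}_0^{(i)}+(1-2p)e^{i\theta_i}\mathcal{C}_1^{(i)}+(1-2p)e^{-i\theta_i}\mathcal{C}_{-1}^{(i)}$, which contracts the nontrivial process modes without introducing new frequencies. Your additional bookkeeping of the per-path factor $\prod_{j\in\mathcal{A}(\mathbf{k})}(1-2p_j)$ is a slightly more explicit statement of the same conclusion.
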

\begin{proof}
Take the operator $U(\theta_j) = e^{-iP_j\theta_j/2} = \mathcal{C}^{(j)}_0 + e^{i\theta_j}\mathcal{C}^{(j)}_1 + e^{-i\theta_i}\mathcal{C}^{(j)}_{-1}$. If this channel is followed by a Pauli channel which is identical to its generator, the overall channel is $P_jU(\theta_j) = U(\theta_j + \pi) = \mathcal{C}^{(j)}_0 - e^{i\theta_j}\mathcal{C}^{(j)}_1 - e^{-i\theta_i}\mathcal{C}^{(j)}_{-1}$, and therefore for a Pauli noise channel we get
\begin{equation}
        \mathcal{P}_j\U_{i}(\theta_i) = \mathcal{C}^{(i)}_0 + (1-2p)e^{i\theta_i}\mathcal{C}^{(i)}_1 + (1-2p)e^{-i\theta_i}\mathcal{C}^{(i)}_{-1}. 
\end{equation}
Seen from the spectrum's perspective, there will be a contraction of some coefficients without extra modes.
\end{proof}

\begin{lemma}[Non-unital channel]\label{lem:nonunital}
For non-unital channels, the Fourier coefficients will be modified and new modes might be present that only include a subset of the parameters following the channel.
\end{lemma}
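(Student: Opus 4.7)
The plan is to reduce the statement to an application of Lemma~\ref{lemma:decompchannel} together with Theorem~\ref{thm:noisy-expectations} (specifically Eq.~\eqref{eqn:generalnoise}), exploiting the defining property of Pauli reset channels. Suppose the non-unital channel $\mathcal{N}$ is inserted at a fixed location in the circuit, say between the $k$-th and $(k{+}1)$-st parameterised gate. Write the overall noisy channel as
\[
\tilde{\U}(\boldsymbol{\theta}) = \U_{\mathrm{after}}(\theta_{k+1},\ldots,\theta_m)\circ \mathcal{N}\circ \U_{\mathrm{before}}(\theta_1,\ldots,\theta_k),
\]
and apply Lemma~\ref{lemma:decompchannel} to obtain $\mathcal{N}=\sum_i q_i \mathcal{C}_i+\sum_j q'_j \mathcal{S}_j$ with at least one $q'_j\neq 0$ (the non-unital part).

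Next, split the expectation value into the two contributions using linearity. The Clifford part $\sum_i q_i \mathcal{C}_i$ inserted in the circuit gives a modified noisy unitary process. Applying Theorem~\ref{thm:noisy-expectations} with parameter-independent quasiprobabilities, the corresponding Fourier decomposition is supported on the original lattice $\boldsymbol{\Lambda}$, with coefficients $\sum_i q_i\, c_{\bf k}(\mathcal{C}_i^\dagger(O))$ that differ in general from $c_{\bf k}(O)$. This handles the first assertion of the lemma, that coefficients are modified.

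For the new modes, I would focus on a single Pauli reset term $\mathcal{S}_j$. By definition $\mathcal{S}_j(\sigma)=|P_j\rangle\!\langle P_j|$ for every input $\sigma$; therefore
\[
\mathcal{S}_j\circ \U_{\mathrm{before}}(\theta_1,\ldots,\theta_k)(\rho_0)=|P_j\rangle\!\langle P_j|,
\]
independently of $\theta_1,\ldots,\theta_k$. Feeding this fixed state through $\U_{\mathrm{after}}$ and applying Theorem~\ref{thm:1} to the sub-circuit on the remaining parameters gives a generalised trigonometric polynomial
\[
\mathrm{Tr}\bigl(O\,\U_{\mathrm{after}}(\theta_{k+1},\ldots,\theta_m)(|P_j\rangle\!\langle P_j|)\bigr)=\sum_{\bf k'\in\boldsymbol{\Lambda}'} c'_{\bf k'} e^{i\boldsymbol{\omega}_{\bf k'}\cdot(\theta_{k+1},\ldots,\theta_m)},
\]
whose Fourier support, when viewed as a function of the full parameter vector $\boldsymbol{\theta}$, lies entirely in the subspace $\{\boldsymbol{\omega}:\omega_i=0\text{ for }i\leq k\}$. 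Summing over $j$ with weights $q'_j$ and adding back the Clifford part establishes that any additional modes introduced by the non-unital part involve only the parameters $\theta_{k+1},\ldots,\theta_m$ downstream of the channel, which is the second assertion.

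The main obstacle I anticipate is bookkeeping rather than anything conceptual: making precise what counts as a "new" mode versus a modified coefficient of an existing mode on $\boldsymbol{\Lambda}$, since the Pauli reset contribution can interfere with pre-existing modes in $\boldsymbol{\Lambda}$ that already involve only the downstream parameters. I would address this by framing the conclusion in the contrapositive form: for any frequency vector $\boldsymbol{\omega}$ with a non-zero component in a parameter occurring before $\mathcal{N}$, the noisy coefficient $\tilde c_{\boldsymbol{\omega}}$ receives contributions only from the Clifford (unital) part of the decomposition, and is therefore forced to lie in $\boldsymbol{\Lambda}$. Equivalently, all genuinely new frequencies must have zero components on the upstream parameters, which is exactly the lemma's claim.
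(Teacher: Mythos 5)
Your proposal is correct and follows essentially the same route as the paper: decompose the non-unital channel into a unital/Clifford part plus Pauli reset terms (Lemma~\ref{lemma:decompchannel}), observe that the reset terms erase all dependence on the upstream parameters so their contribution is a trigonometric polynomial only in the downstream parameters, and attribute the modified coefficients to the unital part. Your contrapositive framing of what counts as a ``new'' mode is a slightly more careful bookkeeping of the same argument the paper gives more informally.
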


\begin{proof}
As explained in Section~\ref{sec:IIIA}, Clifford channels are not sufficient to represent arbitrary non-unital channels such as amplitude damping. One must also include Pauli reset channels, which trace out a qubit and initialise it with a specific eigenstate of a Pauli operator.
Therefore, from the Clifford path point of view such channels will include additional paths that are independent of the initial state, and only depend on gates following the channel
\begin{align}
\Tr[O\tilde{\U}(\rho_0)]        =       &\Tr[O\V_{0} \circ \cdots \circ \mathcal N_{U} \circ \cdots \circ \V_m (\rho_0)] \nonumber\\
                                                        &+\Tr[O\V_{0} \circ \cdots \circ \V_i (\sigma)]
\end{align}
where $\mathcal N_{NU}$ is the non-unital channel, while $\mathcal N_{U}$ is the unital part and $\sigma$ is the state which the channel resets to. The latter two need not be physical, however they still allow a description in terms of Clifford paths. 
\end{proof}

Finally, we explore what happens if the noise channels now depend on some of the parameters. 
\begin{lemma}[Parameter-dependent and correlated channels]\label{lem:corr}
For channels which depend on some parameters, we can have $\tilde\S \not\subset \S$.
\end{lemma}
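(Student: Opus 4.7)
The plan is to combine Theorem~\ref{thm:noisy-expectations} with an explicit construction that produces frequencies outside $\mathcal{S}$. Since the statement is an existence claim ($\tilde{\mathcal{S}} \not\subset \mathcal{S}$ is possible), it suffices to exhibit one parameter-dependent noise model where this happens.

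First, I would re-examine Eq.~\eqref{eqn:generalnoise}. When the quasiprobabilities $q_j(\boldsymbol{\theta})$ are constants, each term contributes only the frequencies $\boldsymbol{\omega}_{\boldsymbol{k}}$ inherited from the noiseless channel, so $\tilde{\mathcal{S}} \subseteq \mathcal{S}$. When $q_j$ genuinely depends on $\boldsymbol{\theta}$, I would Fourier-expand each $q_j(\boldsymbol{\theta}) = \sum_{\boldsymbol{\nu}} \hat{q}_{j,\boldsymbol{\nu}} e^{i \boldsymbol{\nu} \cdot \boldsymbol{\theta}}$ over its own (possibly continuous) frequency support $\Omega_j$. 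Substituting into Eq.~\eqref{eqn:generalnoise} gives
\begin{equation}
\<O\>_{\tilde{\rho}(\boldsymbol{\theta})} = \sum_{\boldsymbol{k}\in\boldsymbol{\Lambda}} \sum_{j,\boldsymbol{\nu}} \hat{q}_{j,\boldsymbol{\nu}}\, c_{\boldsymbol{k}}(\mathcal{C}_j^\dagger(O))\, e^{i(\boldsymbol{\omega}_{\boldsymbol{k}} + \boldsymbol{\nu}) \cdot \boldsymbol{\theta}} ,
\end{equation}
so the noisy spectrum lies in the Minkowski sum $\mathcal{S} + \bigcup_j \Omega_j$. Whenever some $\boldsymbol{\nu} \in \Omega_j$ is not a lattice vector of $\mathcal{S}$, the corresponding translated frequency escapes $\mathcal{S}$.

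To close the argument, I would construct a minimal concrete example. Take a single-qubit circuit consisting of a parameterised rotation $\mathcal{Z}(\theta)$ followed by a coherent over-rotation channel whose angle depends on $\theta$, say $\mathcal{Z}(\alpha\theta)$ with $\alpha \in \mathbb{R} \setminus \mathbb{Z}$. The composed implementation is $\mathcal{Z}((1+\alpha)\theta)$; measuring $O = X$ on input $|+\rangle$ yields the noisy expectation $\cos((1+\alpha)\theta)$, whose Fourier spectrum is $\{\pm(1+\alpha)\}$. The noiseless spectrum is $\mathcal{S} = \{-1,0,1\}$, so for $\alpha \notin \mathbb{Z}$ the noisy frequency $1+\alpha$ lies outside $\mathcal{S}$, establishing $\tilde{\mathcal{S}} \not\subset \mathcal{S}$.

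The main obstacle, such as it is, lies less in the algebra and more in framing: the lemma is really a ``sharpness'' complement to the earlier results, so I would emphasise that unlike Pauli, aligned, or non-unital noise, the parameter dependence of the noise channel introduces an honest new degree of freedom (the frequencies in $\Omega_j$) whose Minkowski-sum action on $\mathcal{S}$ cannot be controlled in general. The one subtlety to watch is that the shift $\boldsymbol{\nu}$ must not conspire with some $\boldsymbol{\omega}_{\boldsymbol{k}} \in \mathcal{S}$ to land back inside $\mathcal{S}$ for \emph{all} paths; the explicit single-rotation example avoids this trivially because $\mathcal{S}$ has only three elements.
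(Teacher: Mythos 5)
Your proposal is correct and follows essentially the same route as the paper: both arguments observe that a parameter-dependent noise channel contributes its own frequencies, which shift (Minkowski-sum) or augment the noiseless spectrum $\S$, and both reduce the claim to exhibiting a new frequency outside $\S$. Your explicit single-qubit over-rotation example ($\cos((1+\alpha)\theta)$ with $\alpha\notin\mathbb{Z}$) is a welcome strengthening, since the paper's proof asserts $\tilde\S \not\subset \S$ ``in either case'' without explicitly ruling out that the noise-induced frequencies accidentally land back inside $\S$ --- the subtlety you correctly flag and resolve.
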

This is in contrast with all other types of noise explored previously.

\begin{proof}
For simplicity, consider a single noise channel that depends on a single parameter elsewhere in the circuit.
Due to causality, the noise channel must follow the unitary with that parameter
\begin{equation}
        \tilde{\U} = \V_{0} \circ \cdots \circ \tilde \N(\theta_k) \circ \U_{l}(\theta_l) \circ \cdots \circ \U_{k}(\theta_k) \circ \cdots \circ \V_m
\end{equation}
For noise channels and unitaries acting on a subset of the qubits, it is not guaranteed that the measurement operator falls into both light cones. Let us analyse the two distinct cases. In case it does, there may be Clifford paths which depend on both channels, and therefore the range of frequencies will be increased to include the additional ones from the noise channel. Denoting by $\omega$ the frequencies in the clean spectrum $\S$, and by $\tilde\omega_{k}$ the frequencies from $\tilde\N(\theta_k)$, the new spectrum will be $\tilde\S=\{\omega + \tilde\omega_{k}\}$.
In case it does not, the two channels will not interact, and therefore the frequencies are not added together, and the new spectrum will be $\tilde\S = \S \cup \{\tilde\omega_{k}\}$.
In either case we have $\tilde\S \not\subset \S$.
\end{proof}

\section{Shot noise}
\label{sec:shot-noise}

\begin{lemma}[Shot noise]
Given a number of shots $n_s$, each Fourier coefficient is normally distributed around the noiseless mean with standard deviation
\begin{equation}
        \sigma(\Re\,\mathcal{F}[\tilde{x}]_{\boldsymbol{k}})= \frac{\sqrt{d^m - ||x||_2^2}}{d^m\sqrt{2 n_s}}
\end{equation}
and similarly for the imaginary part. Note that the distributions at different $\boldsymbol{k}$'s are not independent. 
\end{lemma}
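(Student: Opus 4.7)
The plan is to propagate single-shot measurement variance through the linear discrete Fourier transform and then invoke the Central Limit Theorem. First I would recall that for a Pauli-type observable with eigenvalues $\pm 1$, a single shot at parameter $\boldsymbol{\theta}_i$ returns $\pm 1$ with mean $x_i = \<O\>_{\tilde\rho(\boldsymbol{\theta}_i)}$, so the empirical average over $n_s$ shots, $\tilde x_i$, satisfies $\mathrm{Var}(\tilde{x}_i) = (1 - x_i^2)/n_s$ and is asymptotically Gaussian around $x_i$ by the CLT. Since circuit runs at distinct grid points are independent executions of the device, the family $\{\tilde x_i\}_{i}$ consists of independent, non-identically distributed, approximately Gaussian random variables.

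Next I would use that $\Re\mathcal{F}[\tilde x]_{\boldsymbol{k}} = d^{-m}\sum_i \tilde x_i \cos(\boldsymbol{k}\cdot\boldsymbol{\theta}_i)$ is a linear combination of these independent Gaussians and hence itself Gaussian, with mean equal to the noiseless Fourier coefficient $\Re\mathcal{F}[x]_{\boldsymbol{k}}$ and variance
\begin{equation}
\mathrm{Var}(\Re\mathcal{F}[\tilde x]_{\boldsymbol{k}}) = \frac{1}{n_s d^{2m}}\sum_i (1-x_i^2)\cos^2(\boldsymbol{k}\cdot\boldsymbol{\theta}_i).
\end{equation}
Using the identity $\cos^2\phi = (1+\cos 2\phi)/2$, this splits into a diagonal piece equal to $(d^m - ||x||_2^2)/(2 n_s d^{2m})$ plus a residual proportional to $\sum_i (1-x_i^2)\cos(2\boldsymbol{k}\cdot\boldsymbol{\theta}_i)$. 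The same calculation with $\cos\leftrightarrow\sin$ handles the imaginary part and gives the same variance, as expected from the stated formula.

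The main obstacle is treating this residual cleanly. By orthogonality of DFT characters on $\boldsymbol\Theta = (2\pi/d)\mathbb{Z}_d^{\times m}$, the residual equals $d^m$ times the $(2\boldsymbol{k})$-th DFT coefficient of $1-x^2$; if the discretely-sampled spectrum of $x^2$ does not reach $2\boldsymbol{k}$, the residual vanishes identically and the stated identity holds exactly. More generally the formula should be interpreted as a leading-order approximation obtained by replacing each per-point variance $(1-x_i^2)/n_s$ by its grid average $(d^m - ||x||_2^2)/(n_s d^m)$, which is accurate whenever $x$ has low-order Fourier content relative to the resolution $d$, i.e.\ in the regime where the sampling grid already oversolves $\S$ so that its harmonics at $2\boldsymbol{k}$ are suppressed. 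Finally, the remark on correlations follows because distinct $\mathcal{F}[\tilde x]_{\boldsymbol{k}}$ and $\mathcal{F}[\tilde x]_{\boldsymbol{k}'}$ are distinct linear functionals of the \emph{same} underlying random vector $(\tilde x_i)_i$: a direct covariance calculation gives $(n_s d^{2m})^{-1}\sum_i (1-x_i^2)e^{-i(\boldsymbol{k}-\boldsymbol{k}')\cdot\boldsymbol{\theta}_i}$, which is generically nonzero whenever the variance profile $1-x^2$ has nontrivial Fourier weight at $\boldsymbol{k}-\boldsymbol{k}'$.
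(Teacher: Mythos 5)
Your proof is correct, and it takes a genuinely different and more elementary route than the paper's. The paper writes the shot noise as a white Gaussian signal multiplied pointwise by the envelope $\sqrt{1-x^2}$, turns the DFT of that product into a convolution $\mathcal{F}[G_{1/\sqrt{n_s}}]*\mathcal{F}[\sqrt{1-x^2}]$, and then proves two sub-lemmas (DFT of white noise has standard deviation $\sigma/\sqrt{2d^m}$; convolving a Gaussian signal with a deterministic one rescales by $\|y\|_2$), finishing with Parseval. You instead compute the variance of the linear functional $\Re\,\mathcal{F}[\tilde x]_{\boldsymbol k}=d^{-m}\sum_i \tilde x_i\cos(\boldsymbol k\cdot\boldsymbol\theta_i)$ directly from the independent per-point variances $(1-x_i^2)/n_s$. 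The two computations agree on the leading term, but your route buys something the paper's does not: it exposes the residual $\propto\sum_i(1-x_i^2)\cos(2\boldsymbol k\cdot\boldsymbol\theta_i)$, i.e.\ the $(2\boldsymbol k)$-th DFT coefficient of $1-x^2$, which is generically nonzero unless the grid oversamples the spectrum of $x^2$. The paper's convolution lemma silently drops this term by treating the DFT coefficients of real white noise as independent across frequencies, which fails because of the conjugate symmetry $\mathcal{F}[G]_{-\boldsymbol l}=\mathcal{F}[G]_{\boldsymbol l}^{*}$ pairing the $\boldsymbol l$ and $-\boldsymbol l$ terms of the convolution. So your reading of the lemma's formula as exact only when that harmonic vanishes, and otherwise as a leading-order (grid-averaged variance) approximation, is the sharper statement; your covariance computation for the non-independence across $\boldsymbol k$ likewise matches and slightly refines the paper's remark.
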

\begin{proof}
First let us review the DFT convention we use in this paper. Given an $m$-dimensional signal $\{x_{\boldsymbol{i}}\}$, its DFT is defined as
\begin{equation}
        \mathcal{F}[x]_{\boldsymbol{k}} = \frac{1}{d^m} \sum_{\boldsymbol{j} = \boldsymbol{0}} x_{\boldsymbol{j}} e^{- i \boldsymbol{\theta}_{\boldsymbol{j}} \cdot \boldsymbol{\omega}_{\boldsymbol{k}}}
\end{equation}
with this convention Parseval's theorem is $||x||^2 = d^m ||\mathcal{F}[x]||^2$.
For simplicity we assume that both the samples and the frequencies are considered on regular lattices, and we set $\boldsymbol{\theta}_{\boldsymbol{i}} = 2\pi\boldsymbol{i}/d$ and $\boldsymbol{\omega}_{\boldsymbol{k}} = \boldsymbol{k}$.
In the quantum case, the signal is corrupted by sampling noise, which follows a binomial distribution. For a sufficient number of samples $n_s$, one can show that the signal is approximately Gaussian distributed around the expectation value $x_{\boldsymbol{i}}$, with a variance $\sigma_s(x_{\boldsymbol{i}}) = \frac{\sqrt{1 - x_{\boldsymbol{i}}^2}}{\sqrt{n_s}}$. Therefore, we can write the signal as a noiseless signal plus a Gaussian noise
\begin{equation}
        \tilde{x}_{\boldsymbol{i}} = x_{\boldsymbol{i}} + G_{\sigma_s(x_{\boldsymbol{i}})} . 
\end{equation}
Since the DFT is a linear operation,
\begin{equation}
        \mathcal{F}[\tilde x]_{\boldsymbol{k}}= \mathcal{F}[x]_{\boldsymbol{k}} + \mathcal{F}[G_{\sigma_s(x)}]_{\boldsymbol k} . 
\end{equation}
Let us focus on the noise signal. This is equivalent to a Gaussian signal with standard deviation $1/\sqrt{n_s}$ scaled by a function of $x$: $G_{\sigma_s(x)} = G_{1/\sqrt{n_s}} \sqrt{1-x^2}$. The Fourier transform of the noise can therefore be written as a convolution
\begin{equation}
        \mathcal{F}[G_{\sigma_s(x)}]_{\boldsymbol k} = \left[\mathcal{F}[G_{1/\sqrt{n_s}}] * \mathcal{F}[\sqrt{1-x^2}]\right]_{\boldsymbol k}
\end{equation}

Now the following lemma can be proven:
\begin{lemma}
The Fourier transform of a 1D normally distributed signal of size d with standard deviation $\sigma$ is normally distributed with standard deviation $\sigma/\sqrt{2d}$
\begin{equation}
        \mathcal{F}[G_\sigma]_{k} = \left[G_{\sigma/\sqrt{2d}}\right]_{k} + i \left[G_{\sigma/\sqrt{2d}}\right]_{k}
\end{equation}
For an $m$-dim normal distributed signal of over a grid of volume $d^m$, the same holds with a standard deviation of $\sigma/\sqrt{2d^m}$. 
\end{lemma}
\begin{proof}
Focus on the real part of $\mathcal{F}[G_\sigma]_{k}$ (the result for the imaginary part follows by symmetry)
\begin{equation}
        \Re\,\mathcal{F}[G_\sigma]_{k} = \frac{1}{d} \sum_{j=0}^d x_j \cos(2\pi\frac{kj}{d})
\end{equation}
First first notice that $\langle\Re\,\mathcal{F}[G_\sigma]_{\boldsymbol k}\rangle = 0$.
Now consider the variance
\begin{equation}
        \sigma^2(\Re\,\mathcal{F}[G_\sigma]_{k}) = \frac{1}{d^2} \sum_{jj'} \langle x_j x_{j'} \rangle \cos(2\pi\frac{kj}{d})\cos(2\pi\frac{kj'}{d})    
\end{equation}
Since each $x_m$ is independently distributed, $\langle x_j x_{j'} \rangle$ is only nonzero when $j=j'$
\begin{align}
        \sigma^2(\Re\,\mathcal{F}[G_\sigma]_{k}) &= \frac{1}{d^2} \sum_{j} \sigma^2 \cos^2(2\pi\frac{kj}{d}) \\
                                                &= \frac{ \sigma^2}{2d}.
\end{align}
All other moments are zero as the higher moments of a normal distribution are zero. Therefore, the distribution is normal with mean 0 and standard deviation $\frac{\sigma}{\sqrt{2d}}$. A similar argument leads to the result for higher dimensions.
\end{proof}
Finally, it remains to be shown that the convolution of a normally distributed signal and an arbitrary signal is still normally distributed. We can show
\begin{lemma}
Take an arbitrary deterministic signal $\{y_k\}$. Then $[G_{\sigma} * y]_k$ is a normally distributed signal with mean 0 and standard deviation $\sigma||y||_2$.
Note that this time, the distributions at different $k$'s are no longer necessarily independent.
\end{lemma}
\begin{proof}
By definition of the convolution,
\begin{equation}
[G_{\sigma} * y]_k = \sum_l [G_{\sigma}]_l y_{k-l}
\end{equation}
Now we have $\langle [G_{\sigma} * Y]_k \rangle = 0$ as the normal distribution has zero mean. For the variance we can use a similar argument as before
\begin{align}
        \sigma^2([G_{\sigma} * y]_k) &= \sum_{ll'} \langle[G_{\sigma}]_l [G_{\sigma}]_{l'}\rangle y_{k-l} y_{k-l'} \\
                                                                        &= \sum_{l} \sigma^2 y^2_{k-l} = \sigma^2 ||y||_2^2
\end{align}
As before, all the higher moments vanish, implying a normal distribution.\\
To see that the distributions are no longer independent in general, consider the two-point correlation
\begin{align}
        \langle [G_{\sigma} * y]_k [G_{\sigma} * y]_k' \rangle &= \sum_{ll'} \langle[G_{\sigma}]_l [G_{\sigma}]_{l'}\rangle y_{k-l} y_{k'-l'} \\
                                                                        &= \sigma^2 \sum_{l} y_{k-l} y_{k'-l}
\end{align}
Therefore the correlation between the distributions at $k, k'$ depends on the autocorrelation characteristics of the coefficients $y_k$. 
\end{proof}
Combining the results of these two lemmas, one deduces that each Fourier coefficient $\mathcal{F}[\tilde{x}]_{\boldsymbol{k}}$ is normally distributed around the noiseless mean $\mathcal{F}[x]_{\boldsymbol{k}}$ with standard deviation
\begin{align}
        \sigma(\Re\,\mathcal{F}[\tilde{x}]_{\boldsymbol{k}})    &= \frac{||\mathcal{F}[\sqrt{1 - x^2}]||_2}{\sqrt{2 n_s d^m}} \\ 
                                                                                                                        &= \frac{\sqrt{d^m - ||x||_2^2}}{d^m\sqrt{2 n_s}}
\end{align}
where in the second equality we used Parseval's identity. The imaginary part is identically and independently distributed because of symmetry.
\end{proof}

\section{Figures of merit}
\label{sec:figures-merit}

\begin{lemma}[Average fidelity]
The average fidelity of the output state over all parameters is proportional to the inner product of the vectors of Fourier coefficients:
\begin{equation}
        \langle F\rangle_{\boldsymbol\theta} = \frac{1}{2^n} \vec c\,^\dagger \vec{\tilde c}
\end{equation}
\end{lemma}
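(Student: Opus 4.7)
The plan is to rewrite the fidelity in a form that is linear in the expectation values of a resolution of identity in the operator space, and then invoke Parseval-type orthogonality in the parameter domain. Since $\rho(\boldsymbol\theta) = U(\boldsymbol\theta)\rho_0 U(\boldsymbol\theta)^{\dagger}$ is pure whenever $\rho_0$ is, I would start from the standard identity $F(\rho(\boldsymbol\theta),\tilde\rho(\boldsymbol\theta)) = \Tr\!\big(\rho(\boldsymbol\theta)\tilde\rho(\boldsymbol\theta)\big)$ and then expand both states in the normalised Pauli basis $\{P_i/\sqrt{2^n}\}_{i=1}^{4^n}$, using the orthogonality relation $\Tr(P_iP_j)=2^n\delta_{ij}$. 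This gives
\begin{equation}
  F(\rho(\boldsymbol\theta),\tilde\rho(\boldsymbol\theta))
  = \frac{1}{2^n}\sum_{i} \Tr\!\big(\rho(\boldsymbol\theta)P_i\big)\,
                          \Tr\!\big(\tilde\rho(\boldsymbol\theta)P_i\big),
\end{equation}
so the fidelity decomposes into a sum of products of real-valued expectation values, which is precisely the data encoded by the Fourier vectors $\vec c$ and $\vec{\tilde c}$.

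Next, I would substitute the Fourier representation of each expectation value, relying on Theorem~\ref{thm:1} for the noiseless state and on Theorem~\ref{thm:noisy-expectations} for the noisy state, writing
\begin{equation}
  \Tr\!\big(\rho(\boldsymbol\theta)P_i\big) = \sum_{\boldsymbol k} c_{\boldsymbol k}(P_i)\,e^{i\boldsymbol\omega_{\boldsymbol k}\cdot\boldsymbol\theta}, \qquad
  \Tr\!\big(\tilde\rho(\boldsymbol\theta)P_i\big) = \sum_{\boldsymbol k'} \tilde c_{\boldsymbol k'}(P_i)\,e^{i\boldsymbol\omega_{\boldsymbol k'}\cdot\boldsymbol\theta}.
\end{equation}
Because these functions are real, $c_{-\boldsymbol k}(P_i)=c_{\boldsymbol k}(P_i)^{*}$ (and likewise for $\tilde c$), so upon conjugating one factor the product becomes a double sum of exponentials $e^{i(\boldsymbol\omega_{\boldsymbol k'}-\boldsymbol\omega_{\boldsymbol k})\cdot\boldsymbol\theta}$.

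I would then average over $\boldsymbol\theta$ on the sampling torus used in the DFT convention of the paper. Orthogonality of the exponentials on this (sufficiently fine) grid collapses the double sum to the diagonal $\boldsymbol k = \boldsymbol k'$, producing $\sum_{\boldsymbol k} c_{\boldsymbol k}(P_i)^{*}\tilde c_{\boldsymbol k}(P_i)$ for each $i$. Summing over $i$ and using the definition $[\vec c]_{\boldsymbol k,i}=c_{\boldsymbol k}(P_i)$ identifies the result with $\vec c^{\,\dagger}\vec{\tilde c}/2^n$, which is the claim.

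The only subtle step is the orthogonality argument, because in the generalised setting of Sec.~\ref{sec:proc-mode-decomp} the frequency vectors $\boldsymbol\omega_{\boldsymbol k}$ need not be integers, and distinct paths may even share the same frequency. I would handle this by noting that the natural averaging here is over the full parameter torus $[0,2\pi]^m$ (or the Bohr mean for irrational frequencies), which still yields $\int e^{i(\boldsymbol\omega_{\boldsymbol k'}-\boldsymbol\omega_{\boldsymbol k})\cdot\boldsymbol\theta}\,d\boldsymbol\theta = \delta_{\boldsymbol\omega_{\boldsymbol k},\boldsymbol\omega_{\boldsymbol k'}}$; any frequency collisions are already absorbed into the definition of the Fourier coefficients, as remarked after Theorem~\ref{thm:1}, so the inner-product identity is unaffected. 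This is the main conceptual obstacle, while the remaining algebra is routine.
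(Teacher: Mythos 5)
Your proposal is correct and follows essentially the same route as the paper: both expand the fidelity of the pure target state against the noisy state as $\frac{1}{2^n}\sum_i \Tr(\rho P_i)\Tr(\tilde\rho P_i)$ over the normalised Pauli basis, substitute the Fourier series of each expectation value, and extract the zero-frequency (mean) component — the paper phrases this via the convolution theorem and $\mathcal{F}[F]_{\boldsymbol 0}$, you phrase it via orthogonality of the exponentials, which is the same computation. Your explicit handling of non-integer frequencies and path collisions is a welcome clarification of a point the paper only touches on in its closing remark about the Nyquist condition.
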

\begin{proof}
As a reminder, we use the vectorised notation $\vec c$ for the Fourier coefficients over all Paulis: $[\vec c]_{\boldsymbol k, i} = c_{\boldsymbol k}(P_i)$.
Consider the following expression for the fidelity between a pure state and an arbitrary state \cite{flammia2011direct}
\begin{equation}
        F(\sigma, \rho) = \sum_{i} \text{Tr}[\chi_i \sigma]\text{Tr}[\chi_i \rho]
\end{equation}
where the summation is over some informationally complete set of observables, which we can take to be the normalised Pauli observables: $\chi_i = P_i/\sqrt{2^n}$.
Now consider the fidelity between a state produced by a noisy parameterised quantum circuit $\tilde\rho(\boldsymbol\theta)$ and its exact counterpart $\rho(\boldsymbol\theta)$.

In the Fourier picture, the coefficients of the fidelity as a function of $\boldsymbol\theta$ are given by a convolution between the Fourier coefficients of the expectation values of the Pauli observables
\begin{equation}
        \mathcal{F}[F]_{\boldsymbol k} = \frac{1}{2^n} \sum_i \left[c(P_i) * \tilde{c}(P_i)\right]_{\boldsymbol k}
\end{equation}
where the discrete convolution is defined as:
\begin{equation}
        \left[c * \tilde{c}\right]_{\boldsymbol k} = \sum_{\boldsymbol x\in \boldsymbol \Lambda} c_{\boldsymbol x} \tilde{c}_{\boldsymbol k-\boldsymbol x}
\end{equation}
Therefore, the set of Fourier coefficients over all Pauli observables determines the fidelity as a function of the parameters.

If we define the average fidelity over all sampled parameters:
\begin{equation}
        \langle F\rangle_{\boldsymbol\theta} : = \frac{1}{d^m} \sum_{\boldsymbol{j}} F(\rho(\boldsymbol{\theta}_{\boldsymbol{j}}), \tilde{\rho}(\boldsymbol{\theta}_{\boldsymbol{j}}) )
\end{equation}
we see that it is equivalent to $\mathcal{F}[F]_{\boldsymbol 0}$. Hence:
\begin{equation}
        \langle F\rangle_{\boldsymbol\theta} = \frac{1}{2^n} \sum_i \sum_{\boldsymbol x\in \boldsymbol \Lambda} c_{\boldsymbol x}(P_i) \tilde{c}_{-\boldsymbol x}(P_i)
        = \frac{1}{2^n} \sum_i \sum_{\boldsymbol x\in \boldsymbol \Lambda} c_{\boldsymbol x}(P_i) \tilde{c}^{*}_{\boldsymbol x}(P_i) = \frac{1}{2^n} \vec c\,^\dagger \vec{\tilde c}
\end{equation}
which concludes the proof.
\end{proof}
As a remark, for the average fidelity defined as above to correspond to the average over all possible parameters $\int F(\rho(\boldsymbol{\theta}), \tilde{\rho}(\boldsymbol{\theta}) ) d^m\boldsymbol{\theta}$ a sufficient condition is for the sampling rate to be twice the maximum frequency of the signal, via Nyquist-Shannon sampling theorem.

We also have a corresponding lemma for the purity:
\begin{lemma}[Average purity]
The average purity of the output state over all parameters is proportional to the squared norm of the vector of Fourier coefficients:
\begin{equation}
        \langle P\rangle_{\boldsymbol\theta} = \frac{1}{2^n} |\vec c|^2
\end{equation}
\end{lemma}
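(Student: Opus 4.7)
The plan is to reduce the purity to a sum of squared expectation values of Pauli operators and then apply Parseval's theorem to each component, in direct parallel to the proof of the average fidelity lemma given immediately above.

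First I would use the orthogonality of the Pauli basis under the Hilbert--Schmidt inner product, $\mathrm{Tr}(P_i P_j) = 2^n \delta_{ij}$, to expand any single-parameter instance $\tilde\rho(\boldsymbol\theta) = \frac{1}{2^n}\sum_i \mathrm{Tr}(P_i\tilde\rho(\boldsymbol\theta))\,P_i$. Squaring and taking the trace collapses the cross terms and yields
\begin{equation}
\mathrm{Tr}\bigl(\tilde\rho(\boldsymbol\theta)^2\bigr) = \frac{1}{2^n}\sum_i \mathrm{Tr}\bigl(P_i\tilde\rho(\boldsymbol\theta)\bigr)^{2},
\end{equation}
so the purity is reduced to a sum of squared real expectation values, one per Pauli observable.

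Next, by Theorem~\ref{thm:noisy-expectations} each $\mathrm{Tr}(P_i\tilde\rho(\boldsymbol\theta))$ is a (generalised) trigonometric polynomial in $\boldsymbol\theta$ whose Fourier coefficients are, by definition, $\tilde c_{\boldsymbol k}(P_i) = [\vec{\tilde c}]_{\boldsymbol k, i}$. Using exactly the DFT convention recalled in the shot-noise proof, Parseval's identity states $\|x\|_2^2 = d^m\,\|\mathcal{F}[x]\|_2^2$, which applied to each $i$ gives
\begin{equation}
\frac{1}{d^m}\sum_{\boldsymbol j} \mathrm{Tr}\bigl(P_i\tilde\rho(\boldsymbol\theta_{\boldsymbol j})\bigr)^{2} = \sum_{\boldsymbol k}\bigl|\tilde c_{\boldsymbol k}(P_i)\bigr|^{2}.
\end{equation}

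Finally, I would sum over $i$, exchange the order of the $i$ and $\boldsymbol k$ sums, and identify $\sum_{\boldsymbol k, i}|\tilde c_{\boldsymbol k}(P_i)|^2 = |\vec{\tilde c}|^2$ from the flattening convention $[\vec{\tilde c}]_{\boldsymbol k, i} = \tilde c_{\boldsymbol k}(P_i)$. Combined with the $1/2^n$ prefactor from the Pauli decomposition, this gives precisely $\langle P\rangle_{\boldsymbol\theta} = \frac{1}{2^n}|\vec{\tilde c}|^2$. There is no real obstacle here; the only bookkeeping to be careful of is that the discrete parameter average equals the continuum average, for which (as noted after the fidelity lemma) a Nyquist-rate sampling grid suffices, and the purity being manifestly real means one does not need to separately track real and imaginary parts as in the shot-noise lemma.
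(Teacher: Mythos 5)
Your proof is correct and rests on the same ingredients as the paper's: Pauli orthogonality to write $\Tr(\tilde\rho^2)=\frac{1}{2^n}\sum_i\Tr(P_i\tilde\rho)^2$ and Parseval's identity applied Pauli-by-Pauli, which is exactly the zero-frequency convolution argument underlying the average-fidelity lemma. The paper simply obtains the result as a one-line corollary of that lemma via $P(\rho)=F(\rho,\rho)$ (setting $\tilde c=c$), whereas you redo the computation directly; the mathematics is the same.
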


\begin{proof}
Notice that in the definition of fidelity above, $P(\rho) = F(\rho, \rho)$. The result follows therefore by taking $\tilde{c} = c$ in the previous Lemma.
\end{proof}

\section{Filtering-based error mitigation}
\label{sec:AppendixFilter}

\subsection{Perfect filtering for classically simulable circuits}
\label{sec:perf-filt-class}

\begin{theorem}
        Take a noisy parameterised quantum state $\tilde\rho(\boldsymbol\theta) = \tilde\U(\boldsymbol\theta)(\rho_0)$ such that all the unparameterised unitaries in $\U(\boldsymbol\theta)$ are Clifford and $\rho_0$ is stabiliser. Then given a Pauli measurement operator $O$, there exists a noise threshold, under which the noiseless landscape $\langle O \rangle_{\rho(\boldsymbol\theta)}$ can be perfectly reconstructed by sampling from the noisy landscape $\langle O \rangle_{\tilde{\rho}(\boldsymbol\theta)}$. Above the threshold, such reconstruction is not possible.
\end{theorem}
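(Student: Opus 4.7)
The plan is to combine the discrete structure of the noiseless Fourier coefficients established in Appendix~\ref{app:trig} with a simple nearest-neighbour rounding operation on the measured spectrum. Because the noiseless landscape is a trigonometric polynomial with a finite, discretely valued spectrum, any noisy perturbation that is small enough leaves the rounded coefficients unchanged and hence allows perfect recovery.

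First I would recall the structural result of Appendix~\ref{app:trig}: when every $V_i$ is Clifford, $\rho_0$ is stabiliser and $O$ is a Pauli operator, the expectation value is an $m$-variate trigonometric polynomial with coefficients confined to the finite alphabet
\begin{equation}
\mathcal{A}_m := \bigl\{ 0,\;\pm 1/2^{\|\mathbf{k}\|_1},\;\pm i/2^{\|\mathbf{k}\|_1}:\mathbf{k}\in\{-1,0,1\}^m \bigr\},
\end{equation}
and that the frequency support $\mathcal{S}$ lies inside $\{-1,0,1\}^m$. The real and imaginary parts of each $c_{\mathbf{k}}$ therefore live on the dyadic grid $\{0,\pm 2^{-j}\}_{j=0}^m$, whose minimum spacing between distinct values is $2^{-m}$. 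Next I would apply Theorem~\ref{thm:noisy-expectations} to write the noisy Fourier coefficients $\tilde c_{\boldsymbol\omega}$, and define the denoising map $\mathcal{R}$ that (i) zeroes every off-support frequency $\boldsymbol\omega\in\tilde{\mathcal{S}}\setminus\mathcal{S}$ and (ii) rounds each on-support $\tilde c_{\mathbf{k}}$ independently in its real and imaginary parts to the nearest element of the dyadic grid above.

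The core of the argument is then the elementary observation that if
\begin{equation}
\max_{\mathbf{k}\in\mathcal{S}} \bigl( |\Re(\tilde c_{\mathbf{k}} - c_{\mathbf{k}})|,\;|\Im(\tilde c_{\mathbf{k}} - c_{\mathbf{k}})| \bigr) < 2^{-(m+1)},
\end{equation}
then $\mathcal{R}(\tilde c_{\mathbf{k}}) = c_{\mathbf{k}}$ for every $\mathbf{k}\in\mathcal{S}$, because each noisy coefficient sits strictly inside the Voronoi cell of its noiseless value on the dyadic grid. Expanding $\tilde c_{\mathbf{k}}-c_{\mathbf{k}}$ via the Clifford/Pauli-reset decomposition of Lemma~\ref{lemma:decompchannel} and the coefficient formula of Theorem~\ref{thm:noisy-expectations}, this perturbation is a fixed finite linear combination of the quasi-probabilities $q_j,q_j'$ describing the deviation of $\chi$ from the identity channel; bounding these quasi-probabilities by a quantity $\eta_c(\chi)$ (e.g.\ the diamond-norm distance of $\chi$ from $\mathcal{I}$, times a circuit-dependent constant $K_{U,O}$ that counts Clifford paths) yields a threshold of the form $\eta_c(\chi) < 1/(K_{U,O}\, 2^{m+1})$ below which the rounding is exact and the noiseless landscape is recovered from the rounded coefficients via inverse DFT.

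For the converse (``above the threshold, reconstruction is not possible''), I would exhibit, for any would-be deterministic reconstruction map, two distinct noiseless landscapes whose noisy images under admissible channels coincide once the perturbation exceeds $2^{-(m+1)}$ on any on-support coefficient. Concretely, given two target unitaries whose noiseless coefficient vectors differ only at a single $\mathbf{k}^\star\in\mathcal{S}$ by exactly $2^{-\|\mathbf{k}^\star\|_1}$ (a spacing that is always realisable within $\mathcal{A}_m$), any reconstruction that works for arbitrary noise channels violating the threshold at $\mathbf{k}^\star$ would have to distinguish identical measured signals, which is impossible. I expect the main obstacle to be the second part: making ``reconstruction is not possible'' precise. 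One must fix a model for what information the reconstructor is given (only the noisy landscape, not the channel) and verify that the distinguishing counterexample channels are completely positive and trace preserving; this is a finite-dimensional feasibility check but has to be handled carefully so that the lower bound matches the upper bound and is not merely a failure of the specific rounding rule.
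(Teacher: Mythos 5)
Your core idea --- exploit the discreteness of the noiseless coefficient alphabet and round each measured coefficient to its nearest allowed value, zeroing everything off-support --- is exactly the paper's strategy. The difference is the basis in which you round, and it matters. The paper works with the \emph{trigonometric} coefficients $c'_{\bf{k}}\in\{0,\pm 1\}$ of the expansion $\sum_{\bf{k}} c'_{\bf{k}}\phi_{\bf{k}}(\boldsymbol\theta)$ from Appendix~\ref{app:trig}; since each $\phi_{\bf{k}}$ is a distinct basis element, the alphabet is genuinely $\{0,\pm 1\}$ and the rounding threshold is a uniform $1/2$, independent of $m$. You instead round the complex-exponential Fourier coefficients on the dyadic grid, which costs you two things. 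First, your threshold $2^{-(m+1)}$ is exponentially worse than the paper's. Second, and more seriously, the premise that $\Re(c_{\bf{k}}),\Im(c_{\bf{k}})\in\{0,\pm 2^{-j}\}$ is not correct as stated: distinct trigonometric monomials alias onto the same exponential frequency, so the Fourier coefficient at a given $\boldsymbol\omega_{\bf{k}}$ is a \emph{sum} of terms of magnitude $2^{-\|\mathbf{k}\|_1}$. The paper's own UCC example, $f=\cos\theta_1\cos\theta_2-\sin\theta_1\sin\theta_2$, has Fourier coefficient $1/2$ at frequency $(1,1)$, which lies outside your alphabet $\{0,\pm 1/4,\pm i/4\}$ for $\|\mathbf{k}\|_1=2$. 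Your Voronoi-cell argument therefore needs repair: either round to the lattice of integer multiples of $2^{-\|\mathbf{k}\|_1}$ (threshold $2^{-(\|\mathbf{k}\|_1+1)}$ per coefficient), or switch to the trigonometric basis as the paper does, which is cleaner and gives the dimension-independent constant.

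On the converse, you are actually more careful than the paper, which disposes of it in one sentence by appealing to universal approximation: above the threshold the noisy landscape may coincide with that of a different circuit, so no reconstruction rule is guaranteed to return the right one. Your plan --- exhibiting two target circuits whose coefficient vectors differ by exactly one grid spacing at a single $\mathbf{k}^\star$ and whose noisy images coincide under admissible channels --- is the right way to make this precise, and you correctly flag the two genuine obstacles: fixing the information model for the reconstructor and verifying that the distinguishing channels are CPTP. Neither is addressed in the paper, so treat its converse as a heuristic claim rather than a proof you are expected to match.
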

\begin{proof}
        In Section \ref{sec:Example} we saw that the noiseless cost function, when expressed as a trigonometric polynomial, has coefficients $c'_{\boldsymbol k}\in\{0, \pm 1\}$ with support $\boldsymbol\Lambda$ as defined in Theorem \ref{thm:1}. We also saw in Section \ref{sec:IIIA} that noise channels introduce a perturbation in these coefficients, such that under noise they become $\tilde c'_{\boldsymbol k}$ with potentially different support $\tilde{\boldsymbol{\Lambda}}$. Therefore, given perfect access to the noisy cost function, as long as $|\tilde c'_{\boldsymbol k} - c'_{\boldsymbol k}| \le \frac{1}{2} \,\, \forall \boldsymbol k \in \boldsymbol\Lambda$, a thresholding algorithm can perfectly reconstruct the noiseless trigonometric coefficients by 
        \begin{equation}
                S(\tilde c'_{\boldsymbol k}) = \begin{cases}
                        \text{round}(\tilde c'_{\boldsymbol k}) \; &\text{if} \; \boldsymbol k \in \boldsymbol \Lambda ; \\
                        0 \; &\text{otherwise} .
                \end{cases}
        \end{equation}
        Conversely, if noise is larger than the threshold, then it is not guaranteed that the outcome will be the correct noiseless landscape. This is a consequence of the fact that parameterised quantum circuits are universal approximators, as proven in \cite{schuld2021effect}. Therefore a landscape produced by a highly noisy state could be corrected to one produced by another, incorrect, circuit.
\end{proof}
In Figure \ref{fig:reconstruction} we show a practical example of reconstructing such a landscape. The landscape is sampled at different numbers of shots and therefore different levels of noise, showing the existence of a well-defined noise threshold for perfect reconstruction.

\begin{figure}[ht!]
        \includegraphics[width=0.6\textwidth,center]{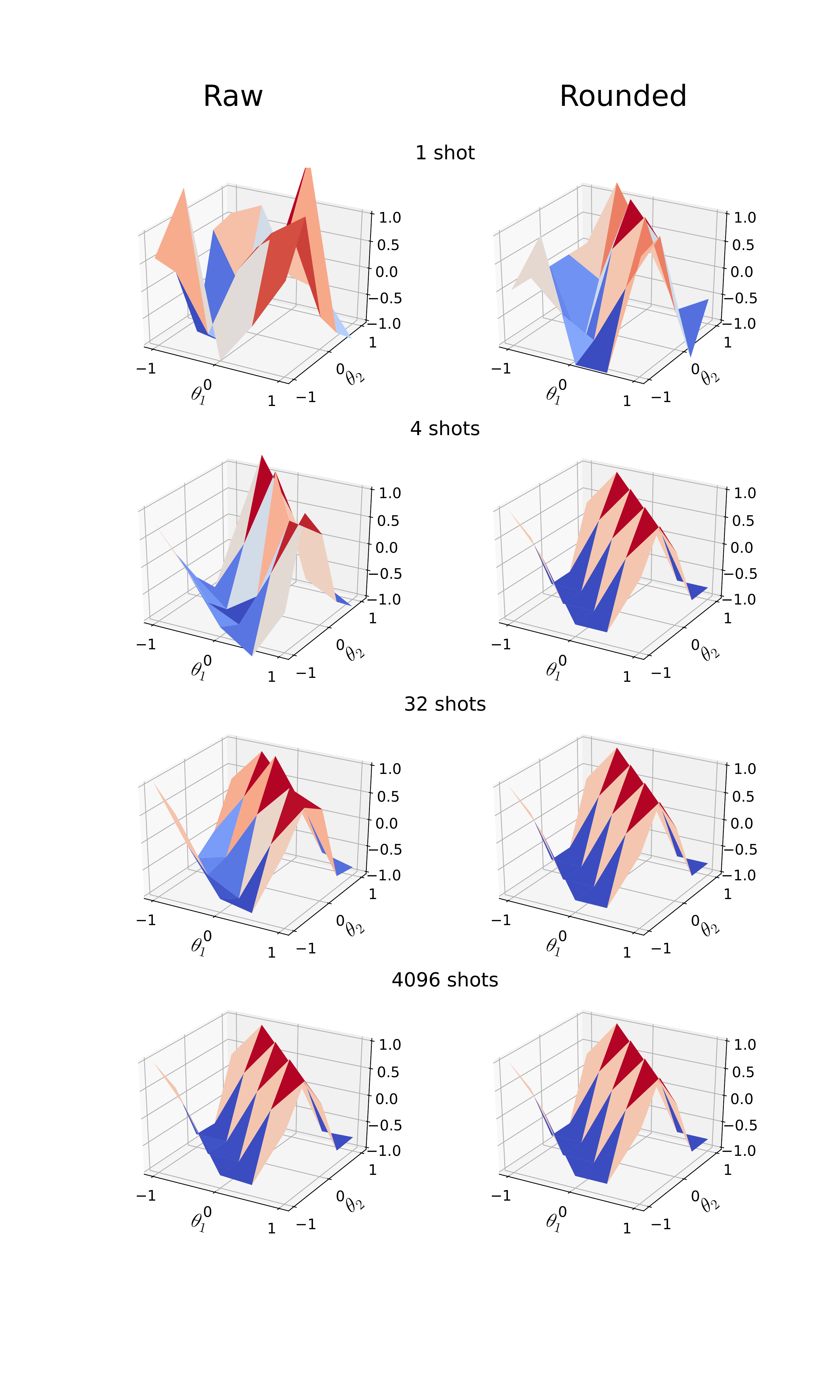}
        \caption{Reconstructing the landscape of a classically simulable circuit as run on Quantinuum H1-1 system, at different numbers of shots. Left is raw landscape, right is after rounding of the trigonometric coefficients. Colour indicates the height. Perfect reconstruction is obtained for all experiments except the 1-shot one, the exact (noiseless) landscape corresponds to the last three diagrams in the right column. }
        \label{fig:reconstruction}
\end{figure}

\subsection{Filters for noise-induced frequency modes}
\label{sec:filt-noise-induc}

In line with classical signal processing, some of the simplest filters that can be applied are band-pass filters, that allow particular frequencies while eliminating the others. These can be designed based on knowledge of the theoretical frequency spectrum, which in the most general case is given by Theorem~\ref{thm:1}. 
In general, given an observed function $\tilde f$ of quantum expectation values on a support spanned by a basis $\{\psi_{\boldsymbol k}\}$ (which may be the Fourier basis or any other basis of choice) indexed by a lattice $\tilde{\boldsymbol{\Lambda}}$, as well as a known exact support indexed by a lattice $\boldsymbol \Lambda \subset \tilde{\boldsymbol{\Lambda}}$, the denoised function is given by
\begin{equation}
        f^\# = \sum_{\boldsymbol k \in \boldsymbol \Lambda} \langle f, \psi_{\boldsymbol k} \rangle \psi_{\boldsymbol k} 
\end{equation}
In other words, such a filter cuts off all frequencies beyond those expected to be output by the circuit. Such a method would be especially helpful to mitigate correlated noise that can give rise to high-frequency modes. However, it would most likely not be helpful for other non-correlated noise channels, as well as being more resource-intensive by requiring the resolution of a larger set of frequencies.\\
For specific circuits, it is possible to deduce additional characteristics that the noiseless spectrum should have, allowing for an improved noise mitigation.
For example, take the case where the measurement operator is local, i.e. acts on only a subset of qubits. Then it is clear that its expectation value can only depend on the parameters in its backwards light-cone. 
For a more involved example, assume the input state $\rho_0$ is known, and the parameters of interest are those of the first $r$ unitaries, which are Pauli gadgets $\U_i(\theta_i) = e^{-iP_i\theta_i/2}$, $i = 1, 2, \cdots, r$. Further, suppose that all $P_i$'s commute, such that the ordering of the Pauli gadgets does not matter. All the subsequent gates, as well as the measurement operator, are arbitrary. Then, if $[\rho_0, P_i] \neq 0$ for all $i \le r$, it follows that $f$ must contain terms where either all parameters $\theta_1, \cdots, \theta_r$ appear, or none of them do. This is because 
 a parameterised Pauli gadget that does not commute with its input state gives an output state that is a non-constant function of that parameter. A similar reasoning can be applied on the measurement operator and by extension to the expectation value. These conditions apply in many situations of practical interests, thereby affording a more refined error mitigation than general circuits.

\subsection{Thresholding}
\label{sec:IIIB}
The previous filtering methods relied on specific knowledge of the circuits. It is desirable to devise mitigation methods that would apply in general, and therefore would need to be agnostic to the circuit structure. One such method is based on isolating the most significant coefficients. In classical signal processing, this is known as thresholding, and was first studied in Ref.~\cite{donoho1994ideal}.
As a basic review of the method, thresholding applies a non-linear transform to the coefficients of some basis
\begin{equation}
        f^\# = \sum_{\boldsymbol k \in \tilde{\boldsymbol \Lambda} } S(\langle f, \psi_{\boldsymbol k} \rangle) \psi_{\boldsymbol k}
\end{equation}
There exist two common choices for the transform $S$, known as `hard' and `soft' thresholding. Hard thresholding has
$S^H_T(x) = \begin{cases}
x &\text{if } |x| > T \\
0 &\text{if } |x| \le T
\end{cases}$, while soft thresholding has
$S^S_T(x) = \max (1-T/|x|, 0)x$~\cite{mallat2009wavelet}.\\
In~\cite{donoho1994ideal} it was proven that, given an $N$-sparse signal perturbed by additive Gaussian noise with mean 0 and standard deviation $\sigma$, the optimal threshold is $T = \sigma\sqrt{2\log N}$, however in practice it is observed that $T = O(\sigma)$ is sufficient for both hard and soft thresholding~\cite{peyre2021mathematical}.

Thresholding methods rely on the basic assumption that the noiseless signal is sparse, or approximately sparse, in the basis of interest. As this in general is not the case for quantum algorithms, we only claim that thresholding be useful on a heuristic level. Another assumption that might not hold is that of additive Gaussian white noise. In the Fourier basis, this is only true for shot noise, however quantum noise will not generally obey these requirements. At the same time, from an experimental point of view, it does make at least practical sense to assume Gaussian noise to reflect the lack of knowledge of these noise processes. To this end, estimates of $\sigma$ could be derived from a combination of the predicted sources of error, including known decoherence and measurement error rates.

\end{document}